\newcommand{\aggr}{\beta}
\newcommand{\ivqr}{\calM_{\rm IVQR}}
\newcommand{\ivscm}{\calM_{\rm IVSCM}}
\DeclareMathOperator{\logistic}{Logistic}
\renewcommand{\hat}{\widehat}
\newcommand{\dd}{\,\mathrm{d}}
\newcommand{\1}{\mathds{1}}
\theoremstyle{definition}
\newtheorem{proposition}{Proposition}
\newtheorem{assumption}{Assumption}
\newtheorem*{proposition*}{Proposition}
\newtheorem{definition}{Definition}
\newtheorem{example}{Example}
\newtheorem{theorem}{Theorem}
\newtheorem{lemma}{Lemma}
\newtheorem{setting}{Setting}
\newtheorem{scenario}{Scenario}
\DeclareMathOperator{\dte}{DCE}
\DeclareMathOperator{\qte}{QCE}
\DeclareMathOperator{\dok}{DOK}
\newcommand{\norm}[1]{\left\lVert#1\right\rVert}
\newcommand{\onorm}[1]{\left\lvert#1\right\rvert}
\newcommand{\rY}{Y}
\newcommand{\ry}{y}
\newcommand{\basisy}{a}
\newcommand{\parm}{\vartheta}
\newcommand{\eparm}{\vartheta}
\newcommand{\ie}{{i.e.,}~}
\newcommand{\Prob}{\mathbb{P}}
\newcommand{\Ex}{\mathbb{E}}
\newcommand{\RR}{\mathbb{R}}
\newcommand{\given}{\mid}
 \DeclareMathOperator{\expit}{expit}
 \DeclareMathOperator*{\argmin}{{arg\,min}}
 \DeclareMathOperator{\ND}{Normal}
 \DeclareMathOperator{\UD}{Uniform}
 \DeclareMathOperator{\BD}{Bernoulli}
 \def \calD {\{0, 1\}}
 \def \calF {\mathcal F}
 \def \calG {\mathcal G}
 \def \calH {\mathcal H}
 \def \calM {\mathcal M}
 \def \calP {\mathcal P}
 \def \calQ {\mathcal Q}
 \def \calX {\mathcal X}
 \def \calY {\mathcal Y}
 \def \calZ {\mathcal Z}
\newcommand{\pkg}[1]{\texttt{#1}}
\newcommand{\proglang}[1]{\textsf{#1}}
\newcommand{\code}[1]{\texttt{#1}}
\DeclareMathOperator{\pdo}{do}
\tikzstyle{line} = [draw, -latex']
\tikzstyle{Arrow} = [
\newcommand{\indep}{\perp\nolinebreak\hspace{-6pt}\perp} 
\newcommand{\bcd}{\boldsymbol\cdot}
\title{\bf Instrumental Variable Estimation of
Distributional Causal Effects}
\author{Lucas Kook\textsuperscript{1}\thanks{%
E-mail:~\code{lucasheinrich.kook@gmail.com}.
LK carried out part of this work at
the Department of Mathematical Sciences,
University of Copenhagen, Denmark.
} ~and Niklas Pfister\textsuperscript{2}}
\date{%
\footnotesize
\textsuperscript{1}Institute for Statistics and Mathematics, Vienna University
of Economics and Business, Austria
\\\footnotesize
\textsuperscript{2}Department of Mathematical Sciences,
University of Copenhagen, Denmark
}
\begin{document}

\maketitle

\begin{abstract}%
Estimating the causal effect of a treatment on the entire response distribution
is an important yet challenging task. For instance, one might be interested in
how a pension plan affects not only the average savings among all individuals
but also how it affects the entire savings distribution. While sufficiently
large randomized studies can be used to estimate such distributional causal
effects, they are often either not feasible in practice or involve
non-compliance. A well-established class of methods for estimating average
causal effects from either observational studies with unmeasured confounding or
randomized studies with non-compliance are instrumental variable (IV) methods.
In this work, we develop an IV-based approach for identifying and estimating
distributional causal effects. We introduce a distributional IV model with
corresponding assumptions, which leads to a novel identification result for the
interventional cumulative distribution function (CDF) under a binary treatment.
We then use this identification to construct a nonparametric estimator, called
DIVE, for estimating the interventional CDFs under both treatments. We
empirically assess the performance of DIVE in a simulation experiment and
illustrate the usefulness of distributional causal effects on two real-data
applications.
\end{abstract}

\section{Introduction}

Unlike average causal effects, which only capture how a given treatment affects
the response on average, distributional causal effects capture the effect on the
entire response distribution, including the mean, variance and higher moments of
the response \citep{kneib2023rage}. Estimands beyond the conditional mean are of
interest in many applications, such as electricity price forecasting
\citep{marcjasz2023energy}, poverty research \citep{hohberg2021poverty},
epidemiology \citep{lohse2017continuous}, and the analysis of survival times
\citep{cox1972regression}. Estimating distributional causal effects is feasible
when data from randomized experiments are available, for instance, using
quantile \citep{koenker2017handbook} or distributional regression
\citep{kneib2013beyond}. However, in the presence of unobserved confounding, the
task of estimating distributional causal effects is considerably more
challenging and the available methods are limited.

Observational data, despite the presence of hidden confounding, may, under
appropriate assumptions, be used to draw causal conclusions
\citep{pearl2009causality,wooldridge2015introductory}. Often, the estimand is
the average causal effect, \ie the difference in the average response between
treated and untreated experimental units. One way to estimate average causal
effects from observational data is via instrumental variables (IVs), which
induce heterogeneity in the treatment assignment that is independent of both
the hidden confounders and only affects the response via the treatment
\citep{haavelmo1943statistical,angrist1996identification,imbens1997estimating}.
IV approaches for estimating average causal effects have been proposed for
non-linear additive noise models \citep{newey1990nonliniv}, censored and
truncated outcomes \citep{newey2001censored}, the nonparametric additive noise
model \citep{newey2003ivnonparm,newey2013nonparm}, semi-parametric cases
\citep{hansen2010ivflex}, and high-dimensional data and sparse causal effects
\citep{pfister2022identifiability}. For the case of non-linear additive noise
models, several kernel- \citep{singh2019kernel,muandet2020dual,zhang2020maximum}
and neural network-based \citep{bennett2019deep} estimators have been developed.
Most existing IV-based methods make use of conditional moment restrictions;
however, recent work \citep[e.g.,][]{Poirier2017, Dunker2021,
saengkyongam2022exploiting},
has proposed to use a stronger
identifiablity condition based on independence between the instruments and the
residuals. Our work expands on this line of work and also proposes an
independence-based identification strategy, which is sufficiently strong to
ensure identifiability even of distributional causal effects. The work closest
to ours is the instrumental variable quantile regression (IVQR) framework due to
\citet{chern2005ivqte}, which, similarly to us, considers a binary treatment and
absolutely continuous response. For continuous treatments,
\citet{imbens2009identification} propose a control function approach and, for
arbitrary treatments but binary instruments, \citet{chernozhukov2024estimating}
propose an approach based on distributional regression via copulas.

In this work, we present a novel identification result for distributional causal
effects in settings with an absolutely continuous response $Y$ under a binary
treatment $D$. We introduce a IV model based on structural causal models
\citep[SCMs,][]{pearl2009causality}, which we term IV-SCM and provide
assumptions under which the distributional causal effect is identifiable. The
resulting identification result (Theorem~\ref{thm:id}) is based on the fact that
the interventional probability integral transform (iPIT) residuals $F^*_D(Y)$
(i.e., the interventional CDFs under the observed treatment and evaluated at the
observed response) are uniformly distributed and independent of the instrument.
Based on this result, we propose the distributional instrumental variable
estimator (DIVE) which trades off uniformity of the iPIT residuals \citep[via
the Cram\'er--von Mises criterion][]{cramer1928} with its independence from the
instrument \citep[via the Hilbert-Schmidt independence
criterion,][]{gretton2007hsic}. DIVE is proven to be consistent, which is
empirically validated in a simulation study. In two 
real-data
applications, DIVE
leads to similar conclusions as IVQR (based on linear quantile regression) but
is more stable and avoids quantile crossing. Finally, we also theoretically
compare the proposed IV-SCM with the IVQR model. Importantly, if the quantile
causal effect is identifiable in an IVQR, then there exists a corresponding
IV-SCM for which our identification result holds
(Proposition~\ref{prop:equiv_final}).

The remainder of the paper is structured as follows. Section~\ref{sec:dce}
introduces IV-SCMs and defines distributional causal effects.
Section~\ref{sec:id} then turns to the assumptions required for identifying
distributional causal effects and contains our identification result.
Section~\ref{sec:ivqr} gives a theoretical comparison between IV-SCM and the
IVQR model. Section~\ref{sec:DIVE} formally introduces DIVE and provides the
explicit algorithm for hot to implement it Section~\ref{sec:emp} contains the
simulation study and two real-data applications. Finally, in
Section~\ref{sec:discussion}, we provide a discussion and outlook.

\section{Distributional causal effects}\label{sec:dce}

We consider distributional causal effects as contrasts between the
interventional CDF of treated and untreated populations (assuming a binary
treatment). Throughout this paper, we consider an absolutely continuous response
$Y \in \calY \subseteq \RR$ together with a binary treatment indicator $D \in
\calD$ in the presence of hidden confounders $H \in \calH$. Further, we assume
access to instrumental variables $Z \in \calZ$. We use SCMs, to define the
interventional CDF, but our results can also be phrased using other causal
models, such as potential outcome models (see Section~\ref{sec:ivqr}).

\begin{setting}[IV-SCM]\label{setting}
Let $\calZ\subseteq \RR^{d_Z}$, $\calH \subseteq \RR^{d_H}$, $\calY\subseteq\RR$
be measurable sets and let $(Z, H, D, Y) \in \calZ \times \calH \times \calD
\times\calY$ be random variables satisfying the following SCM:
\begin{center}
\begin{minipage}[b]{0.49\textwidth}
\begin{align*}
\begin{cases}
    Z \coloneqq N_Z\\
    H \coloneqq N_H\\
    D \coloneqq f(Z, H, N_D)\\
    Y \coloneqq g_D(H, N_Y)
\end{cases}
\end{align*}
\end{minipage}
\begin{minipage}[b]{0.49\textwidth}
\begin{tikzpicture}[node distance=1.0cm, <-> /.tip = Latex, -> /.tip = Latex,
    thick, roundnode/.style={circle, draw, inner sep=1pt,minimum size=7mm}, 
    squarenode/.style={rectangle, draw, inner sep=1pt, minimum size=7mm}]
\node [roundnode] (Z) {$Z$};
\node [right=of Z, roundnode] (D) {$D$};
\node [above right=of D, roundnode, fill=gray!20] (H) {$H$};
\node [below right=of H, roundnode] (Y) {$Y$};
\draw[->] (Z) -- (D);
\draw[->] (D) -- (Y);
\draw[->] (H) -- (Y);
\draw[->] (H) -- (D);
\end{tikzpicture}
\end{minipage}\\[10pt]
\end{center}
where $N_Z, N_H, N_D, N_Y$ are noise variables satisfying $(N_Z,N_D)\indep
N_Y\given N_H$ and $N_Z \indep N_H$,\footnote{For simplicity one can also assume
that $N_Z, N_H, N_D, N_Y$ are jointly independent. This is, however, slightly
more restrictive than necessary here.} and, for all $d \in \calD$, $g_d(N_H,
N_Y)$ is absolutely continuous. We denote by
$\mathcal{D}_n=\{(Z_1,D_1,Y_1),\ldots,(Z_n,D_n,Y_n)\}$ a sample of $n$
independent copies of $(Z, D, Y)$ from the above SCM.
\end{setting}
We only use the IV-SCMs to model the observed distribution and the two
do-interventions $\pdo(D=0)$ and $\pdo(D=1)$. Importantly, we do not use any of
the other induced interventional and counterfactual distributions and in fact do
not even require them to be correctly specified.

\begin{definition}[Interventional quantities]
Under Setting~\ref{setting}, for all $d\in\{0,1\}$, we define the
\emph{interventional cumulative distribution function (interventional CDF) under
treatment $d$} $F^*_d : \RR \to [0, 1]$, for all $y\in\RR$ by
\begin{equation*}
   F^*_d(y)\coloneqq\Ex[\1(\rY \leq y) \given \pdo(D = d)].
\end{equation*}
We refer to the random variable $F_D^*(Y)$ as the (population)
\emph{interventional probability integral transform (iPIT) residual}
\citep[after the observational PIT, see][]{rosenblatt1952pit}. We further define
for all $d\in\calD$ the \emph{interventional quantile function under treatment
$d$,} $Q_d^*:(0,1)\to\RR$ for all $\tau\in(0,1)$ by $Q_d^*(\tau) \coloneqq
\inf\{\ry\in\calY \mid F_d^*(y) \geq \tau\}$.
\end{definition}

\subsection{Distributional causal effects on different scales}

Average causal effects can only describe effects in the population average and
are unable to capture whether specific parts of the distribution of $Y$ are
affected differently. To capture such effects, one can use distributional causal
effects, which can be expressed on various different scales, with the difference
in CDFs being an intuitive example that can be interpreted as a risk difference
varying with the response (see Figure~\ref{fig:dte}\textsf{A} and the first
panel of~\textsf{B}). However, other intepretational scales, often implied by
parametric modelling assumptions (such as differences on the log-odds scale in
logistic regression), may be more familiar to practitioners. We define several
transformed distributional causal effect that capture, on different scales, how
the response distributions differ across the interventions $\pdo(D=0)$ and
$\pdo(D=1)$.

\begin{definition}[Transformed distributional causal effects]\label{def:tce}
Assume Setting~\ref{setting}. We define the following transformed distributional
causal effects:
\begin{itemize}
    \item The \emph{distributional causal effect}
        $\dte:\mathcal{Y}\rightarrow\mathbb{R}$ for all $y\in\mathcal{Y}$ by
        \begin{align}
            \dte(y)\coloneqq F^*_1(y) - F^*_0(y).
        \end{align}
    \item The \emph{quantile causal effect} $\qte:(0,1)\rightarrow\calY$ for
        all $\tau\in(0,1)$ by
        \begin{align}
            \qte(\tau)\coloneqq Q_1^*(\tau) - Q_0^*(\tau).
        \end{align}
    \item The \emph{Doksum causal effect}
        $\dok:\mathcal{Y}\rightarrow\mathbb{R}$ for all $y\in\mathcal{Y}$ by
        \begin{align}
            \dok(y)\coloneqq Q_0^*(F_1^*(y)) - y.
        \end{align}
    \item The \emph{logit distributional causal effect}
        $\operatorname{LogitCE}:\mathcal{Y}\rightarrow\mathbb{R}$ for all
        $y\in\mathcal{Y}$ by
        \begin{align}
            \operatorname{LogitCE}(y)\coloneqq
            \log\left(\frac{F_1^*(y)}{1-F_1^*(y)}\right) -
            \log\left(\frac{F_0^*(y)}{1-F_0^*(y)}\right).
        \end{align}
\end{itemize}
\end{definition}
The Doksum causal effect \citet{doksum1974} corresponds to taking vertical
differences between the interventional CDF for treated and untreated
observations as a function of the response. Assuming rank invariance (see
Section~\ref{sec:asmp}), DOK can be interpreted as the difference in potential
outcomes under treatment and control. Depending on the application at hand other
versions of transformed distributional effects could also be considered.

Our proposed estimator introduced in Section~\ref{sec:DIVE} is constructed to
estimate the interventional CDF under treatment and control and can thus be used
to construct plug-in estimators for any of the (transformed) distributional
causal effects. Figure~\ref{fig:dte} illustrates the different interpretational
scales for distributional causal effects on simulated data. In this example, the
typically made structural assumption of proportional odds \citep[see, e.g.,\
][]{lohse2017continuous}, corresponding to a constant logit distributional
causal effect, is not tenable and a nonparametric approach is needed to capture
the correct distributional causal effect.

\begin{figure}[t!]
\centering
\includegraphics[width=0.99\textwidth]{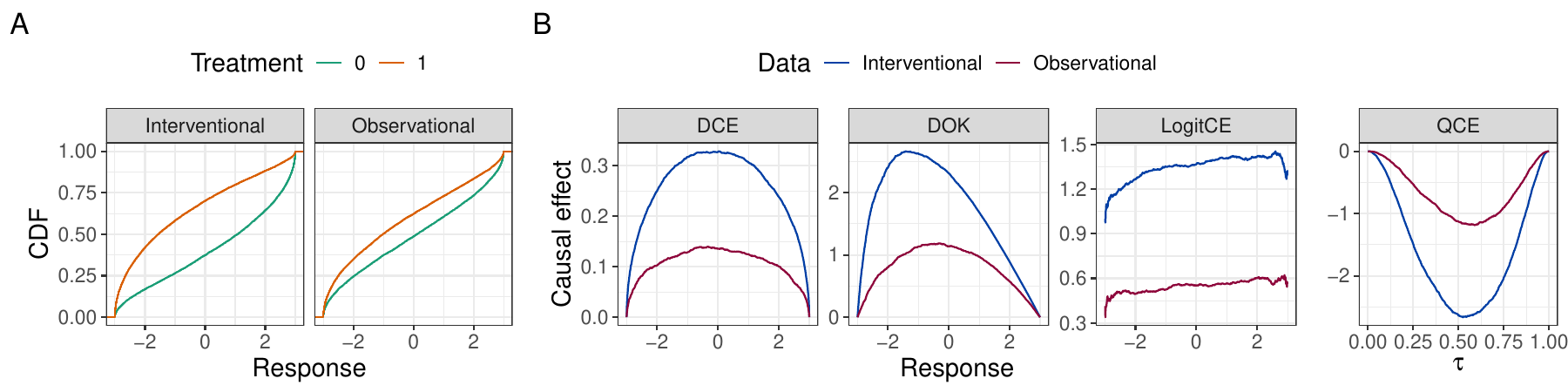}
\caption{%
Illustration of transformations of the distributional causal effect with
simulated (observational and interventional) data. \textsf{A}: Interventional
and observational CDFs of the response for the two treatment groups. \textsf{B}:
Distributional causal effects on different scales and computed both based on the
estimated interventional and observational CDFs, leading to correct and biased
effects, respectively. $\dte$ captures the difference in CDFs (vertical), while
$\dok$ is a difference on the original scale of the response (horizontal).
$\operatorname{LogitCE}$ shows that the treatment effect is non-constant, which
violates the proportional odds assumption in a simple logistic regression model.
}
\label{fig:dte}
\end{figure}

\section{Identifying distributional causal effects}\label{sec:dive}

In order to identify distributional causal effects, we require additional
assumptions on the IV-SCM. We introduce several assumptions in
Section~\ref{sec:asmp} and show that they are sufficient for identifiablity in
Theorem~\ref{thm:id}. Finally, in Section~\ref{sec:ivqr} we compare these
assumptions with the IVQR model and its assumptions and show that the
assumptions provided here are slightly weaker.

\subsection{Identification assumptions}\label{sec:asmp}

We begin with an assumption that constrains how the confounding can influence
the treatment assignment.

\begin{assumption}[Propensity-conditioned exclusion restriction]\label{asmp:pcu}
In Setting~\ref{setting}, let $\pi(H) \coloneqq \Ex[D \given H]$ denote the
treatment propensity and assume
\begin{align}
    Z \indep Y \given D, \pi(H).
\end{align}
\end{assumption}
This assumption is a slightly stronger version of the exclusion restriction
assumption, $Z \indep Y \given D, H$, used in classical IV
\citep{angrist1996identification}. Intuitively, Assumption~\ref{asmp:pcu} posits
that it is sufficient to control for the treatment propensity instead of the
hidden confounders themselves.
Assuming independent noise, Assumption~\ref{asmp:pcu} can also be interpreted in
terms adding of a node $\pi(H)$ that $d$-separates $H$ from $D$ (or $H$ from
$Y$) in the graph from Setting~\ref{setting}. 
Assumption~\ref{asmp:pcu} can be defended by arguing that the exclusion
restriction is fulfilled and the confounding mechanism depends only on the
treatment propensity. 

Next, we consider assumptions on the ranks of individual observations among the
treated and untreated group.

\begin{assumption}[Rank assumptions]\label{asmp:crs}
In Setting~\ref{setting}, let $\pi(H) \coloneqq \Ex[D \given H]$ denote the
treatment propensity conditional on $H$. We then assume one of the following
conditions.
\begin{enumerate}
    \item[] {(\crtcrossreflabel{RI}[asmp:crs:rinv])} \emph{Rank
    invariance}:
        \begin{align}\label{eq:rinv}
            F^*_D(Y)\indep D \given H, N_Y.
        \end{align}
    \item[] {(\crtcrossreflabel{RS}[asmp:crs:rsim])} \emph{Rank
    similarity}:
        \begin{align}\label{eq:rsim}
            F^*_D(Y)\indep D \given H.
        \end{align}
    \item[] {(\crtcrossreflabel{CRS}[asmp:crs:crs])} \emph{Conditional rank
    similarity}:
        \begin{align}\label{eq:crs}
            F^*_D(Y)\indep D \given \pi(H).
        \end{align}
\end{enumerate}
\end{assumption}
Rank invariance is equivalent to $F^*_1(g_1(H, N_Y)) = F^*_0(g_0(H, N_Y))$ being
true almost surely. Rank similarity and conditional rank similarity assume that,
after taking into account the hidden confounding or treatment propensity,
respectively, the distribution of the rank of an individual 
is the same regardless of whether this individual is treated or untreated. The
following result shows that conditional rank similarity is strictly weaker than
rank similarity and rank invariance. A proof is given in
Appendix~\ref{proof:prop:rankasmp}.

\begin{proposition}[Rank assumptions]\label{prop:rankasmp}
Under Setting~\ref{setting}, Assumption~\ref{asmp:crs}
\ref{asmp:crs:rinv} implies \ref{asmp:crs:rsim} and
\ref{asmp:crs:rsim} implies \ref{asmp:crs:crs}. None of the reverse implications
hold. 
\end{proposition}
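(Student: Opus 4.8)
The plan is to establish the two forward implications directly from the conditional-independence structure of Setting~\ref{setting}, and to refute the two reverse implications by exhibiting explicit IV-SCMs. Throughout I write $R \coloneqq F^*_D(Y)$ for the iPIT residual and use that, in Setting~\ref{setting}, $Z = N_Z$ and $H = N_H$, so the noise restriction $(N_Z, N_D) \indep N_Y \mid N_H$ reads $(Z, N_D) \indep N_Y \mid H$, while $D = f(Z, H, N_D)$ is $\sigma(Z, H, N_D)$-measurable. In every counterexample I take all four noise variables mutually independent, so the conditions of Setting~\ref{setting} hold trivially; none of the arguments will require Assumption~\ref{asmp:pcu}.

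For (RI) $\Rightarrow$ (RS) I would first invoke the equivalence stated just above the proposition, namely that (RI) holds iff $F^*_1(g_1(H, N_Y)) = F^*_0(g_0(H, N_Y))$ almost surely; hence $R = r(H, N_Y)$ for a measurable $r$ not depending on $D$. Conditionally on $H$, the variable $r(H, N_Y)$ is $\sigma(N_Y)$-measurable whereas $D$ is $\sigma(Z, N_D)$-measurable, and these are independent given $H$ by $(Z, N_D) \indep N_Y \mid H$; this gives $R \indep D \mid H$, i.e.\ (RS). For (RS) $\Rightarrow$ (CRS) I would exploit that $D$ is binary: from (RS), $\Prob(D = 1 \mid R, H) = \Prob(D = 1 \mid H) = \pi(H)$, and since $\sigma(R, \pi(H)) \subseteq \sigma(R, H)$ the tower property yields $\Prob(D = 1 \mid R, \pi(H)) = \Ex[\pi(H) \mid R, \pi(H)] = \pi(H) = \Prob(D = 1 \mid \pi(H))$. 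As the conditional treatment probability no longer depends on $R$, we obtain $R \indep D \mid \pi(H)$, i.e.\ (CRS).

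To see that (RS) does not imply (RI), I would take $H$ degenerate, $N_Y = (U_0, U_1)$ with $U_0, U_1$ independent $\UD(0,1)$ variables independent of $(N_Z, N_D)$, $D = \1(N_D > 1/2)$, and $g_d(H, N_Y) = U_d$, so that $Y = U_D$. Then each $F^*_d = \id$ on $(0,1)$ and $R = U_D$. Rank invariance fails because, given $N_Y = (u_0, u_1)$ with $u_0 \neq u_1$, $R$ is a non-trivial function of $D$; yet (RS) holds, since $U_D \mid D = d$ is $\UD(0,1)$ for both $d$ and $U_D \indep D$ marginally (with $H$ constant).

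The delicate case, and the main obstacle, is (CRS) $\not\Rightarrow$ (RS), because one must engineer the conditional laws of $R$ given $(D, H)$ while respecting that each $F^*_d$ is genuinely the CDF of $g_d(H, N_Y)$. I would resolve this with a symmetric, propensity-constant design: let $H \in \{0,1\}$ and $D \in \{0,1\}$ be independent and uniform (so $\pi(H) \equiv 1/2$, realized by $D = \1(N_D > 1/2)$), let $N_Y \sim \UD(0,1)$, and place the two $H$-strata in disjoint response regions via $g_d(h, u) = u$ when $d = h$ and $g_d(h, u) = 2 + u$ when $d \neq h$. A direct computation then gives $F^*_0 = F^*_1$ (a balanced mixture of $\UD(0,1)$ and $\UD(2,3)$) and the rank law $R \mid (D = d, H = h)$ equal to $\UD(0,1/2)$ if $d = h$ and $\UD(1/2,1)$ if $d \neq h$. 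These laws differ across $D$ for each fixed value of $H$, so (RS) fails; but averaging over the independent, uniform $H$ returns $\UD(0,1)$ for both $D=0$ and $D=1$, so $R \indep D$, which (as $\pi(H)$ is constant) is exactly (CRS). The disjoint-region placement is precisely what makes the self-consistency of $F^*_d$ transparent and the rank laws explicitly computable.
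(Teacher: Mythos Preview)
Your proposal is correct. The two forward implications and both counterexamples check out; in particular, your (CRS)$\not\Rightarrow$(RS) construction with constant propensity $\pi(H)\equiv 1/2$ works exactly as stated.

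Your route differs from the paper's in several places. For (RI)$\Rightarrow$(RS), the paper does not invoke the almost-sure characterization of (RI) directly but instead computes $\Ex[\phi(R)\mid H]$ and $\Ex[\phi(R)\mid D,H]$ for arbitrary bounded continuous $\phi$ and shows they coincide; your argument---reading off $R=r(H,N_Y)$ and then using $(Z,N_D)\indep N_Y\mid H$---is shorter and more transparent (the informal phrase ``conditionally on $H$, $r(H,N_Y)$ is $\sigma(N_Y)$-measurable'' should strictly be replaced by $(Z,N_D,H)\indep(N_Y,H)\mid H$, which is the content of the paper's Lemma~\ref{lem:cis1}). For (RS)$\Rightarrow$(CRS), the paper rewrites (RS) as $R\indep D\mid H,\pi(H)$, combines it with the balancing equation $D\indep H\mid\pi(H)$ (Lemma~\ref{lem:balancing}), and applies the contraction axiom to obtain $D\indep(R,H)\mid\pi(H)$, then decomposes; your tower-property computation of $\Prob(D=1\mid R,\pi(H))$ is essentially the same argument unpacked by hand, using binarity of $D$ explicitly (the paper's balancing lemma also relies on binarity). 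The paper's counterexamples are more elaborate: its (RS)$\not\Rightarrow$(RI) example (Example~\ref{ex:holds}) has Gaussian $H$ and a sign-flip in $N_Y$, and its (CRS)$\not\Rightarrow$(RS) example (Example~\ref{ex:nobreak}) uses a $\chi^2$ construction with non-constant propensity depending on $\operatorname{sign}(H)$. Your designs---degenerate $H$ with bivariate uniform $N_Y$ for the first, and a symmetric two-stratum disjoint-range setup with constant propensity for the second---are considerably simpler and make verification immediate; the trade-off is that they are more stylized (degenerate confounding, $D$ independent of both $Z$ and $H$), whereas the paper's examples also serve an illustrative purpose via the accompanying figures.
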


Example~\ref{ex:holds} below, provides an example IV-SCM in which rank
invariance is violated, but rank similarity is not. The remaining examples prove
that none of the reverse implications hold are provided in
Examples~\ref{ex:nobreak}~and~\ref{ex:break} in Appendix~\ref{app:cexs}. They
may help the reader to gain additional intuition on the rank assumptions.

\begin{example}[Only rank invariance violated]\label{ex:holds}
Consider the IV-SCM given by
\begin{align*}
    H &\coloneqq N_H\\
    D &\coloneqq \1(N_D \geq 0.2 + 0.6 \1(H\geq 0))\\
    Y &\coloneqq H - D + D N_Y - (1 - D) N_Y,
\end{align*}
where $N_H, N_Y \sim \ND(0, 1)$, $N_D \sim \UD(0,1)$ jointly independent. Here,
it holds that $F_0 : \ry \mapsto \Phi(\ry/\sqrt{2})$ and $F_1 : \ry \mapsto
\Phi((1 + \ry)/\sqrt{2})$. We now discuss the three types of rank assumptions
mentioned above and summarize the example in Figure~\ref{fig:ex:holds}.
\begin{figure}
\centering
\includegraphics[width=\textwidth]{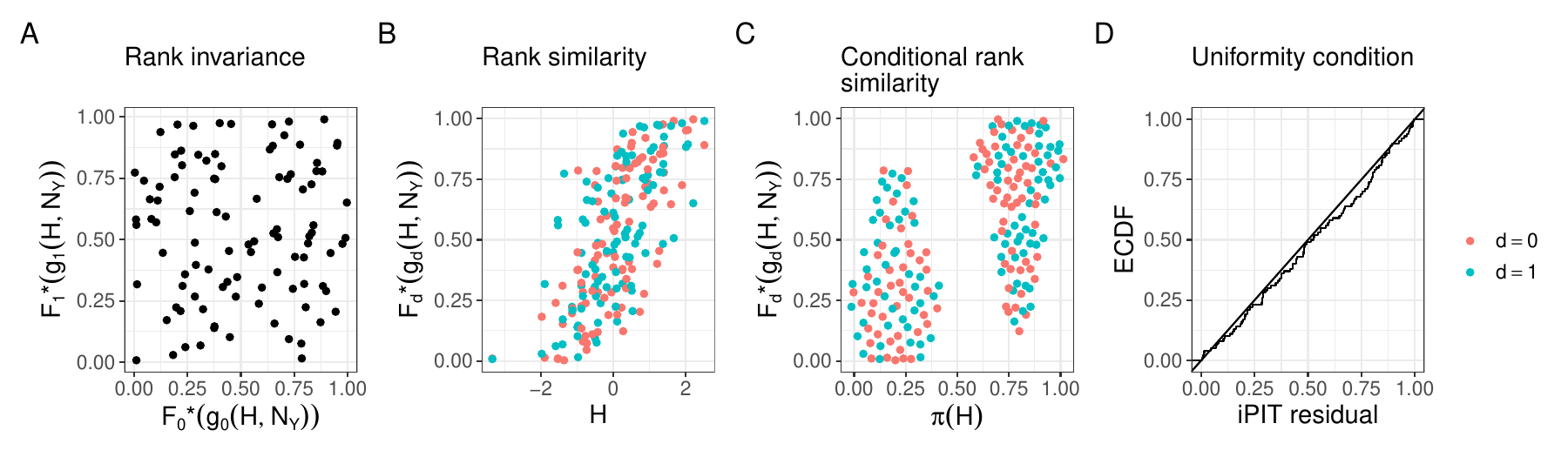}
\caption{%
Overview of the rank assumptions discussed in Example~\ref{ex:holds}.
A: Rank invariance does not hold.
B: Rank similarity holds.
C: Conditional rank similarity holds.
D: The iPIT residuals are uniform.
}\label{fig:ex:holds}
\end{figure}
Since $F_0(g_0(H, N_Y)) = \Phi((H + N_Y)/\sqrt{2})$ and $F_1(g_1(H, N_Y)) =
\Phi((H - N_Y) / \sqrt{2})$, rank invariance is violated.
Figure~\ref{fig:ex:holds}A shows the (non-degenerate) joint distribution of the
ranks $F_0^*(g_0(H,N_Y))$ and $F_1^*(g_1(H,N_Y))$. Rank similarity is fulfilled,
since for all $h \in \RR$, $\Phi((h + N_Y)/\sqrt{2}) \stackrel{d}= \Phi((h -
N_Y)/\sqrt{2})$. This is illustrated in Figure~\ref{fig:ex:holds}B, which plots
the conditional distributions of the iPIT residuals given the hidden confounder
$H$. Conditional rank invariance holds by Proposition~\ref{prop:rankasmp} and is
illustrated in Figure~\ref{fig:ex:holds}C. In Figure~\ref{fig:ex:holds}D, we see
that the iPIT residuals $D F_1^*(Y) + (1-D) F_0^*(Y) = D \Phi((H +
N_Y)/\sqrt{2}) + (1 - D)\Phi((H + 2N_Y)/\sqrt{3})$ are uniformly distributed.
\end{example}

Finally, we consider an assumption on the relationship between treatment and
instrument that slightly strengthens the ``relevance'' assumption
\citep{angrist1996identification}, which assumes that $D\not\indep Z$ and is
used for nonparametric identification of average causal effects.

\begin{assumption}[Relevance]\label{asmp:relevance}
In Setting~\ref{setting}, assume that for all measurable and bounded 
$m,k:\calH\rightarrow\mathbb{R}$ it holds that
\begin{equation*}
    m(H)+k(H)D\indep Z
    \quad\Longrightarrow\quad
    k(H) = 0 \quad \Prob_H\mbox{-almost surely.}
\end{equation*}
\end{assumption}

Assumption~\ref{asmp:relevance} requires that the instrument affects the joint
distribution of $D$ and $Y$ in a specific way. Under no hidden confounding, the
assumption is trivially fulfilled. If hidden confounding is present,
Assumption~\ref{asmp:relevance} is stronger than the conventional notion of
relevance (i.e.,~$D\not\indep Z$) and additionally enforces a type of
monotonicity (see Example~\ref{ex:relevance} and \eqref{eq:monotoneLR} below).
Although Assumption~\ref{asmp:relevance} is not testable directly, it implies
the classical notion of relevance, $D \not\indep Z$, which is
testable.\footnote{Assume relevance is violated, \ie $D \indep Z$ holds. Then,
for $m\equiv 0$ and $k\equiv 1$, $m(H)+k(H)D\indep Z$ but $k(H)=1$ almost
surely. Thus, Assumption~\ref{asmp:relevance} is violated.%
}%
We show in Example~\ref{ex:relevance} that this assumption is satisfied in
simple IV-SCMs.

\begin{example}\label{ex:relevance}
Consider an IV-SCM that fulfills $N_Z, N_H \sim \BD(0.5)$, $N_D \sim \UD(0, 1)$
which are jointly independent and $D \coloneqq f(Z, H, N_D) = \1(N_D \leq p(H,
Z))$. In this setup, Assumption~\ref{asmp:relevance} is satisfied if
$p(0,0)-p(0,1)\neq p(1,0)-p(1,1)$ and for all $z,h\in\{0,1\}$, $p(z,h)\neq 1$.
This type of condition on the probabilities is similar in spirit to the
monotonicity assumption in binary IV models, as it avoids cancellations in how
the effect of $Z$ propagates via $D$. This can be seen by considering the
conditional distribution of $H$ and $D$ given $Z$. A formal proof is given in
Appendix~\ref{proof:ex:relevance}.
\end{example}

\subsection{Identification result}\label{sec:id}

We are now ready to present the main identification result which also serves as
a basis for the estimator we propose in Section~\ref{sec:DIVE}. A proof is given
in Appendix~\ref{proof:thm:id}.

\begin{theorem}[Identification]\label{thm:id}
Assume Setting~\ref{setting}. Then, under Assumption~\ref{asmp:relevance} and
either Assumptions~\ref{asmp:pcu} and~\ref{asmp:crs}~\ref{asmp:crs:crs} or
Assumption~\ref{asmp:crs}~\ref{asmp:crs:rsim}, it holds for all $F_0, F_1 \in
\calF \coloneqq \{F : \RR \to [0, 1] \given F \mbox{ is a continuous CDF}\}$
that
\begin{align}\label{eq:id}
    F_D(Y) \indep Z \mbox{ and } F_D(Y) \sim \UD(0, 1)
    \iff
    \forall d\in\calD: \ F_d = F_d^*.
\end{align}
\end{theorem}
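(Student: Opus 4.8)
The plan is to prove the two implications in \eqref{eq:id} separately, treating ``$\Leftarrow$'' (the true interventional CDFs satisfy the residual conditions) as the constructive direction and ``$\Rightarrow$'' (uniqueness of the solution) as the direction that consumes Assumption~\ref{asmp:relevance}. Both directions rest on the same preliminary reduction. Because $D=f(Z,H,N_D)$ is a function of $(N_Z,N_H,N_D)$ and $(N_Z,N_D)\indep N_Y\given N_H$, the conditional law of $N_Y$ given $(Z,H,D)$ equals its law given $N_H$; hence for any candidate CDFs the conditional distribution of the residual given $(Z,H,D)$ is a function of $(D,H)$ alone,
\[
\Prob(F_D(Y)\le u\given Z,H,D)=\psi_D(u,H),\qquad \psi_d(u,h)\coloneqq\Prob\big(F_d(g_d(h,N_Y))\le u\bigm| N_H=h\big),
\]
and in particular does not depend on $Z$. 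I write $\psi_d^*$ for the same object built from the true $F_d^*$.

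For ``$\Leftarrow$'', set $F_d=F_d^*$. Under rank similarity \ref{asmp:crs:rsim}, $F_D^*(Y)\indep D\given H$ forces $\psi_0^*(u,\cdot)=\psi_1^*(u,\cdot)\eqqcolon\psi^*(u,\cdot)$, so that $\Prob(F_D^*(Y)\le u\given Z=z)=\Ex[\psi^*(u,H)\given Z=z]$. Since $H=N_H\indep N_Z=Z$, this equals $\Ex[\psi^*(u,H)]$ and is therefore free of $z$, giving independence; and $\Ex[\psi^*(u,H)]=\Ex[\psi_0^*(u,H)]=\Prob(F_0^*(g_0(H,N_Y))\le u)=u$ by the ordinary probability integral transform, because $F_0^*$ is by definition the continuous CDF of $g_0(H,N_Y)$. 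For the case \ref{asmp:crs:crs} the same computation is run one level coarser: Assumption~\ref{asmp:pcu} lets me drop $(Z,D)$ from the conditional law of $H$ given $\pi(H)$, so $\Ex[\psi_D^*(u,H)\given Z,\pi(H),D]=\Psi_D^*(u,\pi(H))$ with $\Psi_d^*(u,p)\coloneqq\Ex[\psi_d^*(u,H)\given \pi(H)=p]$; conditional rank similarity collapses $\Psi_0^*=\Psi_1^*$, and $\pi(H)\indep Z$ together with the tower property closes the argument exactly as before.

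For ``$\Rightarrow$'', let $F_0,F_1$ be continuous CDFs satisfying the left-hand side. Since each $F_d^*$ is a continuous, strictly increasing CDF on the support of $Y\mid \pdo(D=d)$, I write $F_d=\phi_d\circ F_d^*$ with $\phi_d\coloneqq F_d\circ (F_d^*)^{-1}$ nondecreasing on $[0,1]$, so that $F_D(Y)=\phi_D(U^*)$ for $U^*\coloneqq F_D^*(Y)$; the goal becomes $\phi_d=\id$, equivalently $F_d=F_d^*$. Using $\1(F_D(Y)\le u)=\1(U^*\le t_D(u))$ with $t_d(u)\coloneqq F_d^*((F_d)^{-1}(u))$, subtracting the two uniformity-plus-independence identities for $F_D(Y)$ and (from ``$\Leftarrow$'') for $U^*$, and conditioning on $(Z,H,D)$, yields for every $u$ and all $z$
\[
\Ex\big[m_u(H)+k_u(H)\,D\bigm| Z=z\big]=0,\qquad m_u\coloneqq a_0(u,\cdot),\ \ k_u\coloneqq a_1(u,\cdot)-a_0(u,\cdot),
\]
where $a_d(u,\cdot)\coloneqq\psi_d^*(t_d(u),\cdot)-\psi_d^*(u,\cdot)$. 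The plan is then to invoke Assumption~\ref{asmp:relevance} to force $k_u\equiv0$, feed this back to obtain $m_u\equiv0$, and use the monotonicity of $u\mapsto\psi_d^*(u,h)$ --- which makes the sign of $a_d(u,h)$ equal to $\operatorname{sign}(t_d(u)-u)$ uniformly in $h$ --- to conclude $t_d(u)=u$ for all $u$, i.e.\ $\phi_d=\id$.

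The main obstacle is precisely this invocation of relevance. The hypotheses ``$F_D(Y)\sim\UD(0,1)$ and $F_D(Y)\indep Z$'' are, after conditioning on $(D,H)$, only first-moment (orthogonality) statements, so they deliver the displayed identity $\Ex[m_u(H)+k_u(H)D\given Z]=0$ rather than the full independence $m_u(H)+k_u(H)D\indep Z$ that Assumption~\ref{asmp:relevance} takes as its premise. Bridging this gap is the crux: I would exploit that the identity holds simultaneously for all thresholds $u$ together with the constant-in-$h$ sign structure of $a_d(u,\cdot)$, turning the family of mean-zero constraints into a no-cancellation argument of exactly the kind that relevance rules out (cf.\ the monotone-likelihood-ratio condition isolated in Example~\ref{ex:relevance}). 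The care needed to match these constraints to the independence premise of Assumption~\ref{asmp:relevance}, and the boundary and measure-zero bookkeeping in the coupling $F_d=\phi_d\circ F_d^*$ on the support of $Y\mid\pdo(D=d)$, are where the real work lies.
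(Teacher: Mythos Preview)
Your backward direction is correct and follows the same route as the paper: condition on $(Z,H,D)$ (or $(Z,\pi(H),D)$), use the noise independence to drop $Z$, use the rank assumption to collapse the dependence on $D$, then use $H\indep Z$ and the probability integral transform to close.

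Your forward direction, however, takes an unnecessary detour and ends in the gap you yourself flag. By first passing to $\phi_d=F_d\circ(F_d^*)^{-1}$, conditioning on $(Z,H,D)$ and then subtracting the true identity from the candidate one, you degrade the hypothesis $F_D(Y)\indep Z$ to the family of mean-zero constraints $\Ex[m_u(H)+k_u(H)D\mid Z]=0$. That is strictly weaker than the premise of Assumption~\ref{asmp:relevance}, and your proposed recovery via the ``sign structure'' of $a_d(u,\cdot)$ is not an argument---it does not manufacture the missing independence, and for general $m,k$ there is no way to promote $\Ex[m(H)+k(H)D\mid Z]=0$ to $m(H)+k(H)D\indep Z$.

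The fix is to skip the detour entirely. Write
\[
F_D(Y)=F_0(g_0(H,N_Y))+D\big(F_1(g_1(H,N_Y))-F_0(g_0(H,N_Y))\big),
\]
which is already of the form $m(H,N_Y)+k(H,N_Y)D$. The hypothesis $F_D(Y)\indep Z$ is then \emph{exactly} the premise of Assumption~\ref{asmp:relevance} (read with $(H,N_Y)$ as the hidden variable), so relevance directly yields $F_1(g_1(H,N_Y))=F_0(g_0(H,N_Y))$ almost surely. In particular $F_D(Y)=F_0(g_0(H,N_Y))$, and the uniformity hypothesis says this is $\UD(0,1)$. Since $g_0(H,N_Y)$ has CDF $F_0^*$ by definition of the interventional CDF, Lemma~\ref{lem:cdfunique} forces $F_0=F_0^*$; the same argument with $d=1$ gives $F_1=F_1^*$. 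No subtraction, no reparametrization, no loss of the independence information.
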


Thus, for all $d\in\calD$, the interventional CDF under treatment $d$ is
identified under either a weaker rank similarity assumption
(Assumption~\ref{asmp:crs} \ref{asmp:crs:crs}) but a stronger assumption on the
confounding mechanism (Assumption~\ref{asmp:pcu}), or a stronger rank similarity
assumption (Assumption~\ref{asmp:crs} \ref{asmp:crs:rsim}) but without
restricting the confounding mechanism. In both cases, relevance is required for
identification: If, for instance, $Z \indep (D, H, Y)$ holds, then
$F_D(Y) \indep Z$ holds for any $F$ which implies non-identifiability of the
interventional CDFs.
Assumptions~\ref{asmp:pcu} and \ref{asmp:crs} are not testable without further
restrictions on Setting~\ref{setting} and need to be argued based on domain
knowledge; while Assumption~\ref{asmp:relevance} has testable implications (see
the discussion below Assumption~\ref{asmp:relevance}).

\subsection{Relation to potential outcomes and the IVQR model}\label{sec:ivqr}

Estimating distributional causal effects using IVs has been considered before
using a formulation based on quantile functions and potential outcomes, called
IVQR
\citep{chernozhukov2004effects,chern2005ivqte,chern2006ivquantile,chern2008ivquantilerobust,chern2007ivquantileinference}.
However, the identification strategy employed in IVQR is fundamentally different
to the one we propose and use in Theorem~\ref{thm:id}. To illustrate this, we
first introduce the IVQR model and its identification strategy and then show
that if an IVQR model has an identifiable quantile causal effect, the model can
be expressed in terms of an IV-SCM (Setting~\ref{setting}) in which our
identifiability conditions hold. The IVQR model \citep{chern2005ivqte} is
formulated in terms of potential outcomes \citep{rubin1974estimating} and relies
on the conditions outlined in Setting~\ref{setting:ivqr}.\footnote{We have
adapted the notation/setting used in \citet{chern2005ivqte} to be more in line
with ours. In particular, we use the slightly stronger independence assumption
$(U(0),U(1),V) \indep Z$ instead of $(U(0),U(1)) \indep Z$ and express the rank
similarity condition in terms of conditional independence rather than
conditional distributions.}

\begin{setting}[IVQR model]\label{setting:ivqr}
  There exists a distribution over the variables
  \begin{equation}
    \label{eq:potential_outcomes}
    (Z,D,U(0), U(1), Y(0),
    Y(1))\in\mathcal{Z}\times\{0,1\}\times
    [0,1]\times[0,1]\times\mathcal{Y}\times\mathcal{Y},
  \end{equation}
  satisfying the following conditions:
  \begin{enumerate}
    \setlength{\itemsep}{0pt}
  \item (Potential outcomes). For all $d \in \calD$, $U(d) \sim \UD(0,1)$ and
    there exists a strictly increasing function $q_d : [0, 1] \to \RR$, such
    that $Y(d) = q_d(U(d))$.
    \label{ivqr:po}
  \item (Selection). There exists a measurable
    $\delta : \RR^{d_Z} \times \RR^{d_V} \to \calD$ and random
    vector $V \in \RR^{d_V}$, such that $D = \delta(Z, V)$.
    \label{ivqr:sel}
  \item (Independence). It holds that 
      $(U(0), U(1), V)\indep Z$. \label{ivqr:ind} 
  \item (Rank similarity). \label{ivqr:rank} It holds that
    $U(D)\indep D\given V, Z$.
    \item (Consistency, no interference, i.i.d.). The observed data
    $\mathcal{D}_n \coloneqq \{(Z_1, D_1, Y_1),\ldots,(Z_n,D_n,Y_n)\}$ are
    i.i.d.\ copies of $(Z, D, Y(D))$. 
    \label{ivqr:obs}
\end{enumerate}
\end{setting}

Condition~\ref{ivqr:po} ensure the existence of a structural quantile function
under treatment $d$, for all $d\in\calD$, that takes the endogenous noise $U(d)$
as input. Since for all $d \in \calD$, $U(d)$ is uniform, it can be thought of
as a rank of an individual under treatment $d$. Rank similarity corresponds to
the conditional distribution of $U(d)$ given $V$ and $Z$ to be the same for all
$d$. For our identification strategy, we employed conditional rank similarity
(Assumption~\ref{asmp:crs} \ref{asmp:crs:crs}), which was shown to be strictly
weaker than both rank invariance and rank similarity
(Proposition~\ref{prop:rankasmp}). Condition~\ref{ivqr:sel} assumes the
existence of a structural equation for the treatment assignment, where $\nu$
plays the role of $(H, N_D)$. Condition~\ref{ivqr:obs} postulates consistency,
no interference and that the data consist of i.i.d.\ copies of the observable
$(Z, D, Y(D))$.

The identification strategy used for IVQR models \citep{chern2005ivqte} is based
on the fact that under Setting~\ref{setting:ivqr}, it holds for all $\tau \in
(0, 1)$ that
\begin{equation*}
    \Prob(Y \leq q_D(\tau) \given Z) = \tau,
\end{equation*}
that is, the conditional CDF of the observed $Y$ given $Z$ evaluated at the
structural quantile functions under the observed treatment $D$, and any $\tau$
does not depend on $Z$ and is equal to $\tau$. This is fundamentally different
from our condition in which the population iPIT residual $F^*_D(Y)$ is shown to
be uniform and independent of $Z$.

Proposition~\ref{prop:equiv_estimand} below establishes that for each IVQR model
there exists an IV-SCM from Setting~\ref{setting} which fulfills rank similarity
as defined in Assumption~\ref{asmp:crs} \ref{asmp:crs:rsim} with equivalent
observational and interventional distributions.

\begin{proposition}\label{prop:equiv_estimand}\label{prop:equiv_final}
  Let $M$ be an IVQR model as in Setting~\ref{setting:ivqr} and denote
  by $P^M$ the distribution over the potential outcomes
  \eqref{eq:potential_outcomes}. Then, there exists an IV-SCM
  $\mathcal{G}(M)$ as in Setting~\ref{setting} satisfying
  Assumption~\ref{asmp:crs}~\ref{asmp:crs:rsim}, such that
  \begin{equation*}
    P^M_{Z, D, Y(D)}=P^{\mathcal{G}(M)}_{Z,
      D, Y},
    \quad
    P^M_{Y(0)}=P^{\mathcal{G}(M);\pdo(D=0)}_{Y}
    \quad\text{and}\quad
    P^M_{Y(1)}=P^{\mathcal{G}(M);\pdo(D=1)}_{Y}.
  \end{equation*}
\end{proposition}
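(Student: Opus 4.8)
The plan is to exhibit the IV-SCM $\mathcal{G}(M)$ explicitly by reading its structural equations and noise distributions off the components $(q_0, q_1, \delta, V, U(0), U(1))$ of the IVQR model $M$, and then to verify the three distributional identities together with Assumption~\ref{asmp:crs}~\ref{asmp:crs:rsim}. Concretely, I would set $N_Z := Z$, take the hidden confounder to be $H := N_H := V$, let $N_D$ be degenerate, and set the response noise to the pair of ranks $N_Y := (U(0), U(1))$, whose joint law with $N_H = V$ is inherited from $P^M$. The structural equations are $f(z, h, n_D) := \delta(z, h)$ and $g_d(h, (u_0, u_1)) := q_d(u_d)$. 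Note that $g_d$ ignores its $H$-argument, so the confounding between $D$ and $Y$ enters solely through the dependence between $N_H = V$ and $N_Y = (U(0), U(1))$; this is permitted because Setting~\ref{setting} only requires $(N_Z, N_D) \indep N_Y \given N_H$ and $N_Z \indep N_H$, not joint independence of the noises.

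First I would check that $\mathcal{G}(M)$ is a valid instance of Setting~\ref{setting}. Since $N_D$ is degenerate, the two noise conditions reduce to $Z \indep (U(0), U(1)) \given V$ and $Z \indep V$, both of which follow from the IVQR independence condition $(U(0), U(1), V) \indep Z$ by decomposition. The observational identity $P^M_{Z, D, Y(D)} = P^{\mathcal{G}(M)}_{Z, D, Y}$ is then immediate: the tuple $(Z, V, U(0), U(1))$ has the same joint law in both models by construction, and in each model $D$ and $Y$ are the same deterministic functions $\delta(Z, V)$ and $q_D(U(D))$ of this tuple. For the interventional identities, $\pdo(D=d)$ leaves the law of the exogenous $(H, N_Y) = (V, U(0), U(1))$ unchanged and yields $Y = g_d(H, N_Y) = q_d(U(d))$; since $U(d) \sim \UD(0,1)$, this has the same law as $Y(d)$ in $M$, giving $P^M_{Y(d)} = P^{\mathcal{G}(M); \pdo(D=d)}_Y$.

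The substantive step is Assumption~\ref{asmp:crs}~\ref{asmp:crs:rsim}, namely $F^*_D(Y) \indep D \given H$. I would first compute the iPIT residual: the interventional CDF satisfies $F^*_d = q_d^{-1}$ (because $q_d$ is strictly increasing and $U(d)$ uniform, so $\Prob(q_d(U(d)) \le y) = q_d^{-1}(y)$), whence $F^*_D(Y) = q_D^{-1}(q_D(U(D))) = U(D)$. The claim thus becomes $U(D) \indep D \given V$. To establish it I would combine three facts: (i) $(U(0), U(1)) \indep Z \given V$, from the independence condition; (ii) since $D = \delta(Z, V)$ is a function of $(Z, V)$, fact (i) upgrades to $(U(0), U(1)) \indep D \given V$, so that $U(d) \given V, D = d \stackrel{d}{=} U(d) \given V$; and (iii) IVQR rank similarity (in its conditional-distribution form, as the authors read it), which states that the law of $U(d)$ given $(V, Z)$ does not depend on $d$ and hence, via (i), that $U(0) \given V \stackrel{d}{=} U(1) \given V$. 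Chaining these shows $U(D) \given V, D = d$ has a law independent of $d$, which is exactly $U(D) \indep D \given V$.

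The main obstacle is choosing the confounder $H$ correctly in this last step. The natural temptation is to place the ranks $U(0), U(1)$ into $H$ (so that the SCM noises are mutually independent), but then $U(D)$ is a deterministic function of $(H, D)$ and rank similarity $U(D) \indep D \given H$ fails whenever $U(0) \ne U(1)$. The resolution is to keep only $V$ in $H$ and absorb the ranks into $N_Y$, exploiting the relaxed noise assumption of Setting~\ref{setting}; rank similarity then reduces to a statement purely about the IVQR primitives, provable via the conditional-independence calculus (Lemma~\ref{lem:ci}). A secondary point to handle carefully is that $F^*_d = q_d^{-1}$ is a genuine continuous CDF, so that $F_0^*, F_1^* \in \calF$: this holds because $U(d)$ is continuous and $q_d$ injective, so $Y(d)$ has no atoms.
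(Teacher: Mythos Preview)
Your construction is exactly the one the paper uses: $N_Z=Z$, $N_H=H=V$, $N_D$ degenerate, $N_Y=(U(0),U(1))$, $f(z,h,n)=\delta(z,h)$, and $g_d(h,n)=q_d(n_{d+1})$; the verification of the noise conditions, the observational and interventional identities, and the reduction of \ref{asmp:crs:rsim} to $U(D)\indep D\given V$ via $F^*_d=q_d^{-1}$ all mirror the paper's proof. Your treatment of the rank-similarity step is in fact more explicit than the paper's (which simply asserts that IVQR rank similarity together with $(U(0),U(1))\indep Z\given V$ yields $U(\delta(Z,V))\indep \delta(Z,V)\given V$), and your closing remark about why $H$ must be $V$ rather than $(V,U(0),U(1))$ correctly pinpoints the design choice that makes the construction work.
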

A proof is given in Appendix~\ref{proof:prop:equiv_estimand}.
Proposition~\ref{prop:equiv_estimand} in particular implies, given appropriate
relevance conditions, that, for all $d\in\calD$, whenever the structural
quantile function under treatment $d$ in an IVQR model is identifiable, there
exists a corresponding IV-SCM in which the interventional CDF under treatment
$d$ is identified as well. The assumption of rank similarity can even be
weakened slightly to conditional rank similarity in the IV-SCM setting (see
Assumption~\ref{asmp:crs}). \citet{chern2005ivqte} state an explicit relevance
condition for point identification when $Z, D \in \calD$ based on the impact of
the instrument on the joint density of $D$ and $Y$ conditional on $Z$, denoted
by $f_{Y,D \given Z} : \RR \times \calD\times\calD \to [0,\infty)$. Namely, it
is assumed that for all $y \in \calY$,
\begin{align}\label{eq:monotoneLR}
\frac{f_{Y,D \given Z}(y, 1 \given 1)}{f_{Y,D\given Z}(y, 0 \given 1)}
>
\frac{f_{Y,D \given Z}(y, 1 \given 0)}{f_{Y,D\given Z}(y, 0 \given 0)},
\end{align}
which is referred to as a ``monotone likelihood ratio condition''. This notion
of relevance is distinct from the independence condition in
Assumption~\ref{asmp:relevance}.

\section{Distributional IV estimator}\label{sec:DIVE}

Using the identification strategy (Theorem~\ref{thm:id}), we propose an
estimation procedure based on uniformity of the empirical iPIT residuals and
their independence of the instrument. In Definitions~\ref{def:divl}
and~\ref{def:ipitiv} we present a loss function and an estimator which, for all
$d \in \calD$, consistently estimates the interventional CDF under treatment $d$
(Proposition~\ref{prop:consistency}).

\begin{definition}[Distributional IV loss]\label{def:divl}
Let $\lambda > 0$ be a constant, $\aggr : \RR_+ \times \RR_+ \to \RR_+$ a
Lipschitz-continuous aggregation function such that $\aggr(a, b) = 0$ if and
only if $(a, b)=(0,0)$ and $k_R:[0,1]\times[0,1]\rightarrow\mathbb{R}$ and
$k_Z:\mathcal{Z}\times\mathcal{Z}\rightarrow\mathbb{R}$ two positive definite
kernels. Moreover, let $\mathcal{D}_n=\{(\rY_i, D_i, Z_i)\}_{i=1}^n$ be data
from Setting~\ref{setting}. Then, for all pairs of CDFs $(F_0, F_1) \in \{F :
\RR \to [0, 1] \given F \mbox{ continuous CDF}\}$, we define the distributional
IV loss
\[
\ell_\lambda(F_0, F_1; \mathcal{D}_n) \coloneqq \aggr\bigg(
\widehat{\operatorname{CvM}}((F_{D_i}(Y_i))_{i=1}^n),
\lambda
\hat{\operatorname{HSIC}}((F_{D_i}(Y_i))_{i=1}^n, (Z_i)_{i=1}^n; k_R, k_Z)
\bigg),
\]
where $\widehat{\operatorname{CvM}}((F_{D_i}(Y_i))_{i=1}^n)$ denotes the
empirical Cram\'er--von Mises criterion w.r.t.\ the standard uniform distribution
\citep[see Definition~\ref{def:cvm:emp} in Appendix~\ref{app:aux},][]{cramer1928}
and $\hat{\operatorname{HSIC}}(\bcd, \bcd ; k_R, k_Z)$ denotes the empirical
Hilbert-Schmidt Independence criterion \citep[see Definition~\ref{def:hsic:emp}
in Appendix~\ref{app:aux},][]{gretton2007hsic} based on kernels $k_R$ and
$k_Z$.
\end{definition}

The CvM and HSIC terms entering the distributional IV loss enforce uniformity of
the empirical iPIT residuals and their independence of the instruments,
respectively: Firstly, the population CvM (Definition~\ref{def:cvm:pop} in
Appendix~\ref{app:aux}) is zero if and only if it's argument is uniformly
distributed on $(0,1)$ (Lemma~\ref{lem:cvm}).
The empirical Cram\'er--von Mises criterion can be used as a nonparametric test
of whether $(X_i)_{i=1}^n$ follow a standard uniform distribution
\citep{cramer1928}. 
Secondly, the population HSIC (see Definition~\ref{def:hsic:pop} in
Appendix~\ref{app:aux}) using characteristic kernels is zero if and only if the
arguments are independent. The choice of kernel can be made depending on the
sample space of $Z$. For continuous $Z$, we choose to use a Gaussian
kernel and for discrete $Z$ a discrete kernel. For the ranks, we choose
to use a Gaussian kernel after transforming the ranks with the standard normal
quantile function $\Phi^{-1}$. The empirical HSIC is formally defined in
Definition~\ref{def:hsic:emp} in Appendix~\ref{app:aux}.

In the following we choose to use the aggregation function $\aggr(a, b)
= a + b$. Other options are also possible, however the choice did not strongly
affect our empirical results so we do not consider this further (see
Appendix~\ref{app:sigma} for results using $\aggr(a, b) = \max\{a,b\}$). For the
case of normal interventional CDFs under both treatment $0$ and $1$ and with
equal variances, the loss landscape is shown for different values of $\lambda$
and interventional means $\mu_0, \mu_1$ in Figure~\ref{fig:loss}.

\begin{figure}[t!]
\centering
\includegraphics[width=\textwidth]{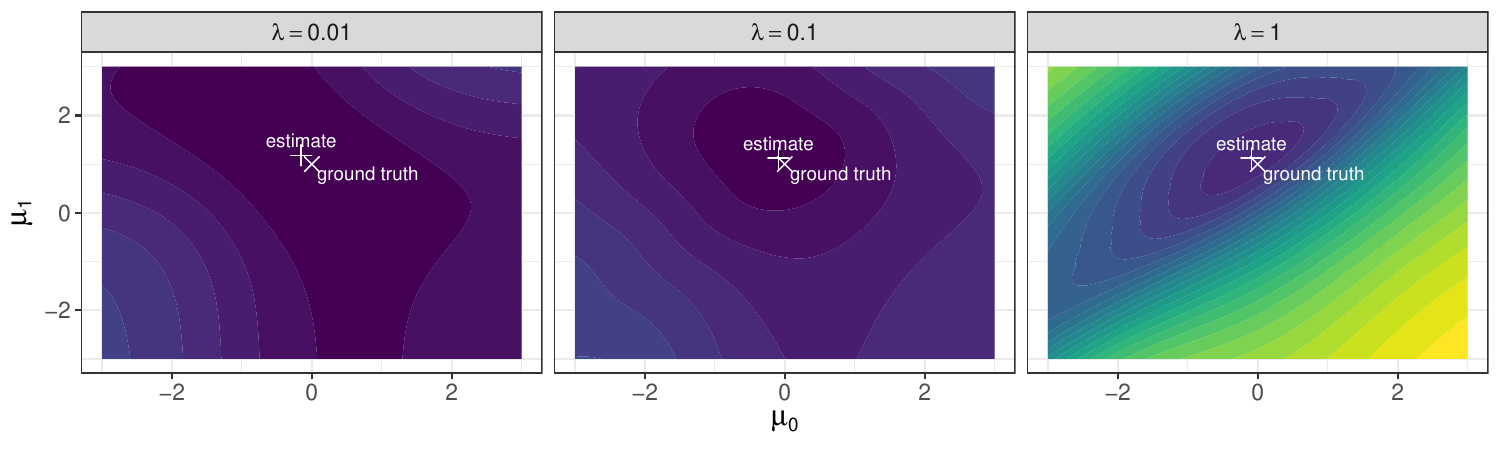}
\caption{%
Loss-landscape for the linear IV model in terms of the interventional means
$\mu_d$, $d \in \calD$ and for $\lambda \in \{0.01, 0.1, 1\}$. 
}\label{fig:loss}
\end{figure}

We parametrize the CDFs $F_0$ and $F_1$, using Bernstein polynomials of order
$M$. More formally, define $\basisy_M : [0, 1] \to \RR^{M+1}$ such that for all
$j\in\{1,\ldots,M+1\}$ and all $y\in[0,1]$ it holds that $(a_M)_j(y) =
\binom{M}{j} y^{j-1}(1-y)^{M-j+1}$ and define the parameter space $\Theta_M
\coloneqq \{\parm \in \RR^{M+1} \given \eparm_1 \leq \eparm_2 \leq \dots \leq
\eparm_{M+1}\}$. Then, due to the linear constraints in $\Theta_{M}$, the basis
expansion $\basisy_{M}(y)^\top\parm$ for any $\parm\in\Theta_{M}$ is monotone
increasing \citep{farouki2012bernstein}.
Since Bernstein polynomials can approximate any continuous function on the
interval $[0, 1]$, they can be used to parameterize the CDF of any bounded
absolutely continuous real-valued random random variable
\citep{farouki2012bernstein}.
We can therefore parameterize the interventional CDF under treatment $d$, for
all $d \in \calD$ and $\parm_d \in \Theta_{M}$, as $F_d : y \mapsto
\Phi(a_{M}(y)^\top\parm_d)$, where $\Phi$ denotes the standard normal CDF. Using
this parametrization together with the distributional IV loss we can define our
proposed estimator.

\begin{definition}[DIVE]\label{def:ipitiv}
Let $\mathcal{D}_n$ be data as in Setting~\ref{setting}, $M\in\mathbb{N}$ and $B>0$ fixed constants
and assume $\calY=[L, U]$ for some $L,U\in\mathbb{R}$. Then, for
$\mathcal{F}_B\coloneqq\{F:\mathbb{R}\rightarrow [0,1]\mid \exists
\vartheta\in\Theta_M \text{ with }\norm{\vartheta}_{\infty}<B \text{ s.t.\ }\forall y\in\mathbb{R}:\,
F(y)=\Phi(\basisy_{M}(\tfrac{y-L}{U-L})^\top\parm)\}$, we define the
distributional IV estimator (DIVE) as 
\begin{align}
    (\hat F_0^{n,\lambda}, \hat F_1^{n,\lambda}) \coloneqq
    \argmin_{(F_0,F_1)\in\calF_B^2} \ell_\lambda(F_0, F_1; \mathcal{D}_n).
\end{align}
\end{definition}

The choice of $\Phi$ in our parameterization influences the tail behavior of
DIVE. In our software, we implement several alternative choices to $\Phi$,
including the standard logistic, minimum and maximimum extreme value CDFs. In
practice, we suggest to cross-validate the choice. While we choose $\Phi$ and
Bernstein polynomials to parameterize CDFs for computational convenience, other
approaches based on nonparametric shape-constrained regression could be used as
well \citep{groeneboom2014nonparametric}.

As long as we assume that, for all $d\in\calD$, the true interventional CDF
under treatment $d$ can be parametrized using the same Bernstein polynomials,
DIVE converges with increasing sample size to the interventional CDF. A proof is
given in Appendix~\ref{proof:consistency}

\begin{proposition}[Consistency]\label{prop:consistency}
Assume Setting~\ref{setting}, Assumption~\ref{asmp:relevance} and either
Assumptions~\ref{asmp:pcu} and \ref{asmp:crs}~\ref{asmp:crs:crs} or
Assumption~\ref{asmp:crs} \ref{asmp:crs:rsim}. Furthermore assume there exists
$M\in\mathbb{N}$, $B>0$ and $\vartheta_0^*,\vartheta_1^*\in\Theta_M$ such that
for all $y\in\calY$ and $d\in\calD$ it holds that
$F_d^*(y)=\Phi(a_M(\tfrac{y-L}{U-L})^\top\vartheta_d^*)$ and $F_d^*\in\calF_B$.
Then, it holds for all $\lambda>0$ and $d\in\calD$ that
\begin{align}
    \sup_{y\in\calY} \,\lvert
    \hat{F}_d^{n,\lambda}(y) - F_d^*(y)
    \rvert \xrightarrow{\,p\,} 0 \quad\text{as $n\rightarrow\infty$},
\end{align}
where $\hat{F}_d^{n,\lambda}$ denotes the DIVE for $\calF_B$
(Definition~\ref{def:ipitiv}) using fixed bounded, continuously differentiable,
characteristic kernels $k_R$ and $k_Z$ with bounded derivatives.
\end{proposition}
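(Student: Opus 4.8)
The idea is to treat DIVE as an $M$-estimator and apply the standard consistency argument for extremum estimators: identify the population objective that DIVE implicitly minimizes, show that its minimizer is unique, well-separated and equal to the truth, establish a uniform law of large numbers for the empirical objective over the parameter class, and conclude with a standard argmin-consistency lemma. Since $\calY = [L,U]$ and every $F \in \calF_B$ has the form $F_\vartheta(y) = \Phi(\basisy_M((y-L)/(U-L))^\top \vartheta)$ with $\vartheta \in \Theta_M$ and $\norm{\vartheta}_\infty < B$, I would reparametrize over the compact set $\bar\Theta \coloneqq \{(\vartheta_0, \vartheta_1) \in \Theta_M^2 : \norm{\vartheta_d}_\infty \le B \text{ for } d \in \calD\}$, writing the empirical iPIT residuals as $R_i(\vartheta) = \Phi(\basisy_M(\tilde Y_i)^\top \vartheta_{D_i})$ with $\tilde Y_i = (Y_i - L)/(U-L) \in [0,1]$. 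The true parameter $\vartheta^* = (\vartheta_0^*, \vartheta_1^*)$ lies in the interior of $\bar\Theta$ because $F_d^* \in \calF_B$, and the exact minimizer over $\calF_B^2$ satisfies the near-minimality condition used below, so it is harmless to argue over the closure $\bar\Theta$.

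Define the population objective $L(\vartheta) \coloneqq \aggr(\operatorname{CvM}(\vartheta), \lambda\operatorname{HSIC}(\vartheta))$, where $\operatorname{CvM}(\vartheta)$ and $\operatorname{HSIC}(\vartheta)$ denote the population Cram\'er--von Mises and Hilbert--Schmidt functionals of $R(\vartheta)$ (and $Z$). By Lemma~\ref{lem:cvm} we have $\operatorname{CvM}(\vartheta) = 0 \iff R(\vartheta) \sim \UD(0,1)$, and because $k_R,k_Z$ are characteristic, $\operatorname{HSIC}(\vartheta) = 0 \iff R(\vartheta) \indep Z$. Since $\aggr(a,b) = 0$ iff $(a,b) = (0,0)$, it follows that $L(\vartheta) = 0$ exactly when $R(\vartheta) \sim \UD(0,1)$ and $R(\vartheta) \indep Z$, which by Theorem~\ref{thm:id} is equivalent to $F_{\vartheta_d} = F_d^*$ for all $d \in \calD$. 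The Bernstein parametrization is injective (the degree-$M$ Bernstein polynomials form a basis and $\Phi$ is strictly increasing, so $F_\vartheta = F_{\vartheta'}$ forces $\vartheta = \vartheta'$), hence $\vartheta^*$ is the unique zero of $L$ on $\bar\Theta$. Because the law of $R(\vartheta)$ varies continuously in $\vartheta$ and the kernels are bounded and continuous, $\operatorname{CvM}$ and $\operatorname{HSIC}$ are continuous in $\vartheta$, and since $\aggr$ is Lipschitz, $L$ is continuous; a continuous function on the compact set $\bar\Theta$ with a unique zero is well-separated, i.e.\ $\inf_{\norm{\vartheta - \vartheta^*} \ge \eps} L(\vartheta) > 0 = L(\vartheta^*)$ for every $\eps > 0$.

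The analytic core is the uniform convergence $\sup_{\vartheta \in \bar\Theta} \lvert \ell_\lambda(F_{\vartheta_0}, F_{\vartheta_1}; \mathcal{D}_n) - L(\vartheta) \rvert \xrightarrow{p} 0$, which I would prove for the two building blocks separately and combine via the Lipschitz continuity of $\aggr$. Pointwise in $\vartheta$, consistency of $\widehat{\operatorname{CvM}}(\vartheta)$ follows from Glivenko--Cantelli applied to the i.i.d.\ sample $(R_i(\vartheta))_i$ together with continuity of the CvM functional, and consistency of $\widehat{\operatorname{HSIC}}(\vartheta)$ is the standard convergence of the HSIC V-statistic for bounded kernels. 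To upgrade to uniform convergence I would exploit the finite-dimensional smooth dependence on $\vartheta$: since $\Phi$ is Lipschitz, the Bernstein basis is bounded on $[0,1]$, and $k_R$ is continuously differentiable with bounded derivative, both $\vartheta \mapsto \widehat{\operatorname{CvM}}(\vartheta)$ and $\vartheta \mapsto \widehat{\operatorname{HSIC}}(\vartheta)$ are Lipschitz in $\vartheta$ with a deterministic constant (uniform in $n$ and in the data; for CvM one uses additionally that order statistics are $1$-Lipschitz in the underlying values), and the same holds for the population counterparts. Pointwise convergence plus this equicontinuity over the compact $\bar\Theta$ yields uniform convergence by a standard covering argument.

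Finally, I would invoke the usual argmin-consistency lemma: uniform convergence of the objective together with a well-separated minimizer implies $\hat\vartheta^{n,\lambda} \xrightarrow{p} \vartheta^*$, where the exact minimizer of $\ell_\lambda$ trivially satisfies the required near-minimality. Parameter convergence then transfers to the stated sup-norm conclusion through $\sup_{y \in \calY} \lvert F_{\hat\vartheta_d}(y) - F_{\vartheta_d^*}(y) \rvert \le C \norm{\hat\vartheta_d^{n,\lambda} - \vartheta_d^*}$, again using that $\Phi$ is Lipschitz and the Bernstein basis is uniformly bounded on $[0,1]$. I expect the main obstacle to be the uniform-in-$\vartheta$ control of the empirical HSIC: it is a V-statistic whose arguments themselves depend on $\vartheta$, so stochastic equicontinuity must be argued with care. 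The assumption that $k_R$ and $k_Z$ are bounded, continuously differentiable and have bounded derivatives is exactly what makes the Lipschitz-in-$\vartheta$ bound hold with a data-independent constant, reducing the uniform statement to pointwise convergence plus equicontinuity.
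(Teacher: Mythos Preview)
Your proposal is correct and follows essentially the same route as the paper's proof: identify $\vartheta^*$ as the unique (and hence well-separated) minimizer of the population objective via Theorem~\ref{thm:id}, Lemma~\ref{lem:cvm}, and the characteristic-kernel property; establish uniform convergence of the empirical loss by combining pointwise convergence with a data-independent Lipschitz bound in $\vartheta$ (the paper invokes \citet[Corollary~2.2]{newey1991uniform} and the HSIC bounds of \citet{mooij2016hsic} for exactly this step); conclude $\hat\vartheta^{n,\lambda}\xrightarrow{p}\vartheta^*$ by the standard argmin argument; and transfer to the sup-norm via Lipschitz continuity of $\Phi$ and boundedness of the Bernstein basis. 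The only cosmetic difference is that you sketch the CvM Lipschitz bound through the order-statistic representation, whereas the paper (Lemma~\ref{lem:cvm:lip}) works directly with the integral form; both are valid.
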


The procedure we use to estimate the interventional CDF under treatment $d$ for
$d\in\calD$, is given in Algorithm~\ref{alg:dive}. The idea is to estimate the
CDFs as in Definition~\ref{def:ipitiv} for a fixed penalty parameter $\lambda$,
then test whether the resulting iPIT residuals satisfy uniformity and
independence from $Z$ and finally either terminate the algorithm if neither test
rejects (at the pre-specified level $\alpha$) or repeat the same steps but with
an updated value for $\lambda$. Basing the stopping criterion on tests of
uniformity and independence is a heuristic and does not provide frequentist
error control. More formally, we initialize $\lambda$ (\code{initialize\_lambda}
routine) such that the two terms entering the loss function in
Definition~\ref{def:divl} are of the same size for the conditional distribution
functions (i.e.,~ $\hat\Prob(Y \leq y \given D = d)$) estimated via maximum
likelihood and using the same Bernstein basis as for DIVE. We then estimate the
interventional CDFs under treatment $d$, for all $d\in\calD$, for $\lambda$ as
in Definition~\ref{def:ipitiv}, evaluate the iPIT residuals and compute
$p$-values for both a Cram\'er--von Mises (CvM) test \citep{cramer1928}, which
tests for uniformity of the iPIT residuals, and an HSIC independence test
\citep{gretton2007hsic}, which tests for independence between $Z$ and the iPIT
residuals. If both $p$-values exceed a pre-specified level $\alpha$, the
procedure stops and the interventional CDFs are returned. Otherwise, if the
$p$-value of the HSIC test is lower (higher) than that of the CvM test,
$\lambda$ is increased (decreased) with a heuristic and the same procedure is
run again. In the case that the procedure does not terminate after a fixed
number of $T$ iterations, the last estimated CDFs and a warning of
non-convergence at level $\alpha$ are returned. Failure to converge, could
either be due to a type-I error in the tests, exceeding the computational budget
(maximum number of updates), in which case the algorithm could be re-started
with a different value for $\lambda$ and a larger maximum number of updates, or
at least one of the identifying assumptions (Theorem~\ref{thm:id}) are violated
(see Section~\ref{sec:discussion}).

\begin{algorithm}
\caption{\small
Distributional Instrumental Variables Estimation.
}\label{alg:dive}
\begin{algorithmic}[1]
\Require Data $\mathcal{D}_n$, maximum number of restarts $T \in \mathbb{N}$,
significance level $\alpha \in (0, 1)$, sequence of step sizes
$\nu_1,\ldots,\nu_T>0$.
\State $t \gets 1$ 
\State $\lambda \gets$ \code{initialize\_lambda(}$\mathcal{D}_n$\code{)}
\While{$t \leq T$}
\State Obtain $(\hat{F}_0^{n,\lambda}, \hat{F}_1^{n,\lambda})$ as in 
    Definition~\ref{def:ipitiv} using SGD
\State $R_n \gets (\hat{F}^{n,\lambda}_{D_i}(Y_i))_{i=1}^n$
\State $p_U\gets$ $p$-value of CvM test for uniformity of $R_n$
\State $p_I\gets$ $p$-value of HSIC test for independence of $R_n$ and $(Z_i)_{i=1}^n$ with kernels $k_R$, $k_Z$
\If{$\min(p_U, p_I) > \alpha$} 
\State \Return{$(\hat{F}_0^{n,\lambda}, \hat{F}_1^{n,\lambda})$} 
\EndIf
\State $\lambda \gets \lambda (1 + \nu_t)^{2\1(p_I \leq p_U) - 1}$ 
\State $t \gets t + 1$
\EndWhile
\State 
\Return{$(\hat{F}_0^{n,\lambda}, \hat{F}_1^{n,\lambda})$ and 
`\textit{Warning:} DIVE did not converge.'} 
\end{algorithmic}
\end{algorithm}

If Algorithm~\ref{alg:dive} fails to terminate within the specified number of
restarts, it could indicate that some of the identifying assumptions in
Theorem~\ref{thm:id} are not satisfied. For instance, if $Z$ is not a valid
instrument ($N_Z \not\indep N_Y \given N_H$ or $N_Z \not\indep N_H$ in
Setting~\ref{setting}) or if (conditional) rank similarity
(Assumption~\ref{asmp:crs}) is violated, even the population iPIT residuals
could either be dependent on the instrument or not uniformly distributed,
leading the test to reject. Therefore, if the algorithm does not terminate
within the specified number of restarts, one should treat the resulting estimate
with care. 

\subsection{Computational details}

DIVE is implemented in the \proglang{R} language for statistical computing
\citep{pkg:base} and publicly available at
\url{https://github.com/LucasKook/dive}. The implementation of DIVE relies on
Bernstein and nonparametric transformation models trainable via gradient descent
implemented in \pkg{deeptrafo} \citep{pkg:deeptrafo}, using \pkg{tensorflow}
\citep{pkg:tensorflow} and the Adam optimizer \citep{Kingma2015adam}. The HSIC
criterion and test are implemented in \pkg{dHSIC} \citep{pkg:dHSIC} and the CvM
test is implemented in \pkg{goftest} \citep{pkg:goftest}. We use linear IVQR
implemented in the \pkg{IVQR} \proglang{R}~package available at
\url{https://github.com/yuchang0321/IVQR}. 

\section{Numerical experiments}\label{sec:emp}

\subsection{Simulation study}\label{sec:sim}

We investigate the following simulation scenarios within Setting~\ref{setting}
and describe comparator methods and evaluation metrics below. We consider sample
sizes $n \in \{100, 200, 400, 800, 1600\}$ with $M = 50$ and, to run
Algorithm~\ref{alg:dive}, we use $\alpha = 0.1$, $T = 10$, gradient descent
using Adam with learning rate $0.1$, callbacks to reduce the learning rate
(patience 20, tolerance $0.001$), stop early (patience 60, tolerance $0.0001$)
and step sizes $\nu_t = \nu / t$ for $\nu = 5$ and $t \in \{1, \dots, 10\}$. The
observational and interventional CDFs under treatment $d$, for all $d\in\calD$,
for each scenario are depicted in Figure~\ref{fig:simscen} in
Appendix~\ref{app:additional}. All scenarios fulfill Assumptions~\ref{asmp:crs}
\ref{asmp:crs:rsim}.

\begin{scenario}[Linear additive noise]\sloppy
Consider an IV-SCM with $N_Z, N_H, N_D \sim \logistic(0,1)$ mutually
independent, $f(H, Z, N_D) \coloneqq \1(4Z + 4H \geq N_D)$, $g_0(H, N_Y)
\coloneqq -10 + 6H$ and $g_1(H, N_Y) \coloneqq -2 + 6H$.
\end{scenario}

\begin{scenario}[Linear additive heteroscedastic noise]\sloppy
Consider an IV-SCM with $N_Z, N_H, N_D, N_Y \sim \logistic(0,1)$ mutually
independent, $f(H, Z, N_D) \coloneqq \1(4Z + 4H \geq N_D)$, $g_0(H, N_Y)
\coloneqq 6H + H N_Y$ and $g_1(H, N_Y) \coloneqq 16 + 6H + H N_Y$.
\end{scenario}

\begin{scenario}[Non-linear non-additive noise]\sloppy
Consider Setting~\ref{setting} with $N_Z, N_H, N_D \sim \logistic(0,1)$ mutually
independent, $f(H, Z, N_D) \coloneqq \1(4Z + 4 H \geq N_D)$, $g_0(H, N_Y)
\coloneqq \log(1 + \exp(18 + 6H))$ and $g_1(H, N_Y) \coloneqq \log(1+\exp( 26 +
6H))$.
\end{scenario}

\begin{scenario}[Linear additive noise with crossing]\sloppy
Consider an IV-SCM with $N_Z, N_H, N_D \sim \logistic(0,1)$ mutually
independent, $f(H, Z, N_D) \coloneqq \1(4Z + 4 H \geq N_D)$, $g_0(H, N_Y)
\coloneqq 6 H$ and $g_1(H, N_Y) \coloneqq 9 H - 6$.
\end{scenario}

We use a Bernstein polynomial of the same order as DIVE to estimate the
conditional CDF (CCDF) of $Y$ given $D$ to compare against the observational
conditional distribution. Due to the unobserved confounding this method is
biased.

Given a sample $\{(Z_i, D_i, Y_i)\}_{i=1}^n$ from one of Scenarios~1--4 above,
to quantify the discrepancy between the true interventional CDFs $(F^*_0,
F_1^*)$ and DIVE $(\hat{F}_0^{n,\lambda}, \hat F_1^{n,\lambda})$, we consider
the the mean squared error, which is given by 
\begin{align}
    \operatorname{MSE}(\hat{F}_0^{n,\lambda}, \hat{F}_1^{n,\lambda})\coloneqq
    \frac{1}{n} \sum_{i=1}^n (F_{D_i}^*(Y_i)-\hat{F}_{D_i}^{n,\lambda}(Y_i))^2,
\end{align}
and the maximum absolute error,
\begin{align}
    \operatorname{MAE}(\hat{F}^{n,\lambda}_0, \hat{F}^{n,\lambda}_1)
    \coloneqq\max_{i \in \{1, \dots, n\}} \lvert F_{D_i}^*(Y_i) -
    \hat{F}^{n,\lambda}_{D_i}(Y_i) \rvert,
\end{align}
which measures the largest distance between the true interventional CDFs and
the estimated CDFs.

Figure~\ref{fig:simres} summarizes the performance of DIVE and CCDF in terms of
MSE and MAE for increasing sample sizes and the scenarios described above. In
all scenarios, the estimation error for DIVE decreases linearly (on a
double-logarithmic scale). In contrast, for the CCDF the estimation error
converges to a constant, indicating that it converges to the true conditional
CDFs which are different from the interventional CDFs due to the unobserved
confounding.

\begin{figure}[t!]
\centering
\includegraphics[width=0.9\textwidth]{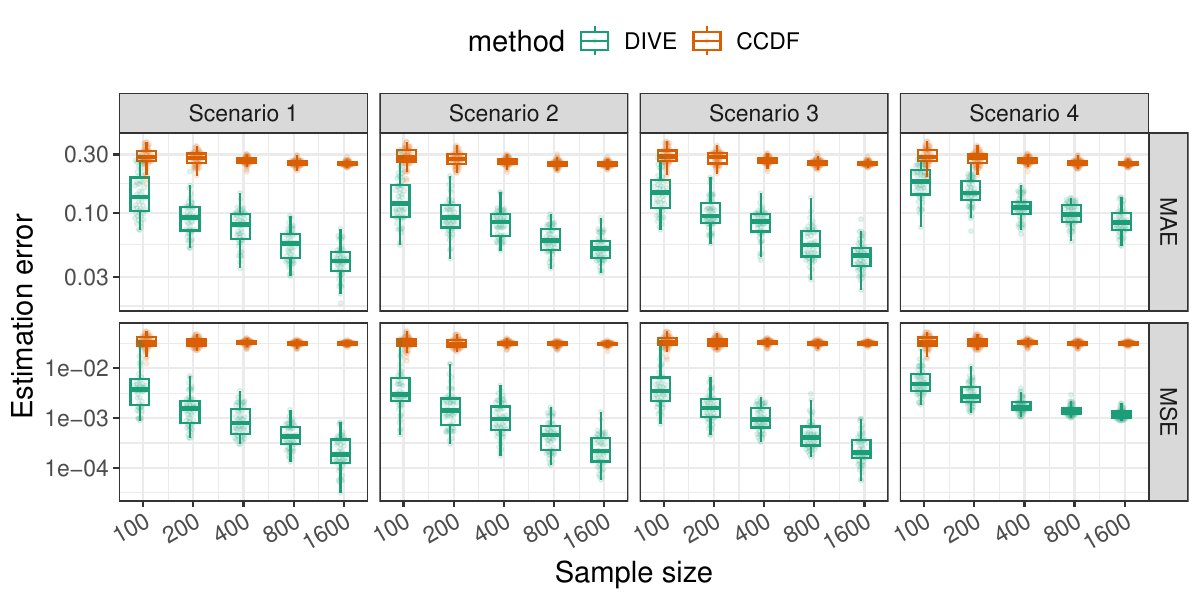}
\caption{%
Results for the simulation scenarios described in Section~\ref{sec:sim} for DIVE
and CCDF in terms of mean squared error and maximum absolute error compared to
the true interventional CDFs.
}\label{fig:simres}
\end{figure}

\subsection{Distributional causal effect of residing in a metropolitan area on
wage}\label{sec:schooling}

We apply DIVE to the \code{SchoolingReturns} dataset \citep{card1993using}
contained in the \proglang{R}~package \pkg{ivreg}. We model the distribution of
the log-transformed wage depending on whether an individual resided in a
standard metropolitan statistical area in 1976. The instrument is the binary
indicator of whether or not an individual grew up near a four-year college.
We use this classical IV dataset to showcase DIVE in a setting in which it can
be compared with competing methods. We do not aim to make any
subject matter claims about true causal effects. IV analysis requires careful
choice of instruments for which the identifying assumptions are thought to be
met and estimated causal effects ought to be interpreted as evidential.
We use the same settings for DIVE as in Section~\ref{sec:sim} but with $M=20$.
Figure~\ref{fig:schooling} summarizes the results in terms of the estimated CDFs
(\textsf{A}), uniformity of the empirical iPIT residuals and their independence
of the instrument (\textsf{B}), the estimated distributional causal effect
(\textsf{C}) and a comparison with two-stage least squares (2SLS) and a linear
model for estimating the average causal effect (\textsf{D}). 

\begin{figure}[t!]
\centering
\includegraphics[width=0.99\textwidth]{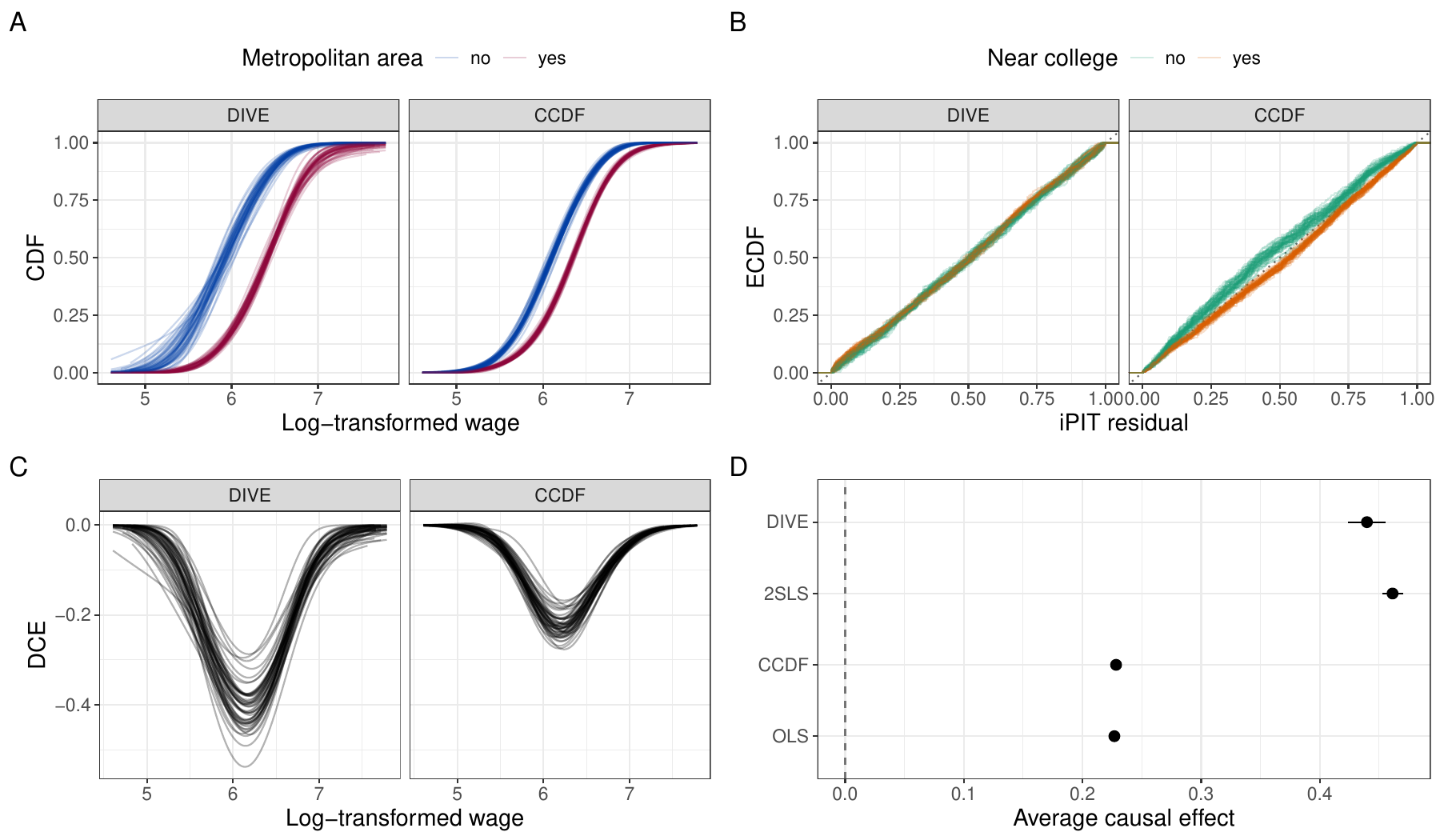}
\caption{%
Results for the schooling return dataset from Section~\ref{sec:schooling} for 50
subsamples of the data of size 1000. \textsf{A}: Estimated CDFs of
log-transformed wage by whether an individual resides in a
metropolitan area or not for DIVE and CCDF. \textsf{B}: ECDFs of the empirical
(interventional) PIT residuals stratified by the instrument (whether a person
grew up close to a four-year college or not). \textsf{C}: Distributional causal
effect estimated via DIVE and CCDF. \textsf{D}: Comparison of the estimated
average causal effect for DIVE, CCDF, OLS and 2SLS (obtained via numerical
integration for DIVE and CCDF). Standard errors are based on 50 subsamples.
}\label{fig:schooling}
\end{figure}

DIVE estimates a larger distributional causal effect for all wages compared to
the observational conditional distribution, which also translates into a larger
estimated average causal effect.
The iPIT residuals based on the fitted DIVE
satisfy that both a CvM test for uniformity and an HSIC test for independence to
the instrument (growing up near a four-year college) are not rejected.
Moreover, while DIVE and 2SLS agree in terms of direction and magnitude of the
average causal effect, the distributional causal effect estimates shows that the
effect is larger for lower wages. This illustrates how distributional causal
effects can provide further insights into the effect of living in a metropolitan
area on wages that cannot be captured by considering the average causal effect.

The estimated interventional CDFs can be inverted to obtain an estimate of the
quantile causal effect, which allows the comparison of DIVE and IVQR.
Figure~\ref{fig:qce} (a) shows the estimated QCEs for DIVE and IVQR. Overall,
there is large agreement between IVQR and DIVE over 50 subsamples of the data of
sample size 1000. Since IVQR requires a grid search over both coefficients and
quantiles, the resulting effect estimates may lead to quantile crossing
\citep{bassett1982}, which cannot appear using DIVE. The IVQR estimates show
larger variation, especially for more extreme quantiles (at 0.05 and 0.95),
compared to DIVE. 

\begin{figure}[t!]
    \centering
    \begin{subfigure}{0.49\textwidth}
    \includegraphics[width=\linewidth]{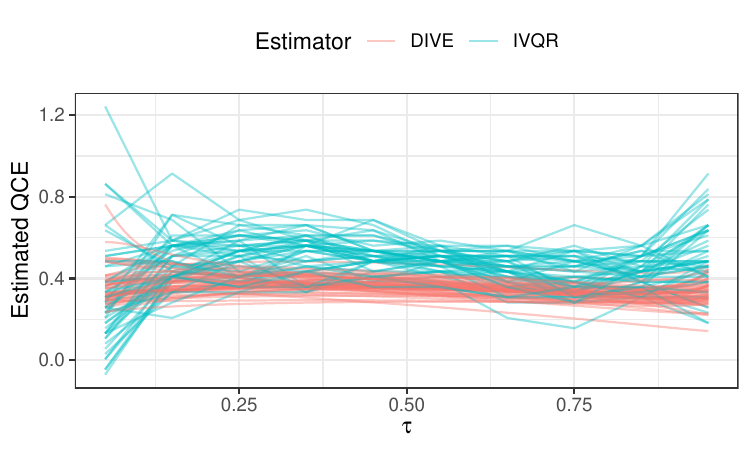}
    \subcaption{Schooling data}
    \end{subfigure}
    \begin{subfigure}{0.49\textwidth}
    \includegraphics[width=\linewidth]{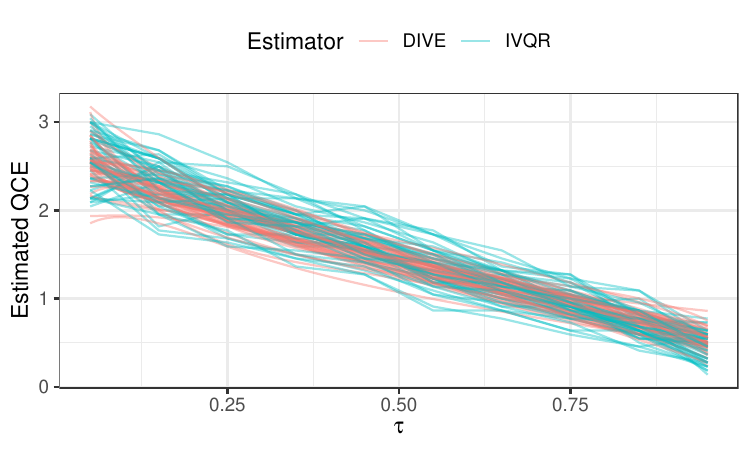}
    \subcaption{401(k) data}
    \end{subfigure}
    \caption{%
    Estimated quantile treatment effects for 50 subsamples of the schooling and
    401(k) datasets of size 1000 using DIVE and IVQR with linear quantile
    regression for $\tau \in (0.1, 0.9)$. 
    }\label{fig:qce}
\end{figure}

\subsection{Distributional causal effect of 401(k) participation on
net financial assets}\label{sec:401k}

We re-analyze the 401(k) dataset \citep{chernozhukov2004effects} using DIVE. We
model the distribution of log-transformed net total financial assets depending
on whether an individual participated in a 401(k) plan or not. The instrument is
the binary indicator of whether or not an individual was eligible to participate
in a 401(k) plan. Figure~\ref{fig:401k} summarizes the results. We use the same
settings for DIVE as in Section~\ref{sec:sim} but with $M=10$.

On the 401(k) data, DIVE estimates a larger distributional causal effect for all
wages compared to the observational conditional distribution, which again
translates in a larger estimated average causal effect. DIVE and 2SLS agree
closely in terms of direction and magnitude of the average causal effect. Also
in this application, the distributional causal effect estimates show that the
effect is larger for lower net total financial assets.
The interpretation of distributional causal effects pertains to populations and
cannot be interpreted as individual causal effect (this would be possible under
Assumption~\ref{asmp:crs}~\ref{asmp:crs:rinv} and, thus, access to the
unobservable joint distribution of individual responses under both treatments).

\begin{figure}[t!]
\centering
\includegraphics[width=0.99\textwidth]{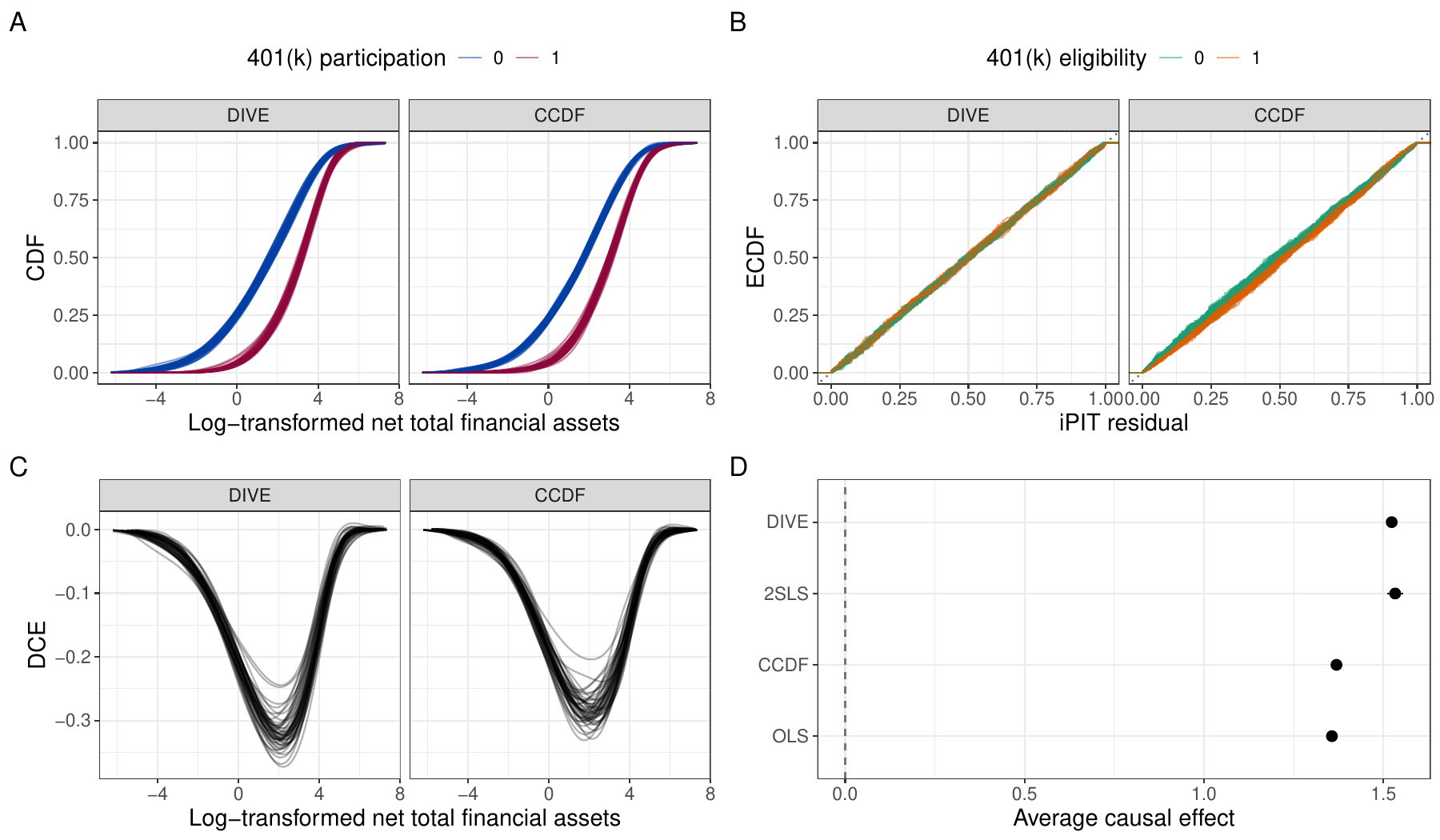}
\caption{%
Results for the 401(k) dataset from Section~\ref{sec:401k} for 50 subsamples of
the data of size 1000. \textsf{A}: Estimated CDFs of net total financial assets
by whether an individual participates in a 401(k) plan or not for DIVE and CCDF.
\textsf{B}: ECDFs of the empirical (interventional) PIT residuals stratified by
the instrument (whether a person is eligible to participate in a 401(k) plan).
\textsf{C}: Distributional causal effect estimated via DIVE and CCDF.
\textsf{D}: Comparison of the estimated average causal effect for DIVE, CCDF,
OLS and 2SLS (obtained via numerical integration for DIVE and CCDF). Standard
errors are based on 50 subsamples.
}\label{fig:401k}
\end{figure}

Figure~\ref{fig:qce} (b) shows the estimated QCE for both DIVE and IVQR. For
both methods, the estimated quantile causal effect decreases with increasing
quantiles. Similar to the schooling dataset, the variability of the IVQR
estimates is larger than that of DIVE over 50 subsamples of the dataset of size
1000.

\section{Discussion and outlook}\label{sec:discussion}

We propose a method for identifying and estimating distributional causal effects
using IV based on two properties of the iPIT residuals: Under the assumptions
stated in Theorem~\ref{thm:id}, only the interventional CDFs given the random
observed treatment $D$ produces iPIT residuals that are both uniformly
distributed and independent of the instrument. As shown both in simulations and
in real-data applications, the resulting estimator, DIVE, can be estimated
consistently and provides valuable information beyond the average causal effect.
Further, DIVE circumvents the issue of quantile crossing \citep{bassett1982} by
employing basis expansions instead of estimation over a grid of quantiles.

IVQR is the most similar method that applies in our setting (binary treatment
and absolutely continuous response). In contrast to our proposed IV-SCM, it is
based on the potential outcome model and uses a different identification
strategy. However, as shown in Proposition~\ref{prop:equiv_final} the two models
are related: For any IVQR model in which the structural quantile function is
identified, there exists a corresponding IV-SCM with the same observational and
interventional distributions (w.r.t.\ interventions on the treatment). Thus,
under appropriate relevance conditions, whenever IVQR is applicable, DIVE is
too. While available implementations of IVQR are limited to linear quantile
regression, they provide inference on the estimated quantile treatment effect
(for a fixed quantile) based on asymptotic normality established in
\citet{chern2007ivquantileinference}.

In this work, we did not consider adjusting for additional observed covariates.
A possible extension of DIVE would allow for this by estimating the conditional
interventional CDFs under treatment and control of the response given some
features, for which normalizing flows \citep{papamakarios2021} or other
nonparametric distributional regression methods could be used. 
When considering conditional IV, the choice between a distributional or
quantile-based estimand becomes important. This is because, in general, an
average quantile function and the corresponding inverse average CDF need not
agree. Both theoretical and computational advantages and disadvantages have been
discussed in \citet{leorato2015qdr}. Furthermore, it is an open problem whether
our identification strategy can be extended to continuous-valued treatments and
whether the control function approach proposed in
\citet{imbens2009identification} or the recently proposed copula-based approach
\citep{chernozhukov2024estimating} can be adapted to our setting. We further
discuss potential extensions of DIVE for censored and discrete responses in
Appendix~\ref{app:discrete}. Whether asymptotic confidence intervals,
analogously to IVQR, for the distributional causal effect at pre-specified
values of the response can be obtained is also left for future work.

\subsection*{Acknowledgments}
We thank Anton Rask Lundborg for fruitful discussions. LK was supported by the
Swiss National Science Foundation (Grant number 214457). NP was supported by a
research grant (0069071) from Novo Nordisk Fonden.

\vskip 0.2in
\bibliographystyle{abbrvnat}
\bibliography{ref}

\appendix
\renewcommand{\thesection}{\Alph{section}}
\counterwithin{figure}{section}
\renewcommand\thefigure{\thesection\arabic{figure}}
\counterwithin{table}{section}
\renewcommand\thetable{\thesection\arabic{table}}

\section{DIVE for discrete responses}\label{app:discrete}

In Setting~\ref{setting}, absolute continuity of $Y$ is a sufficient condition
to ensure that the iPIT residuals are uniformly distributed and indepdent of
$Z$. In Example~\ref{ex:non_binary} below, we present a setting with a binary
response $Y$ depending on a binary treatment $D$ without unmeasured confounding
and a binary instrument $Z$, in which $Y$ given $D$ is correctly specified by a
logistic regression. However, the iPIT residuals based on the correctly
specified model are neither uniform nor independent of $Z$. We leave extensions
to discrete and censored cases for future work.

\begin{example}
\label{ex:non_binary}
Consider the following IV-SCM
\begin{align}
    Z &\coloneqq 2 \1(N_Z \geq 0) - 1\\
    D &\coloneqq \1(Z + N_D \geq 0)\\
    Y &\coloneqq \1(-1 + D + N_Y \geq 0),
\end{align}
where $N_Z, N_D, N_Y$ are jointly independent and standard logistically
distributed. Here, for all $d \in \calD$ it holds that $F_d^*(0)=\mathbb{P}(-1
+ d + N_Y \leq 0)$ and $F_d^*(1)=1$. The iPIT residuals are therefore given by
\begin{align}
(1 - Y) F_D^*(Y) + Y = \begin{cases}
    \expit(1) & \mbox{if } D = 0, Y = 0,\\
    0.5 & \mbox{if } D = 1, Y = 0,\\
    1 & \mbox{if } D \in \{0, 1\}, Y = 1.
\end{cases}
\end{align}
This further implies that $(1-Y)F_D^*(Y)+Y\geq 0.5$ almost surely.
We can now compute
\begin{align}
    \Ex[Z((1 - Y)F_D^*(Y) + Y)] &= 
    \Prob(Z = 1) \Ex[(1 - Y)F_D^*(Y) + Y \given Z = 1] \\
    &= 0.5^2 \Ex[(1 - Y)F_D^*(Y) + Y \given D = 1, Z = 1] \ + \\
        &\quad\quad 0.5^2 \Ex[(1 - Y)F_D^*(Y) + Y \given D = 0, Z = 1] \\
    &= 0.5^2 \Ex[(1 - Y)F_D^*(Y) + Y \given D = 1] \ +\\
        &\quad\quad 0.5^2 \Ex[(1 - Y)F_D^*(Y) + Y \given D = 0] > 0.
\end{align}
Due to the discrete nature of the response, $F_D^*(Y)$ is also not uniformly
distributed. Further, contrary to the results obtained for
Settings~\ref{setting} and~\ref{setting:ivqr}, $F_D^*(Y) \not\indep Z$.
\end{example}

\section{Rank assumptions: Examples}\label{app:cexs}

\begin{example}[Violation of rank invariance and rank similarity]
\label{ex:nobreak}
Consider the following IV-SCM
\begin{align}
    H &\coloneqq N_H\\
    D &\coloneqq \1(N_D \geq 0.2 + 0.6\1(H > 0))\\
    Y &\coloneqq - DH^2 + (1 - D)H^2,
\end{align}
where $N_H \sim \ND(0,1)$ and $N_D \sim \UD(0,1)$. We now discuss the three
types of rank assumptions mentioned above and summarize the example in
Figure~\ref{fig:ex:nobreak}.
\begin{figure}[!ht]
\centering
\includegraphics[width=\textwidth]{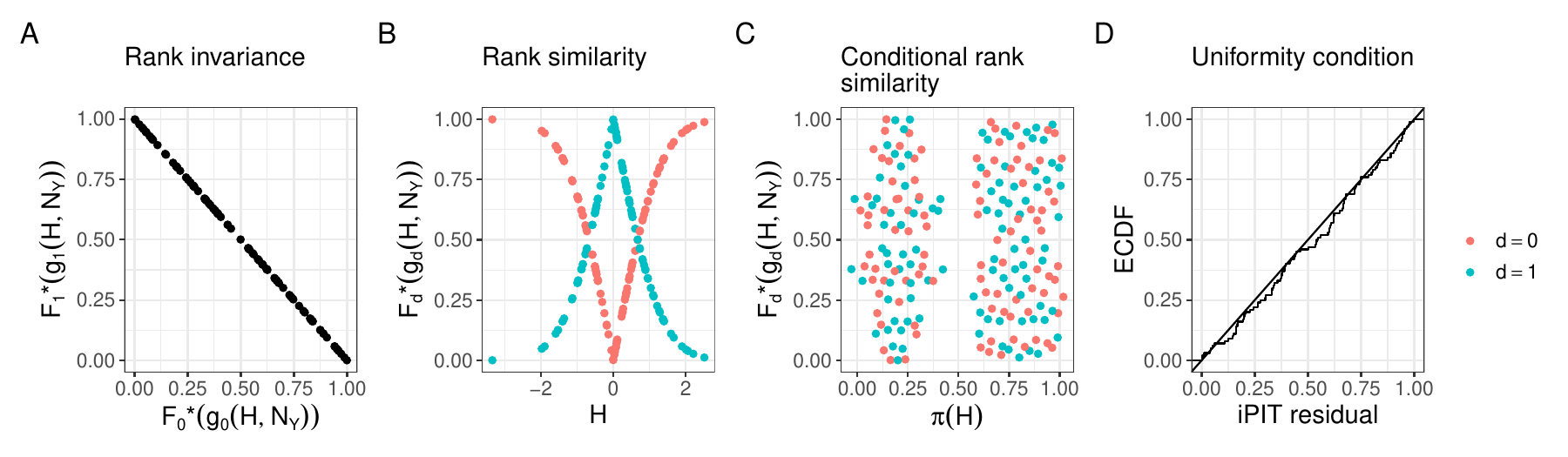}
\caption{%
Overview of the rank assumptions discussed in Example~\ref{ex:nobreak}.
A: Rank invariance does not hold.
B: Rank similarity does not hold.
C: Conditional rank similarity holds.
D: The iPIT residuals are uniform.
}\label{fig:ex:nobreak}
\end{figure}
Here, we have for all $d\in\calD$ and $h\in\RR$ that 
\[
g_d(h)=-dh^2+(1-d)h^2
\]
and for all $y\in\calY$ that 
\[
F_0^*(\ry)=F_{\chi^2_1}(\ry) \mbox{ and } F_1^*(\ry)= 1 - F_{\chi^2_1}(-\ry),
\]
where $F_{\chi^2_1}$ is the CDF of a chi-squared distribution with one degree of
freedom. We thus have 
\begin{align}\label{term:uneq}
F_0^*(g_0(H)) = 1 - F_1^*(g_1(H)) \mbox{ almost surely,}
\end{align}
and 
\[
F_D^*(g_D(H)) = F_0^*(g_0(H)) + D(F_1^*(g_1(H)) - F_0^*(g_0(H))).
\]
Now, define $W=F_1^*(g_1(H)) - F_0^*(g_0(H))$ and $M=F_0^*(g_0(H))$.
Then,
\begin{align}\label{term:ex:1}
E[WD+M\mid D=0, H]&=E[M\mid D=0, H]=M, \mbox{ and} \\
\label{term:ex:2}
E[WD+M\mid D=1, H]&=E[W+M\mid D=0, H]=W+M,
\end{align}
since both $W$ and $M$ are $H$-measurable. By \eqref{term:uneq}, we conclude
\eqref{term:ex:1} and \eqref{term:ex:2} are unequal almost surely and, thus,
rank invariance is violated. By Proposition~\ref{prop:rankasmp}, rank invariance
is violated too. We now turn to showing that conditional rank similarity is
fulfilled. The treatment propensity is given by 
\[
\pi(H) = 0.2 + 0.6\1(H > 0).
\]
In this specific example, it holds that
\begin{align}\label{ci:H2p}
H^2 \indep \pi(H).
\end{align}
To see this, observe that for all $h > 0$,
\[
\Prob(H^2 \leq h \given \pi(H) = 0.2) = 
\Prob(H^2 \leq h \given H \leq 0) = \Prob(H^2 \leq h) = F_{\chi^2_1}(h),
\]
and 
\[
\Prob(H^2 \leq h \given \pi(H) = 0.8) = 
\Prob(H^2 \leq h \given H > 0) = \Prob(H^2 \leq h) = F_{\chi^2_1}(h).
\]
Moreover, since $D\indep H\mid \pi(H)$ (Lemma~\ref{lem:balancing}), it also
holds that
\begin{align}\label{ci:H2}
D \indep H^2 \given \pi(H).   
\end{align}
Together, \eqref{ci:H2p} and \eqref{ci:H2} imply $(D, \pi(H)) \indep H^2$ by
contraction (Lemma~\ref{lem:ci}). Thus, for all $t\in(0,1)$,
\begin{align*}
\Prob(DW + M \leq t \given \pi(H), D=0) 
&= 
\Prob(M \leq t \given \pi(H), D=0) 
\\
&= 
\Prob(F_{\chi^2_1}(H^2) \leq t \given \pi(H), D = 0)
\\
&= \Prob(F_{\chi^2_1}(H^2) \leq t) = t,
\end{align*}
and
\begin{align*}
\Prob(DW + M \leq t \given \pi(H), D=1) 
&=
\Prob(W + M \leq t \given \pi(H), D=1) 
\\
&= \Prob(1 - F_{\chi^2_1}(H^2) \leq t \given \pi(H), D=1)
\\
&= \Prob(F_{\chi^2_1}(H^2) \leq t) = t,
\end{align*}
where we used that since $U \sim \UD(0, 1)$ it also holds that $1 - U \sim
\UD(0, 1)$. Thus, conditional rank invariance, $F_D^*(g_D(H)) \indep D \given
\pi(H)$, holds.
\end{example}

\begin{example}[Violation of all rank assumptions]\label{ex:break}
Consider the following IV-SCM
\begin{align*}
    H &\coloneqq N_H\\
    D &\coloneqq \1(N_D \geq H)\\
    Y &\coloneqq 2DH - H,
\end{align*}
where $N_H \sim \ND(0, 1)$ and $N_D$ follows a standard logistic distribution.
We now discuss the three types of rank assumptions mentioned above and summarize
the example in Figure~\ref{fig:ex:break}.
\begin{figure}[!ht]
\centering
\includegraphics[width=\textwidth]{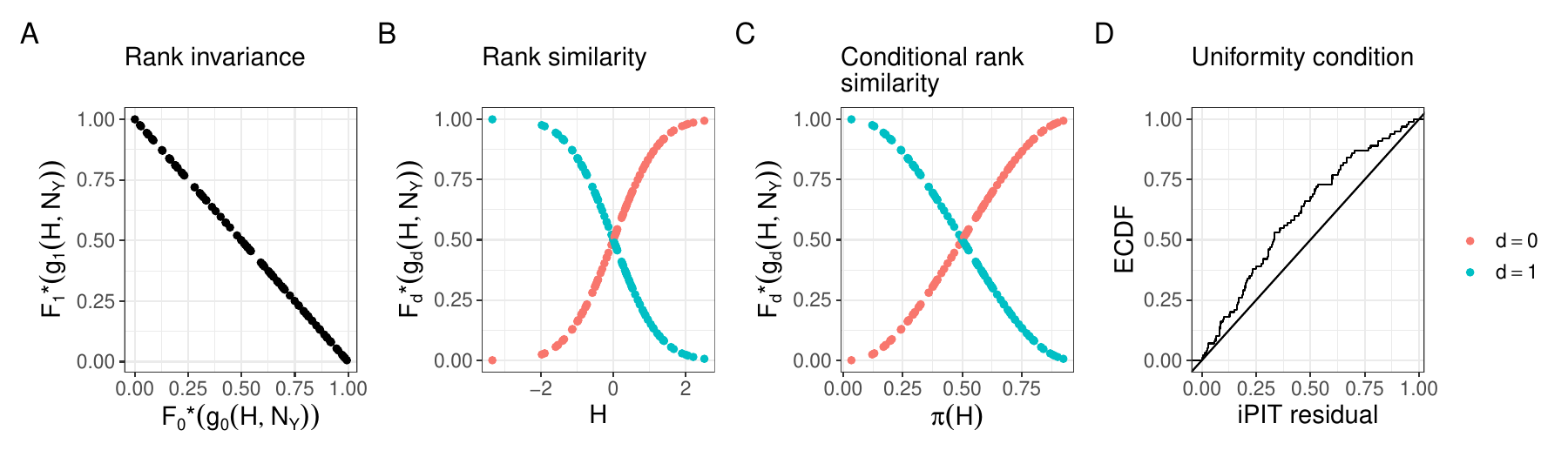}
\caption{%
Overview of the rank assumptions discussed in Example~\ref{ex:break}.
A: Rank invariance does not hold.
B: Rank similarity does not hold.
C: Conditional rank similarity does not hold.
D: The iPIT residuals are not uniform.
}\label{fig:ex:break}
\end{figure}
Here, we have for all $d\in\calD$ and $h\in\calH$ that $g_d(h)= 2dh-h$
and for all $y\in\calY$ that $F_0^*(\ry)=\Phi(-h)=1-\Phi(h)$ and $F_1^*(\ry) =
\Phi(h)$.
Conditional rank similarity is not fulfilled, because we have $\pi(H) =
\expit(H)$ and thus $\Prob(F^*_0(g_0(H)) \leq t \given \pi(H)) =
\Prob(F^*_0(g_0(H)) \leq t \given H) = \Prob[\Phi(-H) \leq t \given H) = 1 -
\Prob(\Phi(H) \leq t \given H)$ and $\Prob(F^*_1(g_1(H)) \leq t \given H) =
\Prob(\Phi(H) \leq t \given H)$ 
By the implications in Proposition~\ref{prop:rankasmp}, rank similarity and rank
invariance are violated as well. 
\end{example}

\section{Additional numerical results}\label{app:additional}

\subsection{Simulation scenarios}\label{app:simscen}

Figure~\ref{fig:simscen} depicts the observational and interventional CDFs under
treatments $0$ and $1$ for the simulation scenarios described in
Section~\ref{sec:sim}.

\begin{figure}[t!]
\centering
\includegraphics[width=0.9\textwidth]{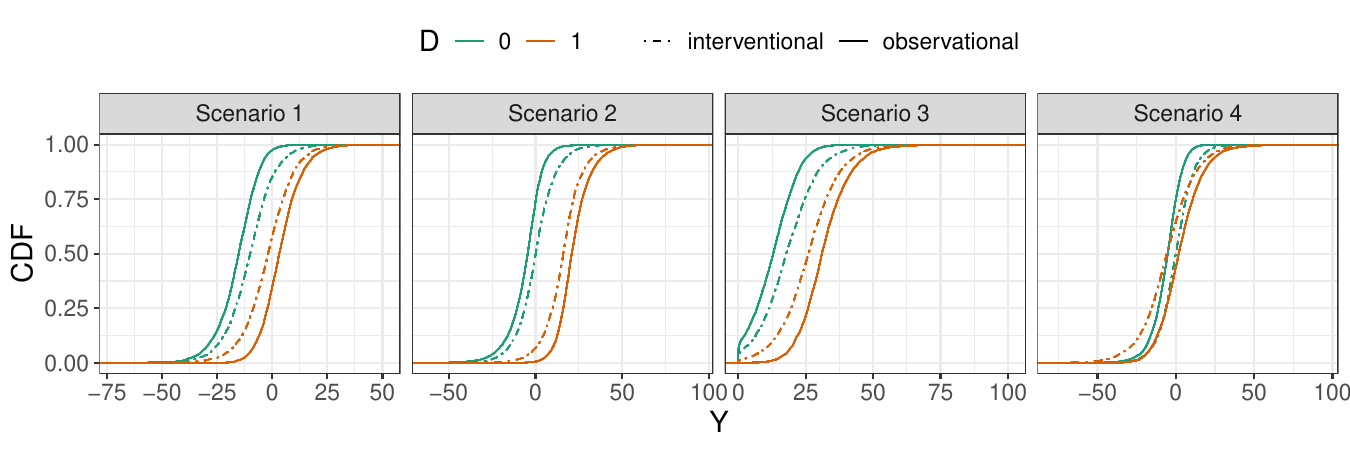}
\caption{%
Observational and interventional CDFs for the simulation scenarios from
Section~\ref{sec:sim}. In the first scenario, the interventional CDFs under both
treatments are logistic. In the second, the interventional CDFs under both
treatments are an average of logistic CDFs. In the third scenario, a non-linear
transformation bounds the outcome from below at zero. The fourth scenario is
considerably more challenging as the observational CDFs cross at lower values of
the response whereas the interventional CDFs cross at larger values of the
response.
}
\label{fig:simscen}
\end{figure}

\subsection{Combining uniformity and independence losses}\label{app:sigma}

Figure~\ref{fig:app:max} shows the results when using $\aggr(a,b) \coloneqq
\max\{a, b\}$ in the DIV loss (Definition~\ref{def:divl}) instead of the
weighted sum used in the main text. There is no indication that the choice of
$\aggr$ changes the interpretation of the results.

\begin{figure}[!ht]
\centering
\begin{subfigure}{0.99\textwidth}
\includegraphics[width=0.99\textwidth]{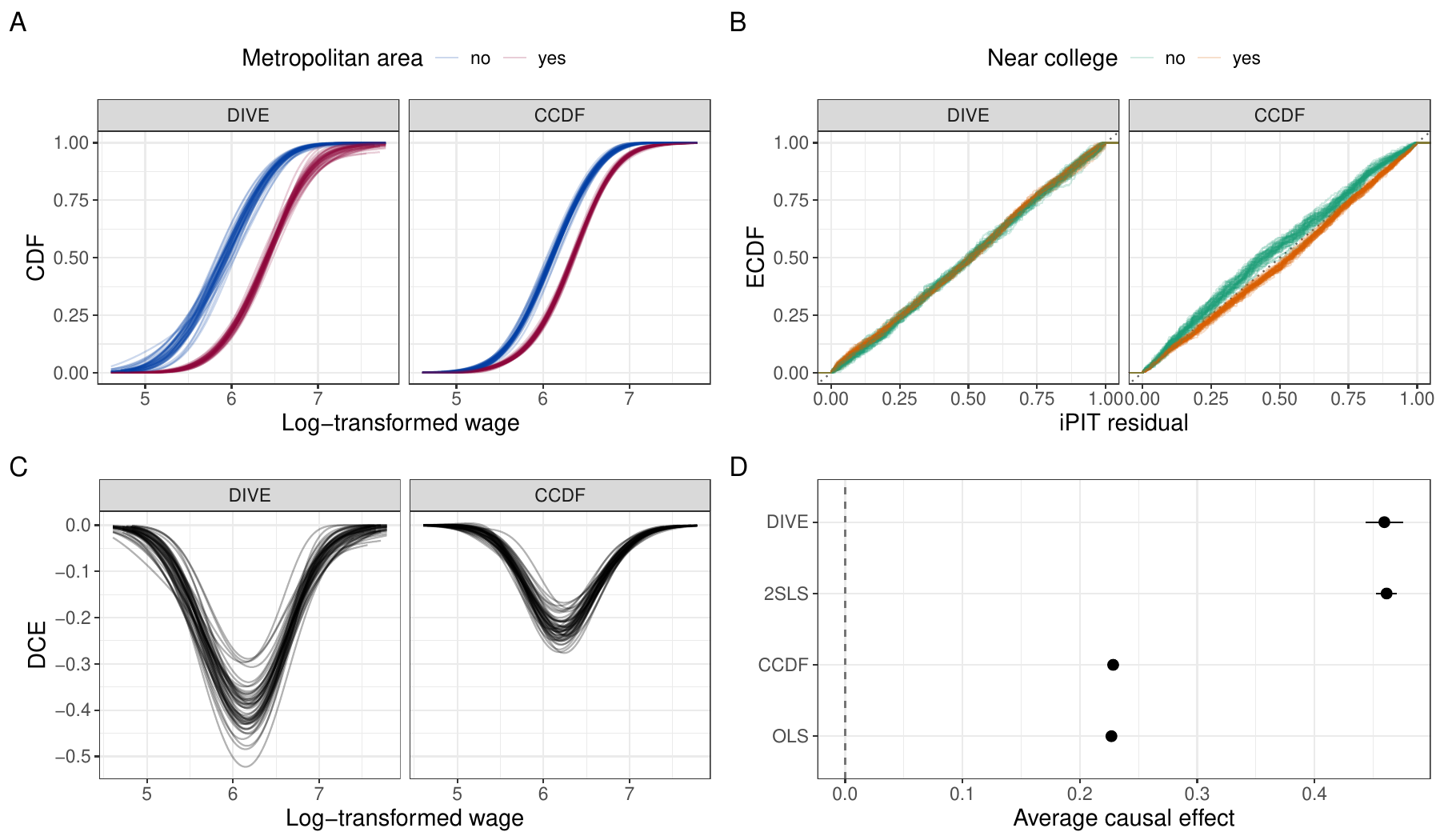}
\subcaption{Schooling data}
\end{subfigure}
\begin{subfigure}{0.99\textwidth}
\includegraphics[width=0.99\textwidth]{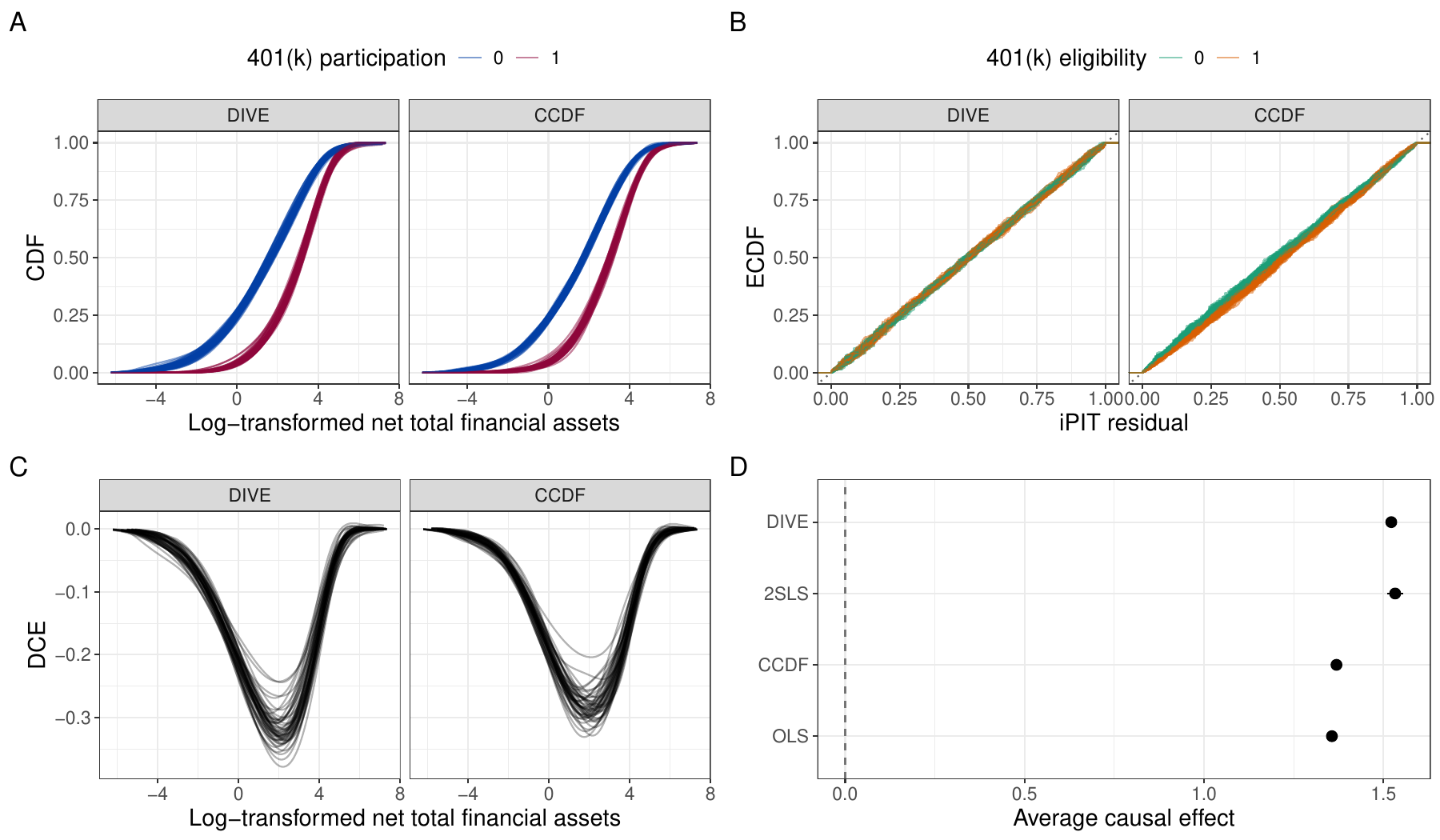}
\subcaption{401(k) data}
\end{subfigure}
\caption{%
Results for the schooling return dataset from Section~\ref{sec:schooling} 
and 401(k) dataset from Section~\ref{sec:401k} for 50 subsamples of the
data of size 1000 using $\aggr(a,b) \coloneqq \max\{a, b\}$.
}
\label{fig:app:max}
\end{figure}

\clearpage

\section{Proofs}\label{app:proofs}

\subsection{Proof of Theorem~\ref{thm:id}} \label{proof:thm:id}

\begin{proof}
We begin with the backward direction. Using the adjustment formula
\citep[e.g.,][]{pearl2009causality}, for all $d\in\calD$, the interventional CDF
under treatment $d$ of $Y$ can be expressed, for all $y\in\calY$, as
\begin{align}\label{eq:icdf}
F^*_d(y) = \Ex[\1(g_d(H, N_Y) \leq y)].
\end{align}

We fist show the case where Assumptions~\ref{asmp:pcu}
and~\ref{asmp:crs}~\ref{asmp:crs:crs} are satisfied. For all $t \in (0, 1)$, we
have
\begin{align}
&\Prob(F_D^*(Y) \leq t) \\
\label{from:def}
&= \Ex[\1(F^*_D(Y) \leq t)] \\
\label{from:scm}
&= \Ex[\1(F^*_D(g_D(H, N_Y)) \leq t)] \\
\label{from:tower}
&= \Ex[\Ex[D \1(F^*_1(g_1(H, N_Y)) \leq t) + 
    (1 - D) \1(F^*_0(g_0(H, N_Y)) \given H, N_Y]] \\
\label{from:hmeasurable}
&= \Ex[\Ex[D \given H, N_Y] \1(F^*_1(g_1(H, N_Y)) \leq t) + 
    \Ex[(1 - D) \given H, N_Y]\1(F^*_0(g_0(H, N_Y)) \leq t)] \\
\label{from:qmeasurable}
&= \Ex[\Ex[\pi(H)\1(F^*_1(g_1(H, N_Y)) \leq t) + 
    (1 - \pi(H))\1(F^*_0(g_0(H, N_Y)) \leq t) \given \pi(H)]] \\
\label{from:qmeasurable2}
&= \Ex[\pi(H)\Ex[\1(F^*_1(g_1(H, N_Y)) \leq t)\given \pi(H)] + 
    (1 - \pi(H))\Ex[\1(F^*_0(g_0(H, N_Y)) \leq t) \given \pi(H)]] \\
\label{from:asmp}
&= \Ex[\pi(H)\Ex[\1(F^*_0(g_0(H, N_Y)) \leq t)\given \pi(H)] + 
    (1 - \pi(H))\Ex[\1(F^*_0(g_0(H, N_Y)) \leq t) \given \pi(H)]] \\
\label{from:rearrange}
&= 
\Prob(F^*_0(g_0(H, N_Y)) \leq t)
\\ 
\label{from:combine}
&= 
\Prob(g_0(H, N_Y) \leq (F_0^*)^{-1}(t))
\\
&= t. \label{from:cdf}
\end{align}
For these equalities, we used the following arguments: \eqref{from:def} follows
by expressing the probability as an expectation; \eqref{from:scm} follows from
substituting the structural equation of $Y$; \eqref{from:tower} follows by the
law of total expectation and since $D$ is binary; \eqref{from:hmeasurable}
follows since the arguments in the indicator function are $(H,N_Y)$-measurable;
\eqref{from:qmeasurable} follows by the law of total expectation;
\eqref{from:qmeasurable2} follows since $\pi(H)$ is $\pi(H)$-measurable;
\eqref{from:asmp} follows 
because, for all $d \in \calD$, $\Ex[F^*_D(Y) \given \pi(H)] = \Ex[F^*_D(Y)
\given \pi(H), D = d] = \Ex[F^*_d(Y) \given \pi(H), D = d]$
by Assumption~\ref{asmp:crs}~\ref{asmp:crs:crs};
\eqref{from:rearrange} follows by direct computation;
\eqref{from:combine} follows by taking the generalized
inverse of $F_{0}^*$ (which exists since $g_0(H, N_Y)$ is assumed to be
absolutely continuous; see Setting~\ref{setting})
on both sides of the inequality. We can hence conclude that $F_D^*(Y)$ is
uniformly distributed. 

To establish that $F_D^*(Y)$ is independent of $Z$, consider the interventional
distribution of $Y$ conditional on the instrument $Z$. For all $t \in (0, 1)$,
we have
\begin{align}
&\Prob(F_D^*(Y) \leq t \given Z) \\
\label{f:cdef}
&= \Ex[\1(F^*_D(Y) \leq t) \given Z] \\
\label{f:ctower}
&= \Ex[\Ex[\1(F^*_D(Y) \leq t) \given D, \pi(H), Z] \given Z] \\
\label{f:dbin}
&= \Ex[(1-D)\Ex[\1(F^*_0(Y) \leq t) \given D, \pi(H), Z] + 
    D\Ex[\1(F^*_1(Y) \leq t) \given D, \pi(H), Z] \given Z] \\
\label{f:asmp1}
&= \Ex[(1-D)\Ex[\1(F^*_0(Y) \leq t) \given D, \pi(H)] + 
    D\Ex[\1(F^*_1(Y) \leq t) \given D, \pi(H)] \given Z] \\
\label{f:comp}
&= \Ex[\Ex[\1(F^*_D(Y) \leq t) \given D, \pi(H)] \given Z] \\
\label{f:asmp2}
&= \Ex[\Ex[\1(F^*_D(Y) \leq t) \given \pi(H)] \given Z] \\
\label{f:HindepZ}
&= \Ex[\Ex[\1(F^*_D(Y) \leq t) \given \pi(H)]] \\
&= \Ex[\1(F^*_D(Y) \leq t)] = t
\end{align}
where \eqref{f:cdef} follows again by expressing the probability as an
expectation; \eqref{f:ctower} follows by the law of total expectation;
\eqref{f:dbin} follows since $D$ is binary; \eqref{f:asmp1} follows from
Assumption~\ref{asmp:pcu}; \eqref{f:comp} follows from direct computation;
\eqref{f:asmp2} follows from Assumption~\ref{asmp:crs}~\ref{asmp:crs:crs}; and
\eqref{f:HindepZ} follows by $N_H \indep N_Z$ (Setting~\ref{setting}). We thus
conclude $F_D^*(Y) \indep Z$, which completes the backward direction.

Next we show the same result when assuming Assumption~\ref{asmp:crs}
\ref{asmp:crs:rsim} instead of Assumptions~\ref{asmp:pcu}
and~\ref{asmp:crs}~\ref{asmp:crs:crs}. A similar line of argument as
\eqref{from:def}--\eqref{from:combine} to show uniformity of $F_D^*(Y)$ holds:
For all $t \in (0, 1)$, we have
\begin{align}
&\Prob(F_D^*(Y) \leq t) \\
&= \Ex[\Ex[D\1(F^*_1(g_1(H, N_Y)) \leq t) + 
    (1 - D)\1(F^*_0(g_0(H, N_Y)) \leq t) \given H]] \\
&= \Ex[\pi(H)\Ex[\1(F^*_1(g_1(H, N_Y)) \leq t)\given H] + 
    (1 - \pi(H))\Ex[\1(F^*_0(g_0(H, N_Y)) \leq t) \given H]] \\
&= \Ex[\pi(H)\Ex[\1(F^*_0(g_0(H, N_Y)) \leq t)\given H] + 
    (1 - \pi(H))\Ex[\1(F^*_0(g_0(H, N_Y)) \leq t) \given H]] \\
&= \Prob(F^*_0(g_0(H, N_Y)) \leq t) =  \Prob(g_0(H, N_Y) \leq (F_0^*)^{-1}(t)) = t.
\end{align}
To establish independence of $Z$ and $F_D^*(Y)$ under Assumption~\ref{asmp:crs}
\ref{asmp:crs:rsim}, we first obtain, for all $t\in[0,1]$, that
\begin{align}
&\Prob(F_D^*(Y) \leq t \given Z) \\
&= \Ex[\1(F^*_D(Y) \leq t) \given Z] \\
&= \Ex[\1(F^*_D(g_D(H, N_Y)) \leq t) \given Z] \\
&= \Ex[D \1(F^*_1(g_1(H, N_Y)) \leq t) + 
    (1 - D) \1(F^*_0(g_0(H, N_Y)) \given Z]\\
\label{ff:lote}
&= \Ex[D\Ex[\1(F_1^*(g_1(H, N_Y)) \leq t) \given D, Z, H] +
(1-D)\Ex[\1(F_0^*(g_0(H, N_Y)) \leq t) \given D, Z, H]\given Z] \\
\label{ff:lemcis1}
&= \Ex[D\Ex[\1(F_1^*(g_1(H, N_Y)) \leq t) \given H] + (1-D)\Ex[\1(F_0^*(g_0(H,
N_Y)) \leq t) \given H]\given Z],
\end{align}
where \eqref{ff:lote} follows by the law of total expectation and
\eqref{ff:lemcis1} follows by $(Z, D) \indep (H, N_Y) \given H$
(Lemma~\ref{lem:cis1}). Then, combining this with Assumption~\ref{asmp:crs}
\ref{asmp:crs:rsim}, we get that
\begin{align}
&\Prob(F_D^*(Y) \leq t \given Z) \\
&= \Ex[D\Ex[\1(F_0^*(g_0(H, N_Y)) \leq t) \given H] \given Z] +
   \Ex[(1-D)\Ex[\1(F_0^*(g_0(H, N_Y)) \leq t) \given H] \given Z]\\
&= \Ex[\Ex[\1(F_0^*(g_0(H, N_Y)) \leq t) \given H] \given Z]\\
&= \Ex[\Ex[\1(F_0^*(g_0(H, N_Y)) \leq t)\mid H]] = t,
\end{align}
where for the third equality we used that
$N_H \indep N_Z$ (Setting~\ref{setting}).
This completes the proof of the backward direction given
Assumption~\ref{asmp:crs} \ref{asmp:crs:rsim} instead of
Assumptions~\ref{asmp:pcu} and~\ref{asmp:crs}~\ref{asmp:crs:crs}.

Next, consider the forward direction and fix arbitrary $F_0, F_1 \in \calF$ such
that $F_D(Y)\indep Z$ and $F_D(Y)\sim\operatorname{Uniform}(0,1)$. We use the
structural equation of $Y$ to get
\begin{align}
    F_D(Y)=D F_1(Y) + (1 - D) F_0(Y) = F_0(g_0(H, N_Y)) + 
    D(F_1(g_1(H, N_Y)) - F_0(g_0(H, N_Y))).
\end{align}
Since $F_D(Y)\indep Z$ we have by Assumption~\ref{asmp:relevance}, that
$\Prob_{(H, N_Y)}$-almost surely,
\begin{align}
    F_1(g_1(H, N_Y)) = F_0(g_0(H, N_Y)).
\end{align}
By the above, the premise $F_D(Y) \sim \UD(0,1)$ implies that $F_0(g_0(H, N_Y))
\sim \UD(0, 1)$. For all $d \in\calD$, $Y$ is equal to $g_d(H, N_Y)$ under
$\pdo(D = d)$, which, together with the uniqueness of the CDF of absolutely
continuous random variables (Lemma~\ref{lem:cdfunique}), implies $F_0 = F_0^*$.
By the same argument, we have $F_1(g_1(H, N_Y)) \sim \UD(0, 1)$ and $F_1 =
F_1^*$, which concludes the proof.
\end{proof}

\subsection{Proof of Proposition~\ref{prop:rankasmp}}
\label{proof:prop:rankasmp}

\begin{proof}
Let $\phi:\mathbb{R}\rightarrow\mathbb{R}$ be an arbitrary bounded continuous
function. Then, it holds that
\begin{align*}
    &\mathbb{E}[\phi(F^*_D(g_D(H, N_Y)))\given H]\\
    &\quad=\mathbb{E}[D\phi(F^*_1(g_1(H, N_Y)))\given H] +
    \mathbb{E}[(1-D)\phi(F^*_0(g_0(H, N_Y)))\given H]\\
    &\quad=\mathbb{E}[D\given H]\mathbb{E}[\phi(F^*_1(g_1(H, N_Y)))\given H] +
    \mathbb{E}[(1-D)\given H]\mathbb{E}[\phi(F^*_0(g_0(H, N_Y)))\given H]\\
    &\quad=\pi(H)\mathbb{E}[\phi(F^*_1(g_1(H, N_Y)))\given H] +
    (1-\pi(H))\mathbb{E}[\phi(F^*_0(g_0(H, N_Y)))\given H],
\end{align*}
where for the second equality we used that $D\indep (H, N_Y)\given H$ (which
follows directly from Lemma~\ref{lem:cis1} using decomposition from
Lemma~\ref{lem:ci}). Similarly, we get
\begin{align*}
    &\mathbb{E}[\phi(F^*_D(g_D(H, N_Y)))\given D, H]\\
    &\quad=\mathbb{E}[\mathbb{E}[\phi(F^*_D(g_D(H, N_Y)))\given D, H, N_Y]\given
    D, H]\\
    &\quad=\mathbb{E}[\mathbb{E}[\phi(F^*_D(g_D(H, N_Y)))\given H, N_Y]\given D,
    H]\\
    &\quad=\mathbb{E}[\mathbb{E}[D\phi(F^*_1(g_1(H, N_Y)))\given H, N_Y] +
    \mathbb{E}[(1-D)\phi(F^*_0(g_0(H, N_Y)))\given H, N_Y]\given D, H]\\
    &\quad=\mathbb{E}[\mathbb{E}[D\given H, N_Y] \phi(F^*_1(g_1(H, N_Y))) +
    \mathbb{E}[(1-D)\given H, N_Y]\phi(F^*_0(g_0(H, N_Y)))\given D, H]\\
    &\quad=\mathbb{E}[\mathbb{E}[D\given H] \phi(F^*_1(g_1(H, N_Y))) +
    \mathbb{E}[(1-D)\given H]\phi(F^*_0(g_0(H, N_Y)))\given D, H]\\
    &\quad=\mathbb{E}[\pi(H) \phi(F^*_1(g_1(H, N_Y))) +
    (1-\pi(H))\phi(F^*_0(g_0(H, N_Y)))\given D, H]\\
    &\quad=\pi(H)\mathbb{E}[ \phi(F^*_1(g_1(H, N_Y)))\given H, D] +
    (1-\pi(H))\mathbb{E}[\phi(F^*_0(g_0(H, N_Y)))\given D, H]\\
    &\quad=\pi(H)\mathbb{E}[\phi(F^*_1(g_1(H, N_Y)))\given H] +
    (1-\pi(H))\mathbb{E}[\phi(F^*_0(g_0(H, N_Y)))\given H],
\end{align*}
where we used Assumption~\ref{asmp:crs} \ref{asmp:crs:rinv} for the second
equality and $D \indep (N_Y, H) \given H$ (as above) for the last equality.
Since $\phi$ was arbitrary and using the definition of conditional independence
(Definition~\ref{def:condind}), we conclude that Assumption~\ref{asmp:crs}
\ref{asmp:crs:rinv} implies Assumption~\ref{asmp:crs} \ref{asmp:crs:rsim}.

To show the second implication, we use
\begin{align}
F_D^*(g_D(H, N_Y)) \indep D \given H \iff
F_D^*(g_D(H, N_Y)) \indep D \given H, \pi(H)
\end{align}
which holds by the definition of conditional independence (since the sigma
algebras of $H$ and $(H, \pi(H))$ are identical as $\pi(H)$ is $H$-measurable),
together with $D \indep H \given \pi(H)$ (which holds by
Lemma~\ref{lem:balancing}),
to conclude
\[
F_D^*(g_D(H, N_Y)) \indep D \given \pi(H)
\]
by contraction (Lemma~\ref{lem:ci}). Counterexamples showing that the reverse
direction of the first and second implications in
Proposition~\ref{prop:rankasmp} are given in Examples~\ref{ex:holds}
and~\ref{ex:nobreak}, respectively.
\end{proof}

\subsection{Proof of Proposition~\ref{prop:equiv_estimand}}
\label{proof:prop:equiv_estimand}

\begin{proof}
We begin by fully specifying the parametrization of the IVQR models defined in
Setting~\ref{setting:ivqr}.

Let $\calP$ denote the set of all distributions on $[0, 1] \times [0, 1]\times
\RR^{d_V} \times \calZ$ and define the sets $\mathcal{D} \coloneqq \{\delta :
\calZ \times \RR^{d_V} \to \{0,1\} \given \delta \mbox{ measurable}\}$ and
$\calQ \coloneqq \{q : [0, 1] \to \calY \given q \mbox{ measurable and strictly
increasing}\}$. All IVQR models in Setting~\ref{setting:ivqr} can then be
parametrized by the following set
\begin{align*}
\ivqr\coloneqq\Big\{ (\nu, \delta, q_0, q_1)\in\mathcal{P}\times\mathcal{D}
  \times\mathcal{Q}\times\mathcal{Q}\,\big|\, (U(0), U(1),V, Z)\sim\nu:\,
\text{satisfying Setting~\ref{setting:ivqr}} \Big\}.
\end{align*}

Similarly, we now parametrize a subset of the IV-SCMs in Setting~\ref{setting}
for $\calH\coloneqq\mathbb{R}^{d_{V}}$ and $N_D=0$.  Let $\calP'$ denote the set
of all distributions on $\RR\times [0,1]^2\times\RR^{d_{V}} \times \RR^{d_Z}$
and define the sets $\mathcal{D}' \coloneqq \{f : \calZ \times \calH \times \RR
\to \{0, 1\} \given f \mbox{ measurable}\}$ and $\calQ' \coloneqq \{g : \calH
\times [0,1]^2 \to \calY \given g \mbox{ measurable}\}$. Then, we define the
set,
\begin{align*}
\ivscm\coloneqq\Big\{ &(\mu, f, g_0, g_1)\in \mathcal{P}'\times\mathcal{D}'
  \times\mathcal{Q}'\times\mathcal{Q}'\,\big|\, (N_D, N_Y, N_H, N_Z)\sim
  \mu,\,\\ &\quad \text{with } N_D = 0 \mbox{ a.s.} \text{ and }(N_Z, N_D)\indep
  N_Y\given N_H \mbox{ and } N_H \indep N_Z
  \Big\}.
\end{align*}
We now explicitly define a map
\begin{equation*}
\calG : \ivqr \to \ivscm
\end{equation*}
and then show that it satisfies the desired properties. To this end, we define
for all $M=(\nu, \delta, q_0, q_1)\in\ivqr$,
\begin{equation*}
\mathcal{G}(M)\coloneqq (\mu^M, f^M, g_0^M, g_1^M),
\end{equation*}
where
\begin{itemize}
\item $\mu^M\in\mathcal{P}'$ is defined by
\begin{equation*}
  \mu^M\coloneqq (\Delta_0 \otimes \nu_{U(0), U(1), V, Z}, f, g_0, g_1),
\end{equation*}
where $\Delta_0$ the Dirac measure at zero on $\RR$.
\item $f^M\in\mathcal{D}'$ for all
$(z,h,n)\in\mathcal{Z}\times\mathcal{H}\times\mathbb{R}$ is defined by
\begin{equation*}
  f^M(z, h, n)\coloneqq\delta(z, h).
\end{equation*}
\item $g_0^M\in\mathcal{Q}'$ and $g_1^M\in\mathcal{Q}'$ for all
$(h, n)\in\mathcal{H}\times [0,1]^2$ 
are defined by
\begin{align}
  g_0(h, n)\coloneqq q_0(n_1)
  \quad\text{and}\quad
  g_1(h, n)\coloneqq q_1(n_2).
\end{align}
\end{itemize}
By construction of $\mu^M$ it holds for $(N_D,N_Y,N_H,N_Z)\sim \mu^M$ that
$N_D=0$. Moreover, since by assumption $Z\indep (U(0), U(1),V)$
for $(U(0), U(1),V, Z)\sim \nu$ we also get (using Lemma~\ref{lem:ci}) that
$(N_Z,N_D)\indep N_Y\given N_H$ and $N_H \indep N_Z$ for $(N_D,N_Y,N_H,N_Z)\sim
\mu^M$, hence $\mathcal{G}(M)\in\ivscm$ as desired.

We are now ready to prove the main result. To this end, fix $M=(\nu, \delta,
q_0, q_1)\in\ivqr$. Then, this induces a unique distribution $P^M$ over
\begin{equation*}
(Z, D, U(0), U(1), Y(0), Y(1))
\end{equation*}
by first sampling $(U(0),U(1),V,Z)\sim \nu$ and then setting $D=\delta(Z, V)$,
$Y(0)=q_0(U(0))$ and $Y(1)=q_1(U(1))$. By construction $P^M$ satisfies the
conditions of an IVQR model as in Setting~\ref{setting:ivqr}. We now prove (1)
the equivalence of the observational distributions, (2) equivalence of the
interventional distributions and (3) that $\mathcal{G}(M)$ satisfies rank
similarity (Assumption~\ref{asmp:crs}~\ref{asmp:crs:rsim}).

\textit{Step 1: Equivalence of observational distribution:}\\
Based on the construction of $P^M$ it holds for $(U(0),U(1),V,Z)\sim \nu$ that
\begin{equation}
\label{eq:obs_ivqr}
\left(Z, \delta(Z, V), q_{\delta(Z, V)}(U(\delta(Z,
  V)))\right)\sim P^M_{Z,D,Y(D)}.
\end{equation}
For the IV-SCM $\mathcal{G}(M)$ we can use the definition of the function $f^M$
to express the structural equation of $D$ in terms of the noise variables
$(N_D,N_Y,N_H,N_Z)$ as follows
\begin{equation}
\label{eq:Dstructural_eq}
D=f^M(Z, H, N_D)=\delta(Z, H)=\delta(N_Z, N_H).
\end{equation}
Similarly, using this expansion and the definitions of $g_0^M$ and $g_1^M$ we
get for the structural equation of $Y$ that
\begin{equation}
\label{eq:Ystructural_eq}
Y=g_D(H, N_Y)=q_D((N_Y)_{1+D})=q_{\delta(N_Z, N_H)}((N_Y)_{1+\delta(N_Z, N_H)}).
\end{equation}
Together with $Z=N_Z$ it therefore holds for $(N_D, N_Y, N_H, N_Z)\sim\mu^M$
that
\begin{equation}
\label{eq:obs_ivscm}
(N_Z, \delta(N_Z, N_H),q_{\delta(N_Z, N_H)}((N_Y)_{1+\delta(N_Z, N_H)}))\sim
P^{\mathcal{G}(M)}_{Z,D,Y}. 
\end{equation}
By definition of $\mu^M$ it holds for $(N_D,N_Y, N_H, N_Z)\sim\mu^M$ that
$(N_Y,N_H, N_Z)\sim\nu$. Hence, \eqref{eq:obs_ivqr} and \eqref{eq:obs_ivscm}
together imply that
\begin{equation*}
P^M_{Z,D,Y(D)}=P^{\mathcal{G}(M)}_{Z,D,Y}.
\end{equation*}

\textit{Step 2: Equivalence of interventional distributions:}\\
Again by the construction of $P^M$ we get for $(U(0), U(1),V,Z)\sim\nu$ that
\begin{equation}
\label{eq:int_ivqr}
q_0(U(0))\sim P^M_{Y(0)} \quad\text{and}\quad q_1(U(1))\sim P^M_{Y(1)}.
\end{equation}
Furthermore, performing $\pdo(D=0)$ (respectively, $\pdo(D=1)$) and then using
the definition of $g_0^M$ and $g_1^M$ together with the intervened structural
equations show for $(N_D,N_Y,N_H,N_Z)\sim\mu^M$ that
\begin{equation} \label{eq:int_ivscm}
q_0((N_Y)_{1})\sim P^{\mathcal{G}(M);\pdo(D=0)}_Y \quad\text{and}\quad
q_1((N_Y)_{2})\sim P^{\mathcal{G}(M);\pdo(D=1)}_Y.
\end{equation}
Hence, using again the definition of $\mu^M$, \eqref{eq:int_ivqr} and
\eqref{eq:int_ivscm} imply that
\begin{equation*}
P^{M}_{Y(0)}=P^{\mathcal{G}(M);\pdo(D=0)}_Y \quad\text{and}\quad
P^{M}_{Y(1)}=P^{\mathcal{G}(M);\pdo(D=1)}_Y.
\end{equation*}

\textit{Step 2: Rank similarity:}\\
By \eqref{eq:int_ivscm} and since $(N_Y)_{1}$ and $(N_Y)_{2}$ are both
distributed uniformly on $[0,1]$ (using the definition of $\mu^M$, assumptions
on $\nu$ and that $q_0$ and $q_1$ are strictly increasing) it holds for all
$d\in\{0,1\}$ that $F^*_d=q_d^{-1}$. Moreover, using \eqref{eq:Dstructural_eq}
and \eqref{eq:Ystructural_eq} it follows for $(N_D,N_Y,N_H,N_Z)\sim\mu^M$ that
\begin{equation} \label{eq:noise_expression_ranksim}
F^*_D(g_D(H, N_Y))=(N_Y)_{1+\delta(N_Z, N_H)}, \quad D=\delta(N_Z, N_H)
\quad\text{and}\quad H=N_H.
\end{equation}
Moreover, by the connection between $\mu^M$ and $\nu$ it holds that
\begin{equation} \label{eq:rank_sim_eq1}
(N_D,N_Y,N_H,N_Z)\sim\mu^M:\quad (N_Y)_{1+\delta(N_Z, N_H)}\indep \delta(N_Z,
N_H)\given N_H \end{equation}
if and only if
\begin{equation} \label{eq:rank_sim_eq2}
(U(0),U(1),V,Z)\sim\nu:\quad U(\delta(Z, V))\indep \delta(Z, V)\given V.
\end{equation}
Now rank similarity in the IVQR model (i.e., $U(D)\indep D \given V, Z$)
together with the independence assumption $(U(0), U(1))\indep Z\given V$ imply
that \eqref{eq:rank_sim_eq2} and therefore also \eqref{eq:rank_sim_eq1} is true.
Hence, using \eqref{eq:noise_expression_ranksim}, it holds that $\mathcal{G}(M)$
satisfies Assumption~\ref{asmp:crs}~\ref{asmp:crs:rsim}.

This completes the proof of Proposition~\ref{prop:equiv_final}.
\end{proof}

\subsection{Proof of Proposition~\ref{prop:consistency}}\label{proof:consistency}

\begin{proof}\sloppy
Fix an arbitrary $\lambda > 0$. Define $s : \calY \to \RR$ for all $y\in\calY$
by $s(y)\coloneqq \frac{y - L}{U - L}$, $\Theta \coloneqq \Theta_{M} \times
\Theta_{M}$, $\parm^* \coloneqq \left((\parm^*_0)^\top,
(\parm^*_1)^\top\right)^\top$, and, for all $(d, y)\in\calD\times\calY$ define
\[
a_d(y)\coloneqq \left((1-d)a_{M}(s(y))^\top, da_{M}(s(y))^\top\right)^\top. 
\]
Furthermore, for all $(d, y) \in \calD \times \calY$ and $\parm \in \Theta$,
define 
\[
F_d(y; \parm) \coloneqq \Phi\left(a_d(y)^\top\parm\right),
\]
and the following sample and population quantities:
\begin{align}
\label{term:hsicn}
{H}_n(\parm) &\coloneqq \hat{\operatorname{HSIC}}((F_{D_i}(Y_i; \parm))_{i=1}^n,
(Z_i)_{i=1}^n; k_R, k_Z), \\
\label{term:hsicp}
H(\parm) &\coloneqq \operatorname{HSIC}(F_D(Y;\parm), Z; k_R, k_Z), \\
\label{term:unifn}
U_n(\parm) &\coloneqq
\int_0^1 \left(\frac{1}{n}
\sum_{i=1}^n \1(F_{D_i}(Y_i;\parm) \leq u) - u \right)^2 \dd u
\\
\label{term:unifp}
U(\parm) &\coloneqq \int_0^1 (\Prob(F_D(Y;\parm) \leq u) - u)^2 \dd u, \\
\label{term:lossn}
\ell_{\lambda,n}(\parm) &\coloneqq 
\aggr(\lambda H_n(\parm), U_n(\parm)), \\
\label{term:lossp}
\ell_\lambda(\parm) &\coloneqq \aggr(\lambda H(\parm), U(\parm)),
\end{align}
where the population HSIC is defined in Definition~\ref{def:hsic:pop} in
Appendix~\ref{app:aux} and $\hat{\operatorname{HSIC}}$ is defined in
Definition~\ref{def:hsic:emp} in Appendix~\ref{app:aux}.

The proof proceeds in four steps. In the first two steps, we show uniform
convergence of \eqref{term:hsicn} to \eqref{term:hsicp} and \eqref{term:unifn}
to \eqref{term:unifp} and conclude that \eqref{term:lossn} converges uniformly
to \eqref{term:lossp}. In the third step, we show convergence in probability of
the estimated basis coefficients to the true $\parm^*$. In the final step, we
prove uniform convergence of the interventional CDFs.

\paragraph{Step 1.}
To show uniform convergence of \eqref{term:hsicn} to \eqref{term:hsicp}, we can
follow the argument in \citet[Appendix A.3.]{saengkyongam2022exploiting} to
conclude: 
\begin{itemize}
    \item Since $k_R$ and $k_Z$ are assumed to be non-negative and bounded, it
    follows by \citet[Corollary~15]{mooij2016hsic}, that for all $\parm \in \Theta$
    and $\epsilon > 0$,
    \[
    \lim_{n\to\infty} \Prob(\lvert H_n(\parm) - H(\parm)\rvert >
    \epsilon) = 0;
    \]
    \item By \citet[Lemma~16]{mooij2016hsic} it follows that for all $\parm, \parm'
    \in \Theta$ and $n \geq 2$, 
    \begin{align*}
    \onorm{H_n(\parm) - H_n(\parm')}
    &\leq \frac{32 L C}{\sqrt{n}} \norm{(a_{D_i}(Y_i)^\top(\parm -
    \parm'))_{i=1}^n}\\
    &\leq \frac{32 L C}{\sqrt{n}} \norm{(a_{D_i}(Y_i)^\top)_{i=1}^n} \cdot
    \norm{\parm - \parm'} \\
    &\leq \frac{32 L C}{\sqrt{n}} \sqrt{nA^2} \norm{\parm - \parm'} \\
    &= 32 L C A \norm{\parm - \parm'};
    \end{align*}
    where $L$ is the Lipschitz constant for $k_R(\Phi^{-1}(\bcd),
    \Phi^{-1}(\bcd))$, $C$ is an upper bound for $k_Z$ and $A$ satisfies
    $\sup_{d\in\calD}\sup_{y\in\calY} \lVert a_d(y) \rVert<A$, which exists by
    boundedness of $a_d$ for $d\in\calD$,
    \item $k_R$ and $\parm \mapsto F_{D}(Y; \parm)$ are (almost surely)
    continuously differentiable and $k_R$ and $k_Z$ have a bounded derivative by
    assumption. 
    Thus, by dominated convergence, the chain rule, the population
    representation of HSIC as in \citet[Proposition~2.5]{pfister2018hsic} and
    compactness of $\Theta$, it follows that $H(\parm)$ is Lipschitz on
    $\Theta$.
\end{itemize}
With the three points above, we can apply
\citet[Corollary~2.2]{newey1991uniform} to conclude that 
\begin{align}\label{eq:hsic:unif}
\lim_{n\to\infty} \Prob\left(\max_{\parm \in\Theta} \lvert H_n(\parm) - H(\parm)
\rvert > \epsilon\right) = 0,
\end{align}
where we additionally used that, by 
continuity of $\parm \mapsto |H_n(\parm)-H(\parm)|$ and compactness of $\Theta$
the function attains its maximum.

\paragraph{Step 2.}
It holds that:
\begin{itemize}
    \item For all $\parm\in\Theta$ and all $\epsilon>0$, $\lim_{n\to\infty}
    \Prob(\lvert U_n(\parm) - U(\parm) \rvert > \epsilon) = 0$
    (Lemma~\ref{lem:cip});
    \item $U_n$ and $U$ are Lipschitz (Lemma~\ref{lem:cvm:lip}).
\end{itemize}
We can conclude by \citet[Corollary 2.2]{newey1991uniform} 
that, for all $\epsilon > 0$,
\begin{align}\label{eq:unif:unif}
\lim_{n\to\infty} \Prob\left(\max_{\parm \in\Theta} \lvert U_n(\parm) - U(\parm)
\rvert > \epsilon\right) = 0.
\end{align}

Next, using that $\beta$ is Lipschitz continuous with constant $B$, we get
\begin{align}
&\lim_{n\to\infty} \Prob\left(\max_{\parm\in\Theta} \lvert
\ell_{\lambda,n}(\parm) - \ell_\lambda(\parm) \rvert > \epsilon\right)
\\
&\quad= 
\lim_{n\to\infty} \Prob\left(\max_{\parm\in\Theta} \lvert \beta(U_n(\parm),
\lambda H_n(\parm)) - \beta(U(\parm), \lambda H(\parm)) \rvert > \epsilon\right)
\\
&\quad\leq 
\lim_{n\to\infty} \Prob\left(\max_{\parm\in\Theta} \max\{ \onorm{U_n(\parm) -
U(\parm)}, \lambda \onorm{H_n(\parm) - H(\parm)}\} >
\epsilon/B \right)
\\
&\quad\leq 
\lim_{n\to\infty} \Prob\left(
\{
\max_{\parm\in\Theta} 
\onorm{U_n(\parm) -
U(\parm)} > \epsilon/B\} \cup
\{
\max_{\parm\in\Theta}
\lambda \onorm{H_n(\parm) - H(\parm)} > \epsilon/B \}\right)
\\
&\quad\leq
\lim_{n\to\infty} \Prob\left(
\max_{\parm\in\Theta} 
\onorm{U_n(\parm) -
U(\parm)} > \epsilon/B \right)
+ \Prob\left(
\max_{\parm\in\Theta}
\lambda \onorm{H_n(\parm) - H(\parm)} > \epsilon/B \right)\\
&\quad= 0,
\label{eq:loss:conclusion}
\end{align}
where for the last step we used \eqref{eq:hsic:unif} and \eqref{eq:unif:unif}.

\paragraph{Step 3.}
Now consider any sequence $\hat\parm^{n,\lambda} \in \argmin_{\parm \in \Theta}
\ell_{\lambda,n}(\parm)$. 
By Theorem~\ref{thm:id} (and since we assumed that
Assumption~\ref{asmp:relevance} and either Assumptions~\ref{asmp:pcu}
and~\ref{asmp:crs}~\ref{asmp:crs:crs} or Assumption~\ref{asmp:crs}
\ref{asmp:crs:rsim} are satisfied) it holds that $\parm^*$ is the unique
parameter for which $F_{D}(Y;\parm^*) \indep Z$ and $F_{D}(Y;\parm^*) \sim
\UD(0, 1)$. Moreover, by \citet{pfister2018hsic}, $H(\parm) = 0 \iff
F_{D}(Y;\parm) \indep Z$ and by Lemma~\ref{lem:cvm}, $U(\parm) = 0 \iff
F_{D}(Y;\parm) \sim \UD(0, 1)$. Hence, we get that $(U(\parm),
H(\parm))=(0,0)\iff \parm=\parm^*$, which together with $\beta(a, b) = 0 \iff
(a, b) = (0, 0)$ implies that $\parm^*$ is the unique minimizer of
$\ell_\lambda$ on $\Theta$.
Therefore, for all
$\epsilon > 0$, there exists $\zeta(\epsilon)$ such that for all $\parm \notin
B_\epsilon(\parm^*) \coloneqq \{\parm\in\Theta : \lVert\parm - \parm^*\rVert <
\epsilon\}$ it holds that $\ell_\lambda(\parm) - \ell_\lambda(\parm^*) >
\zeta(\epsilon)$.

Fix $\epsilon > 0$ and $\delta > 0$, then by \eqref{eq:loss:conclusion} there
exists $n' \in \mathbb{N}$, such that for all $n\in\mathbb{N}$ with $n \geq n'$,
\begin{align}\label{for:conclusion}
\Prob\left(\max_{\parm \in \Theta} \lvert\ell_{\lambda,n}(\parm) -
\ell_\lambda(\parm)\rvert > \frac{\zeta(\epsilon)}{2}\right) \leq
\frac{\delta}{2}.
\end{align}
Then, for all $n\in\mathbb{N}$ with $n \geq n'$,
\begin{align*}
\Prob(\lVert \hat\parm^{n,\lambda} - \parm^* \rVert > \epsilon) 
&\leq
\Prob\left(\min_{\parm\in\Theta\setminus B_\epsilon(\parm^*)}
\{\ell_{\lambda,n}(\parm) - \ell_{\lambda,n}(\parm^*)\} \leq 0\right) \\
&\leq
\Prob\left(\max_{\parm\in\Theta\setminus B_\epsilon(\parm^*)} \lvert
\ell_{\lambda,n}(\parm) - \ell_\lambda(\parm) \rvert + \lvert
\ell_{\lambda,n}(\parm^*) - \ell_\lambda(\parm^*) \rvert > \zeta(\epsilon)\right) \\
&\leq
\Prob\left(\left\{\max_{\parm\in\Theta\setminus B_\epsilon(\parm^*)} \lvert
\ell_{\lambda,n}(\parm) - \ell_\lambda(\parm) \rvert > \frac{\zeta(\epsilon)}{2}
\right\} \cup \left\{ \lvert
\ell_{\lambda,n}(\parm^*) - \ell_\lambda(\parm^*) \rvert >
\frac{\zeta(\epsilon)}{2} \right\}\right) \\
&\leq \delta,
\end{align*}
where
\begin{itemize}
    \item the first inequality follows because $\{\lVert\hat\parm^{n,\lambda} -
    \parm^* \rVert > \epsilon\}$ occurs only if there exists a $\parm \in \Theta
    \setminus B_\epsilon(\parm^*)$ such that $\{\ell_{\lambda,n}(\parm) -
    \ell_{\lambda,n}(\parm^*) \leq 0\}$ occurs;
    \item the second inequality follows because $\zeta(\epsilon)$ is such that
    for all $\parm \in \Theta \setminus B_\epsilon(\parm^*)$ we have
    $\ell_\lambda(\parm) - \ell_\lambda(\parm^*) > \zeta(\epsilon)$. Hence, if
    there exists $\parm \in \Theta$ such that $\{\ell_{\lambda,n}(\parm) -
    \ell_{\lambda,n}(\parm^*) \leq 0\}$ occurs, then
    there must exists $\parm \in\Theta$ such that
    $\{\lvert\ell_{\lambda,n}(\parm) - \ell_\lambda(\parm)\rvert + \lvert
    \ell_\lambda(\parm^*;
    \mathcal{D}_n) - \ell_\lambda(\parm^*) \rvert > \zeta(\epsilon)\}$ occurs;
    \item the third inequality follows because for
    $\{\max_{\parm\in\Theta\setminus B_\epsilon(\parm^*)} \lvert
    \ell_{\lambda,n}(\parm) -
    \ell_\lambda(\parm) \rvert + \lvert \ell_{\lambda,n}(\parm^*) -
    \ell_\lambda(\parm^*) \rvert > \zeta(\epsilon)\}$ to occur, either
    $\left\{\max_{\parm\in\Theta\setminus B_\epsilon(\parm^*)} \lvert
    \ell_{\lambda,n}(\parm) - \ell_\lambda(\parm) \rvert >
    \frac{\zeta(\epsilon)}{2} \right\}$ or $\left\{ \lvert
    \ell_{\lambda,n}(\parm^*) - \ell_\lambda(\parm^*) \rvert >
    \frac{\zeta(\epsilon)}{2} \right\}$ must occur;
    \item and the last inequality follows by the union bound together with
    \eqref{for:conclusion}.
\end{itemize}
Since $\delta>0$ was arbitrary, this implies
$\lim_{n\to\infty}\Prob(\lVert\hat\parm^{n,\lambda} - \parm^* \rVert >
\epsilon) = 0$.

\paragraph{Step 4.}
Finally, for all $d \in \calD$, by $\frac{1}{\sqrt{2\pi}}$-Lipschitz continuity
of $\Phi$ (which follows since $\Phi$ is differentiable everywhere and its
derivative is a density bounded above by $\tfrac{1}{\sqrt{2\pi}}$) and
boundedness of $a_d$, $d\in\calD$ by $A$, we obtain
\begin{align*}
    \sup_{y\in\calY} \onorm{\hat{F}_d^{n, \lambda}(y) - F_d^*(y)} &=
    \sup_{y\in\calY} \onorm{\Phi(a_d(y)^\top\hat\parm^{n,\lambda}) -
    \Phi(a_d(y)^\top\parm^*)} \\ &\leq \tfrac{1}{\sqrt{2\pi}}\sup_{y\in\calY}
    \onorm{a_d(y)^\top\hat\parm^{n,\lambda} - a_d(y)^\top\parm^*} \\ &\leq 
    \tfrac{1}{\sqrt{2\pi}}
    \sup_{y\in\calY} \norm{a_d(y)} \cdot
    \norm{\hat\parm^{n,\lambda} - \parm^*} \\ &\leq
    \tfrac{A}{\sqrt{2\pi}}
    \norm{\hat\parm^{n,\lambda} - \parm^*},
\end{align*}
which concludes the proof.
\end{proof}

\subsection{Proof of the statement in Example~\ref{ex:relevance}}
\label{proof:ex:relevance}

\begin{proof}
Fix arbitrary functions $m:\{0, 1\}\rightarrow\mathbb{R}$ and $k:\{0,
1\}\rightarrow\mathbb{R}$. For all $h, d\in \calD$ define $\gamma(h, d) = m(h) +
k(h) d$. Then, for $W=w(H, D)$ and $\mathcal{W}\coloneqq \{m(0), m(1), m(0) +
k(0), m(1) + k(1)\}$ it holds that
\begin{align}\label{eq:indepexample}
W \indep Z \iff \forall &w \in \mathcal{W}:\, \Prob(W = w \given Z = 0) = \Prob(W = w \given Z = 1).
\end{align}
Furthermore, explicitly computing the probabilities we get that for all $w \in
\mathcal{W}$ and all $z \in \calD$ it holds that
\begin{align*}
\Prob(W = w \given Z = z) &= 
0.5 \Prob(W = w \given Z = z, H = 0) + 
0.5 \Prob(W = w \given Z = z, H = 1) \\
&= 0.5 \big(p(0, z) \1(\gamma(0, 1) = w) + (1-p(0, z)) \1(\gamma(0, 0) = w)\big)
\\
&\quad\quad + 0.5 \big(p(1, z) \1(\gamma(1, 1) = w) + (1-p(1, z)) \1(\gamma(1,
0) = w)\big).
\end{align*}
Next, define for all $w\in\mathcal{W}$, $l_1(w) \coloneqq \1(\gamma(0,1) = w) -
\1(\gamma(0, 0) = w)$ and $l_2(w) \coloneqq \1(\gamma(1,1) = w) - \1(\gamma(1,
0) = w)$. Then, by \eqref{eq:indepexample}, we get that $W\indep Z$ is
equivalent to: for all $w\in\mathcal{W}$
\begin{align}\label{eq:final_condition}
    &p(0, 0) l_1(w) + \1(\gamma(0, 0) = w) +
p(1, 0) l_2(w) + \1(\gamma(1, 0) = w) \\
&\qquad =p(0, 1) l_1(w) + \1(\gamma(0, 0) = w) +
p(1, 1) l_2(w) + \1(\gamma(1, 0) = w).
\end{align}
Simplifying this expression we get that $W\indep Z$ is equivalent to for all
$w\in\mathcal{W}$
\begin{align*}
&(p(0, 0) - p(0, 1)) l_1(w) + 
(p(1, 0) - p(1, 1)) l_2(w)
= 0.
\end{align*}
Now, the functions $l_1$ and $l_2$ both take values in $\{-1,0,1\}$, hence since
by assumption $p(0,0)-p(0,1)\neq p(1,0)-p(1,1)$ and for all $z,h\in\{0,1\}$,
$p(z,h)\neq 1$ it holds that
\begin{equation*}
    \{(p(0, 0) - p(0, 1))l_1(w)\mid w\in\mathcal{W}\}\cap
     \{-(p(1, 0) - p(1, 1))l_2(w)\mid w\in\mathcal{W}\}\subseteq\{0\}.
\end{equation*}
Hence, the only way for \eqref{eq:final_condition} to be satisfied is if for all
$w\in\mathcal{W}$ it holds that $l_1(w)=l_2(w)=0$. This is however equivalent to
$\gamma(0,1)=\gamma(0,0)$ and $\gamma(1,0)=\gamma(1, 1)$, which is equivalent to
$k(0)=k(1)=0$. Therefore, we have shown that
\begin{equation*}
    Z\indep W \iff k\equiv 0,
\end{equation*}
which completes the proof.
\end{proof}

\section{Auxiliary definitions and results}\label{app:aux}

\begin{definition}[Population CvM criterion]\label{def:cvm:pop}
The population Cram\'er--von Mises criterion w.r.t.\ the standard uniform
distribution for a real-valued random variable $X$  
is defined as
$$
\operatorname{CvM}(X)\coloneqq \int_0^1 \left(\mathbb{P}(X\leq u) - u \right)^2
\dd u.
$$
\end{definition}

\begin{definition}[Empirical CvM criterion]\label{def:cvm:emp}
The empirical Cram\'er--von Mises criterion w.r.t.\ the standard uniform
distribution for $n$ i.i.d.\ real-valued random variables $(X_i)_{i=1}^n$ is
defined as
$$
\widehat{\operatorname{CvM}}((X_i)_{i=1}^n)\coloneqq \int_0^1 \left(\frac{1}{n}
\sum_{i=1}^n \1(X_i \leq u) - u \right)^2 \dd u =\frac{1}{12n^2} +
\frac{1}{n} \sum_{i=1}^n \left(\frac{i - 0.5}{n} - X_{(i)}\right)^2,
$$
where, for all $i\in\{1,\dots,n\}$, $X_{(i)}$ denotes the $i$th order statistic
and the equality is shown in Lemma~\ref{lem:halfrank}.
\end{definition}

\begin{definition}[Population HSIC]\label{def:hsic:pop}
Let $(Y, X)$ be random variables taking values in $\calY \times \calX \subseteq
\RR^{d_Y} \times \RR^{d_X}$ and $k : \calY \times \calY \to \RR_+$, $l : \calX
\times \calX \to \RR_+$ be two positive definite kernels. Let $(Y', X')$ and
$(Y'', X'')$ denote independent copies of $(Y, X)$. Then, the population HSIC is
defined as 
\begin{align}
\operatorname{HSIC}(Y, X; k, l)\coloneqq
\Ex[k(Y, Y')l(X, X')] + \Ex[k(Y, Y')]\Ex[l(X, X')] - 2 \Ex[k(Y, Y')l(X, X'')].
\end{align}
\end{definition}

\begin{definition}[Empirical HSIC]\label{def:hsic:emp}
Let $(Y_i, X_i)_{i=1}^n$ be independent and identically distributed random
variables taking values in $\calY \times \calX \subseteq
\RR^{d_Y}\times\RR^{d_X}$ and $k : \calY \times \calY \to \RR_+$, $l : \calX
\times \calX \to \RR_+$ be two positive definite kernels. Then, the empirical
HSIC is defined as
\begin{align}
&\hat{\operatorname{HSIC}}((Y_i)_{i=1}^n, (X_i)_{i=1}^n; k, l)
\\
&\quad
\coloneqq \ 
\frac{1}{n^2} \sum_{i,j}^n k(Y_i,Y_j)
l(X_i, X_j)
+
\frac{1}{n^4} \sum_{i,j,q,r}^n k(Y_i,Y_j)
l(X_q, X_r)
-
\frac{2}{n^3} \sum_{i,j,q}^n k(Y_i,Y_j)
l(X_i, X_q).
\end{align}
\end{definition}

\begin{lemma}\label{lem:cdfunique}
Let $Y$ be an absolutely continuous random variable and $F : \RR \rightarrow
[0,1]$ be a CDF. Then, it holds that $$F(Y) \sim \UD(0,1) \iff Y \sim F.$$
\end{lemma}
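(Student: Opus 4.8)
The plan is to handle both implications through the generalized (quantile) inverse of $F$ together with the continuity of the law of $Y$. I would define the left-continuous generalized inverse $F^{-1}(u) \coloneqq \inf\{y \in \RR : F(y) \ge u\}$ for $u \in (0,1)$ and rely on the single standard fact, the Galois relation for a right-continuous non-decreasing $F$, that $F^{-1}(u) \le y \iff u \le F(y)$; negating both sides gives the event identity $\{F(Y) < u\} = \{Y < F^{-1}(u)\}$. First I would record the bridge identity that drives both directions: since $Y$ is absolutely continuous its CDF $F_Y$ is continuous, so $\Prob(Y < c) = F_Y(c)$ for every $c$, and therefore
\[
\Prob(F(Y) < u) = \Prob\big(Y < F^{-1}(u)\big) = F_Y\big(F^{-1}(u)\big)
\quad \text{for all } u \in (0,1).
\]

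For the backward direction I would assume $Y \sim F$, i.e.\ $F = F_Y$ is continuous, and read the identity above as $\Prob(F(Y) < u) = F_Y(F^{-1}(u)) = u$, where $F_Y(F^{-1}(u)) = u$ holds precisely because continuity of $F_Y$ makes its generalized inverse a genuine right inverse. Letting $u' \downarrow u$ then upgrades this to $\Prob(F(Y) \le u) = u$ on $(0,1)$, which is the CDF of $\UD(0,1)$, so $F(Y) \sim \UD(0,1)$. This is just the probability integral transform and should take only a line or two.

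The forward direction is where the real work lies and is the step I expect to be the main obstacle. Assuming $F(Y) \sim \UD(0,1)$, the left-hand side of the bridge identity equals $u$, giving $F_Y(F^{-1}(u)) = u$ for all $u \in (0,1)$; I then have to convert this relation between $F_Y$ and the generalized inverse of $F$ into the pointwise equality $F \equiv F_Y$, crucially \emph{without} assuming $F$ is continuous or strictly increasing. The plan is a sandwich argument using the Galois relation at a fixed $y_0$: for $u \le F(y_0)$ one has $F^{-1}(u) \le y_0$, hence $u = F_Y(F^{-1}(u)) \le F_Y(y_0)$, and letting $u \uparrow F(y_0)$ yields $F(y_0) \le F_Y(y_0)$; conversely, for $u > F(y_0)$ one has $F^{-1}(u) > y_0$, hence $u = F_Y(F^{-1}(u)) \ge F_Y(y_0)$, and letting $u \downarrow F(y_0)$ yields $F(y_0) \ge F_Y(y_0)$. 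Combining gives $F(y_0) = F_Y(y_0)$, with the boundary cases $F(y_0) \in \{0,1\}$ following immediately from the one applicable one-sided bound. Hence $F = F_Y$, i.e.\ $Y \sim F$. The only places that need care are getting the orientation of the Galois inequalities and their negations right, and systematically invoking continuity of $F_Y$ to pass from strict to non-strict inequalities in the limits; the remainder is bookkeeping.
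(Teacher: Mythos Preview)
Your proposal is correct. The backward direction is essentially the paper's argument: both use the generalized inverse and exploit that $Y\sim F$ together with absolute continuity of $Y$ forces $F$ to be continuous, so that $F(F^{-1}(t))=t$.

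Where you diverge is the forward direction, and here you are working much harder than necessary. The paper dispatches it in one line: assuming $F(Y)\sim\UD(0,1)$, for every $y$ one has $\Prob(Y\le y)=\Prob(F(Y)\le F(y))=F(y)$. The first equality is legitimate because $F$ is non-decreasing, so $\{Y\le y\}\subseteq\{F(Y)\le F(y)\}$ with the difference contained in $\{F(Y)=F(y)\}$, and the latter has probability zero since $F(Y)$ is uniform and hence atomless. Your sandwich argument via the Galois relation reaches the same conclusion and has the advantage of being explicit about never assuming continuity or strict monotonicity of $F$, but the paper's route gets there with a single observation you could adopt instead. In short: your expectation that the forward direction ``is where the real work lies'' is misplaced---it is actually the easier half once you notice $F(Y)$ has no atoms.
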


\begin{proof}
First, assume $Y\sim F$. For all $t\in(0,1)$ define the generalized inverse CDF
by $F^{-1}(t) \coloneqq \inf\{y \in \RR \mid F(y) \geq t\}$. Then, it holds for
all $t\in (0,1)$ that
\begin{equation}
\label{eq:cdf_parta}
    \Prob(F(Y) \leq t) = \Prob(Y \leq F^{-1}(t)) = F(F^{-1}(t)).
\end{equation}
Next, since $Y$ is absolutely continuous, we know that $F$ is continuous. For
all $t\in(0,1)$, by the intermediate value theorem there must exist
$y^*_t\in\mathbb{R}$ such that $F(y^*_t)=t$. Hence we get by the definition of
the generalized inverse that $F^{-1}(t)\leq y^*_t$, which further implies that
$F(F^{-1}(t))\leq F(y^*_t)=t$. Since it also holds that $F(F^{-1}(t))\geq t$, we
get that $F(F^{-1}(t))=t$. Together with \eqref{eq:cdf_parta} this implies that
$F(Y)\sim \UD(0,1)$.

Second, assume that $F(Y)\sim \UD(0,1)$. Then, it holds for all $y\in\mathbb{R}$
that
\begin{equation*}
    \Prob(Y\leq y) = \Prob(F(Y) \leq F(y)) = F(y)
\end{equation*}
and hence $Y\sim F$.
\end{proof}

\begin{definition}[Conditional independence, \citeauthor{dawid1979},
\citeyear{dawid1979}]\label{def:condind}
Let $X, Y, Z$ be random vectors. We say $X$ is conditionally independent of $Y$
given $Z$ and write $Y \indep X \given Z$ if and only if, for all bounded
measurable $\phi$,
\[
\Ex[\phi(Y) \given X, Z] = \Ex[\phi(Y) \given Z].
\]
\end{definition}

\begin{lemma}[Properties of conditional independence]\label{lem:ci}
Let $(A, B, C) \sim P$. Then, the following statements hold
\citep{dawid1980conditional}:
\begin{itemize}
    \item[] Decomposition: \[A \indep (B, C) \implies (A \indep B) \land (A
    \indep C).\]
    \item[] Weak union: \[A \indep (B, C) \implies (A \indep B \given C) \land
    (A \indep C \given B).\]
    \item[] Contraction: \[(A \indep B \given C) \land (A \indep C) \implies 
    A \indep (B, C).\]
\end{itemize}
\end{lemma}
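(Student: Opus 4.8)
The plan is to reduce all three properties to the functional characterization of conditional independence in Definition~\ref{def:condind} and to manipulate the resulting conditional expectations using the tower property of iterated conditioning. Throughout, $\phi$ denotes an arbitrary bounded measurable function and all identities are understood to hold almost surely; I would fix such a $\phi$ at the start of each implication and verify the defining equality. Recall that, in the notation of Definition~\ref{def:condind}, the statement $A \indep (B,C)$ means $\Ex[\phi(A) \given B, C] = \Ex[\phi(A)]$, the unconditional statement $A \indep B$ means $\Ex[\phi(A) \given B] = \Ex[\phi(A)]$, and $A \indep B \given C$ means $\Ex[\phi(A) \given B, C] = \Ex[\phi(A) \given C]$.

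For \emph{Decomposition}, the hypothesis $A \indep (B,C)$ reads $\Ex[\phi(A) \given B, C] = \Ex[\phi(A)]$. Applying the conditional expectation $\Ex[\,\cdot \given B]$ to both sides and invoking the tower property on the left collapses $\Ex[\Ex[\phi(A) \given B, C] \given B]$ to $\Ex[\phi(A) \given B]$, while the right-hand side is unchanged; this yields $\Ex[\phi(A) \given B] = \Ex[\phi(A)]$, i.e.\ $A \indep B$. Conditioning on $C$ instead of $B$ gives $A \indep C$ by the identical argument.

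For \emph{Weak union}, I would again start from $\Ex[\phi(A) \given B, C] = \Ex[\phi(A)]$. Having already established $A \indep C$ via Decomposition, I have $\Ex[\phi(A) \given C] = \Ex[\phi(A)]$, so equating the two right-hand sides gives $\Ex[\phi(A) \given B, C] = \Ex[\phi(A) \given C]$, which is exactly $A \indep B \given C$. The symmetric claim $A \indep C \given B$ follows the same way, now using $A \indep B$ in place of $A \indep C$. Finally, \emph{Contraction} is a direct chaining: the second hypothesis $A \indep C$ gives $\Ex[\phi(A) \given C] = \Ex[\phi(A)]$, and the first hypothesis $A \indep B \given C$ gives $\Ex[\phi(A) \given B, C] = \Ex[\phi(A) \given C]$, so composing the two equalities yields $\Ex[\phi(A) \given B, C] = \Ex[\phi(A)]$, which is $A \indep (B,C)$.

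I expect no substantive obstacle here, since this is a standard consequence of the graphoid axioms; the only care required is measure-theoretic bookkeeping. Specifically, I would note that testing against all bounded measurable functions of the first argument suffices to characterize the relevant conditional laws, that applying a coarser conditional expectation and invoking the tower property is legitimate, and that each equality holds only up to a null set. No counterexamples or additional structure are needed, and Contraction in particular uses nothing beyond transitivity of the two almost-sure identities.
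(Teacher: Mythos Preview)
Your proof is correct. The paper does not actually supply a proof of this lemma; it simply states the three properties with a citation to \citet{dawid1980conditional} and uses them as black boxes elsewhere. Your direct verification via Definition~\ref{def:condind} and the tower property is the standard argument and goes through exactly as you describe, so there is nothing to compare against on the paper's side.
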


\begin{lemma}\label{lem:cis1}
Under Setting~\ref{setting}, it holds that $(N_Y, H) \indep (D, Z) \given H$.
\end{lemma}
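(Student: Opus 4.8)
The plan is to reduce the claim to a conditional independence between $N_Y$ and the observable pair $(D,Z)$ given $N_H$, and then to exploit that $(D,Z)$ is a deterministic (measurable) function of the noise variables $(N_Z,N_D,N_H)$.

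First, I would argue that, since $H=N_H$ is measurable with respect to the conditioning sigma-algebra, the asserted statement $(N_Y,H)\indep (D,Z)\given H$ is equivalent to $N_Y\indep (D,Z)\given H$. Indeed, by the definition of conditional independence (Definition~\ref{def:condind}), conditioning on $H$ fixes its value, so including $H$ among the variables on the left-hand side cannot change the conditional independence: any bounded measurable $\phi(N_Y,H)$ reduces, given $H$, to a bounded measurable function of $N_Y$ alone (with $H$ entering as a parameter), and as $\phi$ ranges over all bounded measurable functions these exhaust all such functions of $N_Y$.

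Next, I would substitute the structural equations of Setting~\ref{setting}. Since $Z=N_Z$ and $D=f(Z,H,N_D)=f(N_Z,N_H,N_D)$, the pair $(D,Z)=(f(N_Z,N_H,N_D),N_Z)$ is a measurable function of $(N_Z,N_D,N_H)$; conditionally on $H=N_H$ it is therefore a measurable function of $(N_Z,N_D)$, with $N_H$ playing the role of a parameter. This step is purely a substitution and carries no probabilistic content.

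Finally, I would invoke the assumption $(N_Z,N_D)\indep N_Y\given N_H$ from Setting~\ref{setting}, together with the fact that conditional independence is preserved when one side is replaced by a measurable function of itself and the conditioning variable \citep[Lemma~4.1]{dawid1979}. Applying this with $A=(N_Z,N_D)$, $B=N_Y$, $C=N_H$, and $g(A,C)=(f(N_Z,N_H,N_D),N_Z)$ yields $(D,Z)\indep N_Y\given N_H$, which by the first step is exactly $(N_Y,H)\indep(D,Z)\given H$. The main thing to get right is the first reduction step, making precise the claim that a conditioning-measurable variable may be freely adjoined to the left-hand side; once that is settled, the functional-preservation argument is immediate from the cited lemma, and the assumption $N_Z\indep N_H$ is in fact not needed here.
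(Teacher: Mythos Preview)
Your proposal is correct and follows essentially the same approach as the paper: both proofs start from the assumption $(N_Z,N_D)\indep N_Y\given N_H$, invoke \citet[Lemma~4.1]{dawid1979} to handle the conditioning-measurable variable $H=N_H$, and then use that $(D,Z)$ is a measurable function of $(N_Z,N_D,N_H)$. The only cosmetic difference is the order of steps---the paper first adjoins $N_H$ to both sides via Dawid's lemma and then applies the measurable-function closure, whereas you first strip $H$ from the left side and then apply the closure---but the logical content is identical.
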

\begin{proof}
By Setting~\ref{setting}, we have
\[
(N_Z, N_D) \indep N_Y \given N_H.
\]
By \citet[Lemma~4.1]{dawid1979}, it holds
\[
(N_Z, N_D) \indep N_Y \given N_H \iff (N_Z, N_D, N_H) \indep (N_Y, N_H) \given N_H.
\]
Now, since $f$ from Setting~\ref{setting} is measurable, $H \coloneqq N_H$, $Z
\coloneqq N_Z$ and $D \coloneqq f(N_Z, H, N_D)$, we have
\begin{align}
(N_Z, N_D, N_H) \indep (N_Y, N_H) \given N_H 
&\implies
(N_Z, f(N_Z, N_D, N_H)) \indep (N_Y, N_H) \given N_H \\
&\implies
(Z, D) \indep (N_Y, H) \given H,
\end{align}
which concludes the proof.
\end{proof}

\begin{lemma}[Balance equation, \citeauthor{rosenbaum1983central},
\citeyear{rosenbaum1983central}]\label{lem:balancing}
In Setting~\ref{setting}, it holds that
\[
D \indep H \given \pi(H),
\]
where $\pi(H) \coloneqq \Ex[D \given H]$.
\end{lemma}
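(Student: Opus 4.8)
The plan is to verify the conditional independence directly from its definition in terms of conditional expectations (Definition~\ref{def:condind}), exploiting that $D$ is binary. Because $D$ takes values in $\{0,1\}$, any bounded measurable $\phi:\{0,1\}\to\RR$ admits the affine representation $\phi(D) = \phi(0) + (\phi(1)-\phi(0))D$, so the conditional law of $D$ given any sub-sigma-algebra is completely determined by the conditional mean $\Ex[D\given\bcd]$. Hence it suffices to show $\Ex[\phi(D)\given H,\pi(H)] = \Ex[\phi(D)\given\pi(H)]$ for every such $\phi$, which reduces the whole statement to comparing $\Ex[D\given H]$ with $\Ex[D\given\pi(H)]$.

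First I would observe that, since $\pi(H)$ is by definition a measurable function of $H$, the sigma-algebras generated by $H$ and by $(H,\pi(H))$ coincide, so $\Ex[\phi(D)\given H,\pi(H)] = \Ex[\phi(D)\given H]$. Using the affine decomposition together with $\pi(H)=\Ex[D\given H]$, this equals $g(\pi(H))$, where $g(p)\coloneqq \phi(0)(1-p)+\phi(1)p$. The key point is that this is a measurable function of $\pi(H)$ alone.

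Next I would compute the other side by the tower property. Since $\pi(H)$ is $H$-measurable,
\[
\Ex[\phi(D)\given\pi(H)] = \Ex\big[\Ex[\phi(D)\given H]\,\big|\,\pi(H)\big] = \Ex\big[g(\pi(H))\,\big|\,\pi(H)\big] = g(\pi(H)),
\]
the last equality holding because $g(\pi(H))$ is already $\pi(H)$-measurable. Comparing the two computations yields $\Ex[\phi(D)\given H,\pi(H)] = \Ex[\phi(D)\given\pi(H)]$ for all bounded measurable $\phi$, which by Definition~\ref{def:condind} is exactly $D\indep H\given\pi(H)$.

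I do not expect a serious obstacle, as this is the classical balancing property of the propensity score; the content of the proof is essentially bookkeeping about measurability. The only subtlety worth flagging is that the argument genuinely relies on $D$ being binary: binariness is what collapses the entire conditional distribution of $D$ to the single scalar $\pi(H)$, so that conditioning on $\pi(H)$ retains all the information about $D$ that conditioning on $H$ would provide. For a non-binary treatment one would have to define the propensity score as the full conditional distribution of $D$ given $H$, and the reduction to a single conditional expectation would no longer be available.
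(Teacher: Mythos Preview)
Your proof is correct and takes essentially the same approach as the paper: both arguments exploit that $D$ is binary so that the conditional law reduces to the conditional mean, compute $\Ex[D\given H,\pi(H)]=\pi(H)$ by $\pi(H)$-measurability, and obtain $\Ex[D\given\pi(H)]=\pi(H)$ via the tower property. The paper states this directly for $\Prob(D=1\given\cdot)$ rather than passing through an arbitrary bounded $\phi$, but the content is identical.
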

\begin{proof}
By
\[
\Prob(D = 1 \given H, \pi(H)) = \Prob(D = 1 \given H) = \pi(H),
\]
and
\[
\Prob(D = 1 \given \pi(H)) =
\Ex[\Ex[\1(D = 1) \given H, \pi(H)] \given \pi(H)]
= \Ex[\pi(H) \given \pi(H)] = \pi(H),
\]
we arrive at the desired conclusion.
\end{proof}

\begin{lemma}\label{lem:halfrank}
Let $X_1, \dots, X_n$ be independent and identically standard uniformly
distributed random variables and let $X_{(1)}, \dots, X_{(n)}$ denote their
order statistic. Then, it holds that
\[
\int_0^1 \left(\frac{1}{n} \sum_{i=1}^n \1(X_i \leq u) - u \right)^2 \dd u =
\frac{1}{12n^2} + \frac{1}{n} \sum_{i=1}^n \left(\frac{i - 0.5}{n} -
X_{(i)}\right)^2.
\]
\end{lemma}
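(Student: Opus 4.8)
The plan is to exploit that the empirical distribution function $\hat F_n(u) \coloneqq \tfrac1n\sum_{i=1}^n \1(X_i \le u)$ is a step function that stays constant between consecutive order statistics, so the integral decomposes into finitely many elementary integrals of a quadratic. Since the $X_i$ are uniform on $[0,1]$, almost surely they are distinct and lie in $(0,1)$; I would set $X_{(0)} \coloneqq 0$ and $X_{(n+1)} \coloneqq 1$. On each interval $(X_{(i)}, X_{(i+1)})$ one has $\hat F_n(u) = i/n$, hence
\[
\int_0^1 \left(\hat F_n(u) - u\right)^2 \dd u
= \sum_{i=0}^n \int_{X_{(i)}}^{X_{(i+1)}} \left(\tfrac{i}{n} - u\right)^2 \dd u.
\]

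First I would evaluate each summand in closed form via the antiderivative $-\tfrac13(c-u)^3$, giving $\tfrac13\big[(c_i - X_{(i)})^3 - (c_i - X_{(i+1)})^3\big]$ with $c_i \coloneqq i/n$. The key manoeuvre is a reindexing: substituting $j = i+1$ in the second cubic term and noting that the boundary contributions vanish (because $c_0 - X_{(0)} = 0$ and $c_n - X_{(n+1)} = 0$), the sum collapses to
\[
\sum_{i=1}^n \tfrac13\left[(c_i - X_{(i)})^3 - (c_{i-1} - X_{(i)})^3\right].
\]
This bookkeeping step is the one I expect to be the main obstacle: one must track the shift $c_{i-1} = c_i - 1/n$ carefully so the two cubic sums align over the common range $i = 1, \dots, n$ after the empty boundary terms are discarded.

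Then, for each $i$, I would write $p_i \coloneqq c_i - X_{(i)}$ so that $c_{i-1} - X_{(i)} = p_i - 1/n$, and apply the factorization $p^3 - q^3 = (p-q)(p^2+pq+q^2)$ with $p - q = 1/n$ to obtain $p_i^3 - (p_i - 1/n)^3 = \tfrac{3p_i^2}{n} - \tfrac{3p_i}{n^2} + \tfrac1{n^3}$. Summing the constant contribution gives $\sum_{i=1}^n \tfrac1{3n^3} = \tfrac1{3n^2}$, leaving the term $\sum_{i=1}^n\big(\tfrac{p_i^2}{n} - \tfrac{p_i}{n^2}\big)$.

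Finally I would complete the square, using $\tfrac{p_i^2}{n} - \tfrac{p_i}{n^2} = \tfrac1n\big(p_i - \tfrac1{2n}\big)^2 - \tfrac1{4n^3}$ together with the observation that $p_i - \tfrac1{2n} = \tfrac{i-0.5}{n} - X_{(i)}$. Collecting the two constants as $\tfrac1{3n^2} - \tfrac1{4n^2} = \tfrac1{12n^2}$ then yields exactly the claimed identity. Note that the whole computation is deterministic given the ordered values; uniformity is invoked only to guarantee $X_i \in [0,1]$ (so that the endpoints $X_{(0)}=0$, $X_{(n+1)}=1$ are consistent with the range of integration) and almost-sure distinctness, neither of which affects the value of the integral.
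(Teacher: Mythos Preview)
Your proposal is correct and follows essentially the same route as the paper's proof: both decompose the integral over the intervals $[X_{(k)},X_{(k+1)}]$, integrate the quadratic via the antiderivative $-\tfrac13(c-u)^3$, reindex so the two cubic sums share the same range (using $c_0-X_{(0)}=c_n-X_{(n+1)}=0$), expand $(c_i-X_{(i)})^3-(c_{i-1}-X_{(i)})^3$, and finish by completing the square to produce $\tfrac{1}{12n^2}+\tfrac1n\sum_i(\tfrac{i-0.5}{n}-X_{(i)})^2$. Your use of the factorization $p^3-q^3=(p-q)(p^2+pq+q^2)$ is a slightly tidier way to carry out the expansion than the paper's direct computation, but the argument is otherwise identical.
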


\begin{proof}
    The result follows from the following computation
    \begin{align*}
       \int_0^1 \left(\frac{1}{n} \sum_{i=1}^n \1(X_i \leq u) - u \right)^2 \dd
       u
       &=\sum_{k=0}^n\int_{X_{(k)}}^{X_{(k+1)}}\left(\frac{1}{n} \sum_{i=1}^n
       \1(X_i \leq u) - u \right)^2 \dd u\\
       &=\sum_{k=0}^n\int_{X_{(k)}}^{X_{(k+1)}}\left(\tfrac{k}{n} - u \right)^2
       \dd u\\
       &=\sum_{k=0}^n\left[-\tfrac{1}{3}(\tfrac{k}{n}-X_{(k+1)})^3+\tfrac{1}{3}(\tfrac{k}{n}-X_{(k)})^3\right]\\
       &=\sum_{k=1}^n\tfrac{1}{3}\left[(\tfrac{k}{n}-X_{(k)})^3-(\tfrac{k-1}{n}-X_{(k)})^3\right]\\
       &=\sum_{k=1}^n\tfrac{1}{3}\left[(\tfrac{k}{n})^3-(\tfrac{k-1}{n})^3\right]
       + \left[\tfrac{1}{n}X_{(k)}^2-\tfrac{2k-1}{n}X_{(k)}\right]\\
       &=\sum_{k=1}^n\tfrac{1}{12n^3} +
       \tfrac{1}{n}\left(\tfrac{2k-1}{2n}-X_{(k)}\right)^2\\
       &=\tfrac{1}{12n^2} +
       \frac{1}{n}\sum_{k=1}^n\left(\tfrac{k-0.5}{n}-X_{(k)}\right)^2.
    \end{align*}
    This completes the proof of Lemma~\ref{lem:halfrank}.
\end{proof}

\begin{lemma}\label{lem:cip}
Let $X_1, \dots, X_n$ be independent and identically copies of $X$ taking values
in $\RR$. Then, it holds that
\[
\int_0^1 \left(\frac{1}{n} \sum_{i=1}^n \1(X_i \leq u) - u \right)^2 \dd u \;
\xrightarrow{\text{a.s.}} \;
\int_0^1 (\Prob(U \leq u) - u)^2 \dd u \quad\text{as $n\rightarrow\infty$}.
\]
\end{lemma}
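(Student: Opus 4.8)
The plan is to reduce the claim to the Glivenko--Cantelli theorem through an elementary factoring argument that converts uniform convergence of the empirical CDF directly into convergence of the integrals, bypassing any need for dominated convergence. Write $\hat{F}_n(u) \coloneqq \frac{1}{n}\sum_{i=1}^n \1(X_i \le u)$ for the empirical CDF and $F(u) \coloneqq \Prob(X \le u)$ for the true CDF of $X$ (the integrand on the right-hand side is understood with this $F$, the ``$U$'' being a typo for $X$). First I would invoke the Glivenko--Cantelli theorem, applicable since $X_1,\dots,X_n$ are i.i.d.\ copies of $X$, to obtain that $D_n \coloneqq \sup_{u \in \RR}\lvert \hat{F}_n(u) - F(u)\rvert \to 0$ almost surely.

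Next I would fix the event of probability one on which $D_n \to 0$ and bound the difference of the two integrands pointwise in $u \in [0,1]$. Using the algebraic identity
\[
(\hat{F}_n(u) - u)^2 - (F(u) - u)^2 = \bigl(\hat{F}_n(u) - F(u)\bigr)\bigl(\hat{F}_n(u) + F(u) - 2u\bigr),
\]
the first factor is bounded in absolute value by $D_n$, while the second factor satisfies $\lvert \hat{F}_n(u) + F(u) - 2u \rvert \le 4$ because $\hat{F}_n(u), F(u) \in [0,1]$ and $u \in [0,1]$. Hence for every $u \in [0,1]$,
\[
\bigl\lvert (\hat{F}_n(u) - u)^2 - (F(u) - u)^2 \bigr\rvert \le 4 D_n.
\]

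Integrating this uniform bound over $(0,1)$ then yields
\[
\left\lvert \int_0^1 (\hat{F}_n(u) - u)^2 \dd u - \int_0^1 (F(u) - u)^2 \dd u \right\rvert \le 4 D_n,
\]
and since $D_n \to 0$ almost surely the left-hand side tends to zero almost surely, which is the desired conclusion. There is essentially no serious obstacle in this argument: the only point demanding care is that the convergence of $\hat{F}_n$ to $F$ be \emph{uniform} in $u$, so that the bound $4 D_n$ is free of $u$ and therefore survives integration. This uniformity is exactly what Glivenko--Cantelli provides, in contrast to the mere pointwise convergence that the strong law of large numbers would give at each fixed $u$.
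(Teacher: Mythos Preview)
Your proof is correct and follows essentially the same route as the paper: both reduce the claim to the Glivenko--Cantelli theorem by bounding the difference of the integrands in terms of $\sup_u|\hat F_n(u)-F(u)|$. The only cosmetic difference is that the paper expands $(\hat F_n(u)-u)^2=\big((\hat F_n(u)-F(u))+(F(u)-u)\big)^2$ into three terms and bounds each, whereas you use the cleaner factorization $a^2-b^2=(a-b)(a+b)$; your version is arguably tidier and yields the two-sided bound $|\cdot|\le 4D_n$ directly.
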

\begin{proof}
For all $u \in (0, 1)$ and $n\in\mathbb{N}$, define $\hat F_n(u) \coloneqq
\frac{1}{n} \sum_{i=1}^n \1(X_i \leq u)$ and $F(u) \coloneqq \Prob(X \leq u)$.
We then have,
\begin{align}
&\quad\,\,
\int_0^1 \left(\hat F_n(u) - u \right)^2 \dd u
\\
&=
\int_0^1 \left(\hat F_n(u) - F(u) + F(u) - u \right)^2 \dd u
\\
&=
\int_0^1 (\hat F_n(u) - F(u))^2 + 
2 (\hat F_n(u) - F(u))(F(u) - u) +
(F(u) - u)^2
\dd u
\\
&=
\int_0^1 
(\hat F_n(u) - F(u))^2 \dd u + 
2 \int_0^1 
(\hat F_n(u) - F(u))(F(u) - u) \dd u
\\
&\quad\quad\,\, + \ 
\int_0^1 
(F(u) - u)^2 \dd u
\\
&\leq
\left(\sup_{u \in (0, 1)} \onorm{\hat{F}_n(u) - F(u)}\right)^2
+ 
\sup_{u \in (0, 1)} \onorm{\hat{F}_n(u) - F(u)}
\left(\int_0^1 (F(u) - u)^2 \dd u\right)^{1/2}
\\
&\quad\quad\,\,
+\ \int_0^1 (F(u) - u)^2 \dd u.
\\
&\leq
2 \sup_{u \in (0, 1)} \onorm{\hat{F}_n(u) - F(u)}
+ \int_0^1 (F(u) - u)^2 \dd u,
\label{eq:cvm:final}
\end{align}
where the last inequality uses that for all $u\in(0,1)$, $F(u)\in[0,1]$.
By the Glivenko--Cantelli theorem, it holds that, 
\(
\sup_{u\in(0,1)} \onorm{\hat F_n(u) - F(u)} \xrightarrow{\text{a.s.}} 0,
\)
which concludes the proof of Lemma~\ref{lem:cip}.
\end{proof}

\begin{lemma}\label{lem:cvm:lip}
Given the setting outlined in the proof of Proposition~\ref{prop:consistency},
there exists $K > 0$, such that, for all $\parm, \parm' \in\Theta$ and all
$n\in\mathbb{N}$,
\[
\onorm{U_n(\parm) - U_n(\parm')} \leq K \norm{\parm - \parm'}
\mbox{ and }
\onorm{U(\parm) - U(\parm')} \leq K \norm{\parm - \parm'}.
\]
\end{lemma}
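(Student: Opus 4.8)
The plan is to exploit that both $U_n$ and $U$ are of the form $\int_0^1 (G_\parm(u) - u)^2 \dd u$, where $G_\parm$ is the empirical (respectively population) CDF of the residual $R(\parm) \coloneqq \Phi(a_D(Y)^\top \parm)$. I would first factor the difference of the two integrands as a difference of squares,
\[
(G_\parm(u) - u)^2 - (G_{\parm'}(u) - u)^2 = \big(G_\parm(u) - G_{\parm'}(u)\big)\big(G_\parm(u) + G_{\parm'}(u) - 2u\big),
\]
and use that $G_\parm(u), G_{\parm'}(u), u \in [0,1]$, so the second factor is bounded in absolute value by $2$. This reduces the task to bounding the $L^1$-distance $\int_0^1 \onorm{G_\parm(u) - G_{\parm'}(u)} \dd u$ by a constant multiple of $\norm{\parm - \parm'}$, uniformly in $n$.

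For the empirical case I would write $G_\parm(u) - G_{\parm'}(u) = \frac{1}{n}\sum_{i=1}^n \big(\1(R_i(\parm) \leq u) - \1(R_i(\parm') \leq u)\big)$ with $R_i(\parm) \coloneqq \Phi(a_{D_i}(Y_i)^\top \parm)$; after a triangle inequality and Tonelli it then suffices to control, for each fixed $i$, the quantity $\int_0^1 \onorm{\1(R_i(\parm) \leq u) - \1(R_i(\parm') \leq u)} \dd u$. The key elementary observation is that the two step functions $u \mapsto \1(r \leq u)$ disagree precisely on the interval between $R_i(\parm)$ and $R_i(\parm')$, so this integral equals exactly $\onorm{R_i(\parm) - R_i(\parm')}$. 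Combining the $\tfrac{1}{\sqrt{2\pi}}$-Lipschitz continuity of $\Phi$ with the Cauchy--Schwarz bound $\onorm{a_{D_i}(Y_i)^\top(\parm - \parm')} \leq \norm{a_{D_i}(Y_i)} \cdot \norm{\parm - \parm'}$ and the uniform bound $\sup_{d,y}\norm{a_d(y)} < A$ yields $\onorm{R_i(\parm) - R_i(\parm')} \leq \tfrac{A}{\sqrt{2\pi}}\norm{\parm - \parm'}$. Averaging over $i$ and tracking the factor $2$ gives $\onorm{U_n(\parm) - U_n(\parm')} \leq \tfrac{2A}{\sqrt{2\pi}}\norm{\parm - \parm'}$, with a constant independent of $n$. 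The population case is identical upon replacing the empirical average by an expectation: $G_\parm(u) - G_{\parm'}(u) = \Ex[\1(R(\parm) \leq u) - \1(R(\parm') \leq u)]$, and Tonelli moves the $\dd u$-integral inside the expectation, producing $\Ex[\onorm{R(\parm) - R(\parm')}] \leq \tfrac{A}{\sqrt{2\pi}}\norm{\parm - \parm'}$ and hence the same $K = \tfrac{2A}{\sqrt{2\pi}}$.

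The only subtlety worth flagging is that the residual CDFs $G_\parm$ are discontinuous step functions of $u$, so one cannot simply differentiate $U_n$ in $\parm$ and bound a gradient. This is exactly why passing through the indicator-difference identity $\int_0^1 \onorm{\1(R_i(\parm)\leq u) - \1(R_i(\parm')\leq u)}\dd u = \onorm{R_i(\parm) - R_i(\parm')}$ is the crucial step, since it converts the $L^1$-distance between CDFs into an honest Euclidean distance between residuals without ever invoking order statistics or any rearrangement of the sample. Everything else is a routine chain of Lipschitz and Cauchy--Schwarz estimates, and the uniformity in $n$ comes for free because the $\tfrac{1}{n}$ averaging cancels the summation over the $n$ terms.
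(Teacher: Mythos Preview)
Your proposal is correct and follows essentially the same route as the paper: factor the difference of squares to extract a factor bounded by $2$, reduce to the $L^1$-distance between empirical (resp.\ population) CDFs, use the identity $\int_0^1 \onorm{\1(r\leq u)-\1(r'\leq u)}\dd u = \onorm{r-r'}$, and finish with the $\tfrac{1}{\sqrt{2\pi}}$-Lipschitz bound on $\Phi$, Cauchy--Schwarz, and the uniform bound $A$ on $\norm{a_d(y)}$, arriving at the same constant $K=\tfrac{2A}{\sqrt{2\pi}}$. The only cosmetic difference is that the paper phrases the first step as the inequality $\onorm{x^2-y^2}\leq 2\onorm{x-y}$ and invokes Jensen before Tonelli in the population case, whereas you write out the factorization explicitly.
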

\begin{proof}
For all $\parm\in\Theta$ and all $n\in\mathbb{N}$, it holds that
\begin{align}
\onorm{U_n(\parm) - U_n(\parm')} &=
\onorm{
\int_0^1 \left(\frac{1}{n} \sum_{i=1}^n \1(F_{D_i}(Y_i;\parm) \leq u) - u
\right)^2 
-
\left(\frac{1}{n} \sum_{i=1}^n \1(F_{D_i}(Y_i;\parm') \leq u) - u
\right)^2 \dd u
}
\\
&\leq
2 \int_0^1 
\onorm{
\frac{1}{n} \sum_{i=1}^n \1(F_{D_i}(Y_i;\parm) \leq u)
-
\1(F_{D_i}(Y_i;\parm') \leq u)
}
\dd u
\\
&\leq
2 \int_0^1 
\frac{1}{n} \sum_{i=1}^n
\onorm{
\1(F_{D_i}(Y_i;\parm) \leq u)
-
\1(F_{D_i}(Y_i;\parm') \leq u)
}
\dd u
\\
&\leq
\frac{2}{n} \sum_{i=1}^n
\int_0^1 
\onorm{
\1(F_{D_i}(Y_i;\parm) \leq u)
-
\1(F_{D_i}(Y_i;\parm') \leq u)
}
\dd u
\\ 
&\leq \label{eq:onward}
\frac{2}{n} \sum_{i=1}^n
\onorm{
F_{D_i}(Y_i;\parm)
-
F_{D_i}(Y_i;\parm')
}
\\
&=
\frac{2}{n} \sum_{i=1}^n
\onorm{
\Phi(a_{D_i}(Y_i)^\top\parm)
-
\Phi(a_{D_i}(Y_i)^\top\parm')
}
\\
&\leq
\frac{2}{\sqrt{2\pi}n} \sum_{i=1}^n
\onorm{
a_{D_i}(Y_i)^\top(\parm - \parm')
}
\\
&\leq
\frac{2}{\sqrt{2\pi} n} \sum_{i=1}^n
\norm{a_{D_i}(Y_i)^\top} \cdot \norm{(\parm - \parm')}
\\
&\leq
\frac{2 A}{\sqrt{2\pi}} \norm{(\parm - \parm')},
\end{align}
where the first inequality follows because the integrand is positive and for all
$x,y\in[0,1]$ it holds that $|x^2-y^2|<2|x-y|$; the second inequality follows by
the triangle inequality; the fourth inequality follows by evaluating the
integral; the fifth inequality follows from $\tfrac{1}{\sqrt{2\pi}}$-Lipschitz
continuity of $\Phi$; the sixth inequality follows from the Cauchy--Schwarz
inequality; and, the seventh inequality follows by $a_D(Y)$ being almost surely
upper bounded by $A$. 

Simiarly, for the popoulation $U$, it holds that
\begin{align}
\onorm{U(\parm)-U(\parm')} &=
\onorm{
\int_0^1 (\Prob(F_{D}(Y; \parm) \leq u) - u)^2 \dd u -
\int_0^1 (\Prob(F_{D}(Y; \parm') \leq u) - u)^2 \dd u
}
\\
&\leq
2\int_0^1 
\onorm{
\Prob(F_{D}(Y; \parm) \leq u) -
\Prob(F_{D}(Y; \parm') \leq u) 
}\dd u
\\
&=
2\int_0^1 
\onorm{
\Ex[\1(F_{D}(Y; \parm) \leq u) -
\1(F_{D}(Y; \parm') \leq u)]
}\dd u
\\
&\leq
2\Ex\left[
\int_0^1 
\onorm{
\1(F_{D}(Y; \parm) \leq u) -
\1(F_{D}(Y; \parm') \leq u)
}
\dd u
\right]
\\
&\leq
\frac{2 A}{\sqrt{2\pi}} \norm{(\parm - \parm')},
\end{align}
where the first inequality follows since for all $x,y\in[0,1]$ it holds that
$|x^2-y^2|<2|x-y|$; the second inequality follows by Jensen's inequality and the
last inequality follows analogously to \eqref{eq:onward} and onward.

This completes the proof of Lemma~\ref{lem:cvm:lip}.
\end{proof}

\begin{lemma}\label{lem:cvm}
Let $X$ be an absolutely continuous real-valued random variable. Then, it holds
that
\[
\int_0^1 (\Prob(X \leq u) - u)^2 \dd u = 0 \iff X \sim \UD(0, 1).
\]
\end{lemma}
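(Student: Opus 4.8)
The plan is to phrase everything in terms of the CDF $F(u) \coloneqq \Prob(X \leq u)$, so that the quantity of interest becomes $\int_0^1 (F(u) - u)^2 \dd u$. The backward implication is then immediate: if $X \sim \UD(0,1)$ then $F(u) = u$ for every $u \in [0,1]$, the integrand vanishes identically on $[0,1]$, and the integral is zero. The content of the lemma is therefore in the forward implication.

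For the forward direction, I would first note that the integrand $u \mapsto (F(u) - u)^2$ is non-negative and measurable, so the integral vanishes if and only if $(F(u) - u)^2 = 0$ for Lebesgue-almost every $u \in (0,1)$, that is, $F(u) = u$ for almost every $u \in (0,1)$. The next step upgrades this almost-everywhere identity to an everywhere identity: since $X$ is absolutely continuous, $F$ is continuous, so the set $\{u \in (0,1) : F(u) = u\}$ is closed in $(0,1)$; as it has full Lebesgue measure, its complement is an open subset of $(0,1)$ of measure zero and is therefore empty. Hence $F(u) = u$ for all $u \in (0,1)$.

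It then remains to determine $F$ outside $(0,1)$ and conclude $X \sim \UD(0,1)$. Using continuity of $F$ together with $F(u) = u$ on $(0,1)$ gives $F(0) = \lim_{u \downarrow 0} F(u) = 0$ and $F(1) = \lim_{u \uparrow 1} F(u) = 1$; monotonicity of a CDF then forces $F \equiv 0$ on $(-\infty, 0]$ and $F \equiv 1$ on $[1, \infty)$, since $F$ is squeezed between its boundary limits and the constants $0$ and $1$. Thus $F$ coincides with the standard uniform CDF on all of $\RR$, and because the CDF determines the distribution we conclude $X \sim \UD(0,1)$.

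The argument is elementary and I do not anticipate a serious obstacle; the only point requiring genuine care is the passage from an almost-everywhere identity to an everywhere identity, which is precisely where the assumed absolute continuity (hence continuity of $F$) is used. Without continuity the statement would fail, as a CDF can agree with the identity off a set of measure zero while still carrying an atom or jump.
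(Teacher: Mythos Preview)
Your proof is correct and follows essentially the same approach as the paper's: both argue that the nonnegative integrand must vanish almost everywhere, then use continuity of the CDF (from absolute continuity of $X$) to upgrade this to an everywhere identity and conclude. Your version is more explicit in handling the extension of $F$ outside $(0,1)$ and in justifying the passage from almost-everywhere to everywhere, whereas the paper compresses this into a single terse sentence invoking right-continuity and absolute continuity.
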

\begin{proof}
The backward direction is immediate by $\Prob(X \leq u) = u$. For the
forward direction, 
\[
\int_0^1 (\Prob(X \leq u) - u)^2 \dd u = 0 \implies
\Prob(X \leq u) - u = 0 \mbox{ a.e.} \implies X \sim \UD(0, 1),
\]
where the first implication follows because the $L_2$ norm of a function is zero
if and only if the function zero almost everywhere and the second implication
follows by right-continuity of $\Prob(X \leq \bcd)$, absolute continuity of $X$
and the uniqueness of the CDF.
\end{proof}

\end{document}